\newcounter{savenumi}
\newtheorem{theoremfoo}{Theorem}%[section] %by chapter in report style
\newenvironment{theorem}{\pagebreak[1]\begin{theoremfoo}}{\end{theoremfoo}}
\newtheorem{propositionfoo}[theoremfoo]{Proposition}
\newtheorem{lemmafoo}[theoremfoo]{Lemma}
\newenvironment{lemma}{\pagebreak[1]\begin{lemmafoo}}{\end{lemmafoo}}
\newtheorem{conjecturefoo}[theoremfoo]{Conjecture}
\newtheorem{corollaryfoo}[theoremfoo]{Corollary}
\newenvironment{corollary}{\pagebreak[1]\begin{corollaryfoo}}{\end{corollaryfoo}}
\newtheorem{exercisefoo}{Exercise}
\newtheorem{openfoo}[theoremfoo]{Question}
\newtheorem{nttn}[theoremfoo]{Notation}
\newtheorem{dfntn}[theoremfoo]{Definition}
\newenvironment{definition}{\pagebreak[1]\begin{dfntn}\rm}{\end{dfntn}}
\newenvironment{proof}
    {\pagebreak[1]{\narrower\noindent {\bf Proof:\quad\nopagebreak}}}{\QED}
\newcommand{\floor}[1]{\left\lfloor#1\right\rfloor}
\def\nre.{$n$\/-r.e.}
\newcommand{\scrod}{\quad\nopagebreak}
\newtheorem{factfoo}[theoremfoo]{Fact}
\newtheorem{propertyfoo}[theoremfoo]{Property}
\def\@makechapterhead#1{ \vspace*{50pt} { \parindent 0pt \raggedright 
 \ifnum \c@secnumdepth >\m@ne \huge\bf \@chapapp{} \thechapter. \par 
 \vskip 20pt \fi \Huge \bf #1\par 
 \nobreak \vskip 40pt } }
\def\@sect#1#2#3#4#5#6[#7]#8{\ifnum #2>\c@secnumdepth
     \def\@svsec{}\else 
     \refstepcounter{#1}\edef\@svsec{\csname the#1\endcsname.\hskip 1em }\fi
     \@tempskipa #5\relax
      \ifdim \@tempskipa>\z@ 
        \begingroup #6\relax
          \@hangfrom{\hskip #3\relax\@svsec}{\interlinepenalty \@M #8\par}
        \endgroup
       \csname #1mark\endcsname{#7}\addcontentsline
         {toc}{#1}{\ifnum #2>\c@secnumdepth \else
                      \protect\numberline{\csname the#1\endcsname}\fi
                    #7}\else
        \def\@svsechd{#6\hskip #3\@svsec #8\csname #1mark\endcsname
                      {#7}\addcontentsline
                           {toc}{#1}{\ifnum #2>\c@secnumdepth \else
                             \protect\numberline{\csname the#1\endcsname}\fi
                       #7}}\fi
     \@xsect{#5}}
\def\@begintheorem#1#2{\it \trivlist \item[\hskip \labelsep{\bf #1\ #2.}]}
\def\@opargbegintheorem#1#2#3{\it \trivlist
      \item[\hskip \labelsep{\bf #1\ #2\ (#3).}]}
\newif\ifshortconferences
\newif\ifmediumconferences
\def\ending#1{{\count1=#1\relax
% mod out by 100 first
\count2=\count1
\divide\count2 by 100
\multiply\count2 by 100
\advance\count1 by -\count2
\ifnum\count1=11
th%
\else \ifnum\count1=12
th%
\else \ifnum\count1=13
th%
\else 
\count2=\count1
\divide\count1 by 10
\multiply\count1 by 10
\advance\count2 by -\count1
\ifnum\count2=1
st%
\else \ifnum\count2=2
nd%
\else \ifnum\count2=3
rd%
\else th%
\fi\fi\fi\fi\fi\fi
}}
\def\Proceedingsofthe{\ifshortconferences Proc.\else\ifmediumconferences Proc.\else Proceedings of the\fi\fi}
\newcounter{confnum}
\def\conf#1#2{%
\setcounter{confnum}{#2}%
\addtocounter{confnum}{-\csname #1zero\endcsname}%
\ifnum\value{confnum}=1%
\expandafter\ifx\csname #1One\endcsname\relax%
\Proceedingsofthe\ \arabic{confnum}\ending{\value{confnum}}\ \csname #1name\endcsname%
\else \csname #1One\endcsname\fi%
\else%
\Proceedingsofthe\
\arabic{confnum}\ending{\value{confnum}}\ \csname #1name\endcsname\fi}
\def\qsym{\vrule width0.7ex height0.9em depth0ex}
\newif\ifqed\qedtrue
\def\noqed{\global\qedfalse}
\def\qed{\ifqed{\penalty1000\unskip\nobreak\hfil\penalty50
\hskip2em\hbox{}\nobreak\hfil\qsym
\parfillskip=0pt \finalhyphendemerits=0\par\medskip}\fi\global\qedtrue}
\def\eqnqed{\noqed
	\def\@tempa{equation}
	\ifx\@tempa\@currenvir\def\@eqnnum{\qsym}%
	\addtocounter{equation}{-1}\else%
    \def\@@eqncr{\let\@tempa\relax
    \ifcase\@eqcnt \def\@tempa{& & &}\or \def\@tempa{& &}%
      \else \def\@tempa{&}\fi
     \@tempa {\def\@eqnnum{{\qsym}}\@eqnnum}%  [changed theequation to @eqnnum]
     \global\@eqnswtrue\global\@eqcnt\z@\cr}\fi}
\def\eqnlabel#1#2{\if@filesw {\let\thepage\relax%
   \def\protect{\noexpand\noexpand\noexpand}%
   \edef\@tempa{\write\@auxout{\string
      \newlabel{#2}{{{#1}}{\thepage}}}}%
   \expandafter}\@tempa%
   \if@nobreak \ifvmode\nobreak\fi\fi\fi%
	\def\@tempa{equation}
	\ifx\@tempa\@currenvir\def\theequation{{#1}}% 
	\addtocounter{equation}{-1}\else%
    \def\@@eqncr{\let\@tempa\relax
    \ifcase\@eqcnt \def\@tempa{& & &}\or \def\@tempa{& &}%
      \else \def\@tempa{&}\fi
     \@tempa {\def\@eqnnum{{#1}}\@eqnnum}% [changed theequation to @eqnnum]
     \global\@eqnswtrue\global\@eqcnt\z@\cr}\fi}
\def\QED{\qed}
\newcommand{\shift}{{\rm shift}}
\newcommand{\diff}{{\rm diff}}
\newcommand{\match}{{\rm Match}}
\newcommand{\algmnam}{Recover-Motif}
\newcommand{\algmname}{Algorithm~\algmnam}
\newcommand{\algma}{\algmname~}
\newcommand{\roughleft}{{\rm roughLeft}}
\newcommand{\roughright}{{\rm roughRight}}
\newcommand{\phaseone}{Boundary-Phase}
\newcommand{\phasetwo}{Extract-Phase}
\newcommand{\phasethree}{Voting-Phase}
\newcommand{\LB}{{\rm LB}}
\newcommand{\RB}{{\rm RB}}
\newcommand{\thresholdL}{{(\log n)^{3+\tau}\over 100}}
\newcommand{\algtype}{{\rm algorithm-type}}
\newcommand{\sublinear}{{\rm RANDOMIZED-SUBLINEAR}}
\newcommand{\randomized}{{\rm RANDOMIZED-SUBQUADRATIC}}
\newcommand{\deterministic}{{\rm DETERMINISTIC-SUPERQUADRATIC}}
\newcommand{\prob}{{\rm Pr}}
\begin{document}
\setcounter{page}{1}
\date{March 7, 2012}

\title{Sublinear Time Motif Discovery from Multiple Sequences\thanks{This research is supported in part by
National Science Foundation Early Career Award 0845376.}}

\author{ Bin Fu, Yunhui Fu,  and Yuan Xue\\
Department of Computer Science, University of Texas--Pan American\\
Edinburg, TX 78541, USA\\
Emails: binfu@cs.panam.edu,  fuyunhui@gmail.com, xuey@utpa.edu}
\maketitle

\begin{abstract}
A natural  probabilistic model for motif discovery has been used to
experimentally test the quality of motif discovery
programs. %~\cite{PevznerSze00,KeichPevzner02,KeichPevzner02b,WangDong05,ChinLeung05}.
In this model, there are $k$ background sequences, and each
character in a background sequence is a random character from an
alphabet $\Sigma$. A motif $G=g_1g_2\ldots g_m$ is a string of $m$
characters. Each background sequence is implanted a
probabilistically generated approximate copy of  $G$. For a
probabilistically generated approximate copy $b_1b_2\ldots b_m$ of
$G$, every character $b_i$ is probabilistically  generated such that
the probability for $b_i\neq g_i$ is at most $\alpha$. We develop
three algorithms that under the probabilistic model can find the
implanted motif with high probability via a tradeoff between
computational time and the probability of mutation. Each algorithm
has the preprocessing part and the voting part. We use a pair of
function $(t_1(n,k), t_2(n,k))$ to describe the computational
complexity of motif detection algorithm, where $n$ is the largest
length of input sequence, and $k$ is the number of sequences.
Function $t_1(n,k)$ is the time complexity for the part for
preprocessing  and $t_2(n,k)$ is the time complexity for recovering
one character for motif after preprocessing. The total time is
$O(t_1(n,k)+t_2(n,k)|G|)$.

(1) There exists a randomized algorithm such that there are positive
constants $c_0$ and $c_1$ that if the alphabet size is at least $4$,
the number of sequences is at least $c_1\log n$, the motif length is
at least $c_0\log n$, and each character in motif region has
probability at most ${1\over (\log n)^{2+\mu}}$ of mutation for some
fixed $\mu>0$, then motif can be recovered in $(O({n\over
\sqrt{h}}(\log n)^{7\over 2}+h^2\log^2 n), O(\log n))$ time, where
$n$ is the longest length of any input sequences, and
$h=\min(|G|,n^{2\over 5})$
 The algorithm total time is sublinear if the motif
length $|G|$ is in the range  $[(\log n)^{7+\mu}, {n\over (\log
n)^{1+\mu}}]$. This is the first sublinear time algorithm with
rigorous analysis in this model.

(2) There exists a randomized algorithm such that there are positive
constants $c_0, c_1$, and $\alpha$ that if the alphabet size is at
least $4$, the number of sequences is at least $c_1\log n$, the
motif length is at least $c_0\log n$, and each character in motif
region has probability at most $\alpha$ of mutation, then motif can
be recovered in $(O({n^2\over |G|}(\log n)^{O(1)}), O(\log n))$
time.

(3) There exists a deterministic algorithm such that there are
positive constants $c_0, c_1$, and $\alpha$ that if the alphabet
size is at least $4$, the number of sequences is at least $c_1\log
n$, the motif length is at least $c_0\log n$, and each character in
motif region has probability at most $\alpha$ of mutation, then
motif can be recovered in $(O({n^2}(\log n)^{O(1)}), O(\log n))$
time.

The methods developed in this paper have been used in the software
implementation. We observed some encouraging results that show
improved performance for motif detection compared with other
softwares.
\end{abstract}

%\end{document}
\newpage

\section{Introduction}
Motif discovery is an important problem in computational biology and
computer science. For instance, it has applications in coding
theory~\cite{FrancesLitman97,GasieniecJanssonLingas99}, locating
binding sites and conserved regions in unaligned
sequences~\cite{StormoHartzell91,LawrenceReilly90,HertzStormo94,Stormo90},
genetic drug target identification~\cite{LanctotLiMaWangZhang99},
designing genetic probes \cite{LanctotLiMaWangZhang99}, and
universal PCR primer
design~\cite{LucasBusch91,DopazoSobrino93,ProutskiHolme96,LanctotLiMaWangZhang99}.

This paper focuses on the application of motif discovery to find
conserved regions in a set of given DNA, RNA, or protein sequences.
Such conserved regions may represent common biological functions or
structures. Many performance measures have been proposed for motif
discovery. Let $C$ be a subset of $0$-$1$ sequences of length $n$.
The {\it covering radius} of $C$ is the smallest integer $r$ such
that each vector in $\{0, 1\}^n$ is at a distance at most $r$ from a
string in $C$. The decision problem associated with the covering
radius for a set of binary sequences is
NP-complete~\cite{FrancesLitman97}. The similar closest string and
substring problems were proved to be
NP-hard~\cite{FrancesLitman97,LanctotLiMaWangZhang99}.
%The problem was proved to be NP-hard for some
%formulations~\cite{FrancesLitman97,LanctotLiMaWangZhang99}.
Some approximation algorithms have been proposed.
  Li et
al.~\cite{LiMaWang99} gave an approximation scheme for the closest
string and substring problems. The related consensus patterns
problem is that given $n$ sequences $s_1,\cdots, s_n$, find a region
of length $L$ in each $s_i$, and a string $s$ of length $L$ so that
the total Hamming distance from $s$ to these regions is minimized.
Approximation algorithms for the consensus patterns problem were
reported in~\cite{LiMaWang99b}. Furthermore, a number  of heuristics
and programs have been
developed~\cite{PevznerSze00,KeichPevzner02,KeichPevzner02b,WangDong05,ChinLeung05}.

In many applications,  motifs are faint and may not be apparent when
two sequences alone are compared but may become clearer when  more
sequences are compared together~\cite{Gusfield97}.
%That is,
%``pairwise alignments whispers and multiple alignments shot out".
For this reason, it has been conjectured that comparing more
sequences together can help with identifying faint motifs.
%In this work, we give the first analytical proof for this conjecture.
This paper is a theoretical approach with a rigorous probabilistic
analysis.

We study a natural probabilistic model for motif discovery. In this
model, there are $k$ background sequences and each character in the
background sequence is a random character from an alphabet $\Sigma$.
A motif $G=g_1g_2\ldots g_m$ is a string of $m$ characters. Each
background sequence is implanted a probabilistically generated
approximate copy of  $G$.  For a probabilistically  generated
approximate copy $b_1b_2\ldots b_m$ of $G$, every character $b_i$ is
probabilistically  generated such that the probability for $b_i\neq
g_i$, which is called a {\it mutation}, is at most $\alpha$. This
model was first proposed in \cite{PevznerSze00} and has been widely
used in experimentally testing motif discovery programs
\cite{KeichPevzner02,KeichPevzner02b,WangDong05,ChinLeung05}. We
note that a mutation in our model converts a character $g_i$ in  the
motif into a different character $b_i$ without probability
restriction. This means that a character $g_i$ in the motif may not
become any character $b_i$ in $\Sigma-\{g_i\}$ with equal
probability.

We develop three algorithms that under the probabilistic model, one
can find the implanted motif with high probability via a tradeoff
between computational time and the probability of mutation. Each
algorithm has the preprocessing phase and the voting phase. We use a
pair of function $(t_1(n,k), t_2(n,k))$ to describe the
computational complexity of motif detection algorithm, where $n$ is
the largest length of input sequence, and $k$ is the number of
sequences. Function $t_1(n,k)$ is the time complexity for the part
for preprocessing,  and $t_2(n,k)$ is the time complexity for
recovering one character for motif after preprocessing. The total
time is $O(t_1(n,k)+t_2(n,k)|G|)$.

(1) There exists a randomized algorithm such that there are positive
constants $c_0$ and $c_1$ that if the alphabet size is at least $4$,
the number of sequences is at least $c_1\log n$, the motif length is
at least $c_0\log n$, and each character in motif region has
probability at most ${1\over (\log n)^{2+\mu}}$ of mutation for some
fixed $\mu>0$, then motif can be recovered in $(O({n\over
\sqrt{h}}(\log n)^{7\over 2}+h^2\log^2 n), O(\log n))$ time, where
$n$ is the longest length of any input sequences, and
$h=\min(|G|,n^{2\over 5})$
 The algorithm total time is sublinear if the motif
length $|G|$ is in the range  $[(\log n)^{7+\mu}, {n\over (\log
n)^{1+\mu}}]$. This is the first sublinear time algorithm with
rigorous analysis in this model.

(2) There exists a randomized algorithm such that there are positive
constants $c_0, c_1$, and $\alpha$ that if the alphabet size is at
least $4$, the number of sequences is at least $c_1\log n$, the
motif length is at least $c_0\log n$, and each character in motif
region has probability at most $\alpha$ of mutation, then motif can
be recovered in $(O({n^2\over |G|}(\log n)^{O(1)}), O(\log n))$
time.

(3) There exists a deterministic algorithm such that there are
positive constants $c_0, c_1$, and $\alpha$ that if the alphabet
size is at least $4$, the number of sequences is at least $c_1\log
n$, the motif length is at least $c_0\log n$, and each character in
motif region has probability at most $\alpha$ of mutation, then
motif can be recovered in $(O({n^2}(\log n)^{O(1)}), O(\log n))$
time.

The research in this model has been reported
in~\cite{FuKaoWang09b,FuKaoWang08b,LiuMaWang08}.
In~\cite{FuKaoWang09b}, Fu et al. developed an algorithm that needs
the alphabet size to be a constant that is much larger than 4.
In~\cite{FuKaoWang08b}, our algorithm cannot handle all possible
motif patterns. In~\cite{LiuMaWang08}, Liu et al. designed algorithm
that runs in $O(n^3)$ time and is lack of rigorous analysis about
its performance. The motif recovery in this natural and simple model
has not been fully understood and seems a complicated problem.

This paper presents two new randomized algorithms and one new
deterministic algorithm. They make advancements in the following
aspects: 1. The algorithms are much faster than those before. Our
algorithms can even run in sublinear time. 2. They can handle any
motif pattern. 3. The restriction for the alphabet size is as small
as four, giving them potential applications in practical problems
since gene sequences have an alphabet size $4$. 4. All algorithms
have rigorous proofs about their performances.

%We give a brief description of our main  algorithm \algmnam~in
%Section~\ref{sketch-sec}.
The entire \algmnam~is described in
Section~\ref{algorithm-sec}. We analyze \algma
%and state its performance in our main theorem
%Theorem~\ref{main-theorem}
in Section~\ref{analysis-sec}.
%We conclude the paper with an open
%problem in Section~\ref{open-problem-sec}.

\section{Notations and the Model of Sequence Generation}\label{notation-sec}
For a set $A$, $||A||$ denotes the number of elements in $A$.
$\Sigma$ is an alphabet with $||\Sigma||=t\ge 2$. For an integer
$n\ge 0$, $\Sigma^n$ is the set of sequences of length $n$ with
characters from $\Sigma$. For a sequence $S=a_1a_2\cdots a_n$,
$S[i]$ denotes the character $a_i$, and $S[i,j]$ denotes the
substring $a_i\cdots a_{j}$ for $1\le i\le j\le n$. $|S|$ denotes
the length of the sequence $S$.
%If $S'$ is a substring of $S$, we denote $S-S'$ to be the part of
%$S$ that does not belong to $S'$. In other words, if $S=a_1a_2\cdots
%a_n$ and $S'=S[i,j]$, then $S-S'$ represents the sequence
%$a_1a_2\cdots a_{i-1}a_{j+1}\cdots a_n$.
We use $\emptyset$ to represent the empty sequence, which has length
$0$.

Let $G=g_1g_2\cdots g_m$ be a fixed sequence of $m$ characters. $G$
is the motif to be discovered by our algorithm. A
$\Theta_{\alpha}(n,G)$-sequence has the form $S=a_1\cdots
a_{n_1}b_1\cdots b_m a_{n_1+1}\cdots a_{n_2}$, where $n_2+m\le n$,
each $a_i$ has probability ${1\over t}$ to be equal to $\pi$ for
each $\pi\in\Sigma$, and $b_i$ has probability at most $\alpha$ not
equal to $g_i$ for $1\le i\le m$, where $m=|G|$. $\aleph(S)$ denotes
the motif region $b_1\cdots b_m$ of $S$. A mutation converts a
character $g_i$ in  the motif into an arbitrary different character
$b_i$ without probability restriction. This allows a character $g_i$
in the motif to change into any character $b_i$ in $\Sigma-\{g_i\}$
with even different probability. The motif region of $S$ may start
at an arbitrary or worst-case position in $S$. Also, a mutation may
convert a character $g_i$ in the motif into an arbitrary or
worst-case different character $b_i$ only subject to the restriction
that $g_i$ will mutate with probability at most $\alpha$.

%Let $G=g_1g_2\cdots g_m$ be a fixed sequence of $m$ characters. $G$
%is the motif to be discovered by our algorithm.
A $\Psi(n,G)$-sequence has the form $S=a_1\cdots a_{n_1}b_1\cdots
b_m a_{n_1+1}\cdots a_{n_2}$, where $n_2+m\le n$, each $a_i$ has
probability ${1\over t}$ to be equal to $\pi$ for each
$\pi\in\Sigma$, and there are at most $O(1)$ characters $b_i$ not
equal to $g_i$ for $1\le i\le m$ and each mutation occurs at a
random position of $G$, where $m=|G|$.

%$\aleph(S)$ denotes the motif region $b_1\cdots b_m$ of $S$.!!!(this
%sentence is similar to another sentence in the former paragraph)
%fixed.

%Let $S_1, S_2,\cdots, S_k$ be a series of
%$\Theta(n,G,\alpha)$-sequences. Each $S_i$ has an area
%$\aleph(S_i)$ that is considered as a common area $G$ among all
%sequences. Each character in $\aleph(S_i)$ has probability at most
%$\alpha$ to mutate. All the other characters are random symbols in
%$\Sigma$.

For two sequences $S_1=a_1\cdots a_m$ and $S_2=b_1\cdots b_m$ of the
same length, let the {\it relative Hamming distance} $\diff(S_1,
S_2)=\frac{|\{i| a_i\not= b_i(i=1,\cdots, m)\}|}{m}$.

\begin{definition}\label{shift-def}
For two intervals $[i_1, j_1]$ and $[i_2, j_2]$, define
$\shift([i_1, j_1], [i_2, j_2])=\min(|i_1-i_2|, |j_1-j_2|)$.
%Assume that $S=a_1a_2\cdots a_n$ is a sequence. For its substrings
%$S[i_1, j_1]$ and $S[i_2,j_2]$, define $\shift_S(S[i_1,j_1],
%S[i_2,j_2])=\shift([i_1,j_1],[i_2,j_2])$.
%Note that the two
%substring s $S[i_1,j_1]$ and $S[i_2,j_2]$ contain the address
%information in $S$ for computing $\shift()$.
%\min(|i_1-i_2|, |j_1-j_2|)$.
\end{definition}

%\section{Brief Description of Algorithm}
\section{Brief Introduction to Algorithm}

Every detection algorithm in this paper has two phases. The first
phase is preprocessing so that the motif regions from multiple
sequences can be aligned in the same column region. The second phase
is to recover the motif via voting. We use a pair of function
$(t_1(n,k), t_2(n,k))$ to describe the computational complexity of
motif detection algorithm. Function $t_1(n,k)$ is the time
complexity for the preprocessing phase and $t_2(n,k)$ is the time
complexity for outputting one character for motif in the voting
phase.

The motif $G$ is a pattern unknown to algorithm \algmnam, and
algorithm \algmnam~will attempt to recover $G$ from a series of
$\Theta(n,G,\alpha)$-sequences generated by the probabilistic model.

%Recall that a sequence $S$ is generated in this model as follows
%(1). Generate a sequence $S'$ with $n-|G|$ characters, in which each
%character is a random character $\Sigma$. (2). Generate $G'$ such
%that with probability at most $\alpha$, $G'[i]\not=G[i]$. For
%$G'[i]\not=G[i]$, it represents a mutation. A mutation may create an
%arbitrary or worst-case $G'[i]$, with no probability restriction
%except that the mutation occurs with probability at most $\alpha$.
% (3). Insert $G'$, which serves as the motif region
%$\aleph(S)$ of $S$, into any arbitrary or worst-case position of $S'$.

\subsection{Algorithm}

The algorithm first detects a position that is close to the left
motif boundary in a sequence. It finds such a position via sampling
and collision between two sequences. After the rough left boundary a
sequence is found, it is used to find the rough boundaries of the
rest of the sequences. Similarly, we find those right boundaries of
motif among the input sequences. The exact left boundary of each
motif region will be detected in the next phase via voting. Each
character of the motif is recovered by voting among all the
characters at the same positions in the motif regions of input
sequences.

{\bf Descriptions of Algorithm}

Input: $Z=Z_1\cup Z_2$, where $Z_1=\{S_1',\cdots, S_{2k_1}'\}$ and
$Z_2=\{S_1'',\cdots, S_{k_2}''\}$ are two sets of input sequences.

Output:Planted motif in each sequence and consensus string

{\bf Start:}

Randomly select sample points from each sequence both in $Z_1$ and $ Z_2$

For each pair of sequences selected from $Z_1$ and $ Z_2$,

\qquad  Find the rough left and rough right boundaries.

\qquad  Improve rough boundaries.

If motif boundaries of each sequence in $Z_2$ are not empty,

\qquad  Use Voting algorithm to get the planted motifs.

{\bf End of Algorithm}

 \subsection{An Example}

We provide the following example for the brief idea of our
algorithm. Let the following input strings be defined as below. We
assume that the original motif is {\bf TTTTTAACGATTAGCS}. The motif
part is displayed with bold font, and the mutation characters in the
motif region are displayed with small font.

\subsubsection{Input Sequences}

It contains two groups $Z_1=\{S_1',S_2'\}$ and
$Z_2=\{S_1'',S_2'',S_3'', S_4'', S_5''\}$.

\begin{eqnarray*}
Z_1:&&\\
 S_1'&=&GTACCATGGA{\Large\bf TT{\small
 A}TTAACGATTAGCS}TAGAGGACCTA.\\
S_2'&=&AATCCTTA{\Large\bf {\small C}TTTTAACGATTAGCS}GTC.
\end{eqnarray*}

The above two strings are used to detect the initial motif region
and use them to deal with the motif in the second group below.

%\begin{eqnarray}

\begin{eqnarray*}
Z_2:&&\\
 S_1''&=&ATTCGATCCAG{\Large\bf
TTTTTAACG{\small{G}}TTAGCS}CAATTACTTAG.\\
S_2''&=&GCATTGCAT{\Large\bf
TTTTTAACGATTA{\small{C}}CS}GTACTTAGCTAGATC.\\
S_3''&=&TCAGGGCATCGAGAC{\Large\bf
TTTTTA{\small{G}}CGATTAGCS}CTAGAATCAGACCT.\\
S_4''&=&GTACCTGGCATTGAACG{\Large\bf
TTTTTAACGATTAGC{\small{A}}}TGCAGATGGACCTTTA.\\
S_5''&=&AATGGATCAGA{\Large\bf TTTTTAACGATT{\small{C}}GCS}CTAGATTCAG.
\end{eqnarray*}

\subsubsection{Select Sample Points}

Some sample points of two sequences in $Z_1$ are selected and
marked.
\begin{eqnarray*}
 S_1'&=&GT\dot{A}CC\dot{A}TG\dot{G}A{\Large\bf T\dot{T}{\small A}{T}TA\dot{A}CGA\dot{T}T\dot{A}GC\dot{S}}TA\dot{G}AG\dot{G}ACC\dot{T}A.\\
S_2'&=&\dot{A}AT\dot{C}CTT\dot{A}{\Large\bf {\small
C}\dot{T}TTT\dot{A}AC\dot{G}A\dot{T}TA\dot{G}CS}\dot{G}TC.
\end{eqnarray*}

\subsubsection{Collision Detection}

In this step, the left and right rough boundaries of two sequences will be marked.
The following show the left collision, which happens nearby the left
motif boundary and are marked by two overline $\overline{TATT}$ and
$\overline{TTTT}$ subsequences.
\begin{eqnarray*}
 S_1'&=&GT\dot{A}CC\dot{A}TG\dot{G}A{\Large\bf T\overline{\dot{T}ATT} A\dot{A}CGA\dot{T}T\dot{A}GC\dot{S}}TA\dot{G}AG\dot{G}ACC\dot{T}A.\\
S_2'&=&\dot{A}AT\dot{C}CTT\dot{A}{\Large\bf {\small
C}\overline{\dot{T}TTT}\dot{A}AC\dot{G}A\dot{T}TA\dot{G}CS}\dot{G}TC.
\end{eqnarray*}

The following show the right collision, which happens nearby the
right motif boundary and are marked by two overline
$\overline{TTAG}$ subsequences.
\begin{eqnarray*}
 S_1'&=&GT\dot{A}CC\dot{A}TG\dot{G}A{\Large\bf T\dot{T}ATT A\dot{A}CGA\overline{\dot{T}T\dot{A}G}C\dot{S}}TA\dot{G}AG\dot{G}ACC\dot{T}A.\\
S_2'&=&\dot{A}AT\dot{C}CTT\dot{A}{\Large\bf {\small
C}\dot{T}TTT\dot{A}AC\dot{G}A\overline{\dot{T}TA\dot{G}}CS}\dot{G}TC.
\end{eqnarray*}

\subsubsection{Improving the Boundaries}

In the early phase of the algorithm, we first detect a small piece
of motif in $S_1'$ by comparing $S_1'$ and $S_2'$. Assume ``T{\small
A}TT" and ``TTAG" are found in the left and right motif region of
$S_1'$ respectively. The rough motif length will be calculated via
the difference of the location first character `T' of the first
subsequence and the location of the last character `G' of the
second subsequence. The position marked by ``\underline{A}" is the
rough left boundary of motif and the position marked by
``\underline{T}" is the rough right boundary of motif in $S_1'$
below.
\begin{eqnarray*}
 S_1'&=&GTACCATGG\underline{A}{\Large\bf TTATTAACGATTAGCS}\underline{T}AGAGGACCTA.\\
S_2'&=&AATCCTT\underline{A}{\Large\bf {\small
C}TTTTAACGATTAGCS}\underline{G}TC.
\end{eqnarray*}

\subsubsection{Select Sample Points for the Sequences in $Z_2$}

Some sample points near the motif boundaries of $S_1'$ are selected.

 $S_1''=GTACCATG\dot{G}A{\Large\bf T\dot{T}A\dot{T}T
AACGATT\dot{A}G\dot{C}S}T\dot{A}GAGGACCTA$.

Sample points are selected in each sequence in $Z_2$.

\begin{eqnarray*}
S_1''&=&A\dot{T}TC\dot{G}ATCC\dot{A}G{\Large\bf
T\dot{T}T\dot{T}TAACGGTTAG\dot{C}S}C\dot{A}AT\dot{T}ACTT\dot{A}G.\\
S_2''&=&G\dot{C}ATT\dot{G}CAT{\Large\bf
T\dot{T}TTTAACGATTAC\dot{C}S}GT\dot{A}CTT\dot{A}GCT\dot{A}GA\dot{T}C.\\
S_3''&=&\dot{T}CA\dot{G}GGCA\dot{T}CGA\dot{G}AC{\Large\bf TTT\dot{T}TAGCGATTAG\dot{C}S}CTA\dot{G}AATC\dot{A}GAC\dot{C}T.\\
S_4''&=&GT\dot{A}CCT\dot{G}GCAT\dot{T}GAACG{\Large\bf T\dot{T}TTTAACGATT\dot{A}GCA}TGC\dot{A}GAT\dot{G}GACCT\dot{T}TA.\\
S_5''&=&AA\dot{T}GGA\dot{T}CAGA{\Large\bf
T\dot{T}TTTAACGATTCG\dot{C}S}CTA\dot{G}ATT\dot{C}AG.
\end{eqnarray*}

\subsubsection{Collision Detection Between $S_1'$ with the Sequences in $Z_2$}

Some sample points near the motif boundaries of $S_1'$ are selected.

 $S_1''=GTACCATG\dot{G}A{\Large\bf T\dot{T}A\dot{T}T
AACGATT\dot{A}G\dot{C}S}T\dot{A}GAGGACCTA$.

Sample points are selected in each sequence in $Z_2$.

\begin{eqnarray*}
S_1''&=&A\dot{T}TC\dot{G}ATCC\dot{A}G{\Large\bf
T\overline{\dot{T}T\dot{T}T}AACGGT\overline{TAG\dot{C}}S}C\dot{A}AT\dot{T}ACTT\dot{A}G.\\
S_2''&=&G\dot{C}ATT\dot{G}CAT{\Large\bf
T\overline{\dot{T}TTT}AACGAT\overline{TAC\dot{C}}S}GT\dot{A}CTT\dot{A}GCT\dot{A}GA\dot{T}C.\\
S_3''&=&\dot{T}CA\dot{G}GGCA\dot{T}CGA\dot{G}AC{\Large\bf TTT\overline{\dot{T}TAG}CGAT\overline{TAG\dot{C}}S}CTA\dot{G}AATC\dot{A}GAC\dot{C}T.\\
S_4''&=&GT\dot{A}CCT\dot{G}GCAT\dot{T}GAACG{\Large\bf T\overline{\dot{T}TTT}AACG\overline{ATT\dot{A}}GCA}TGC\dot{A}GAT\dot{G}GACCT\dot{T}TA.\\
S_5''&=&AA\dot{T}GGA\dot{T}CAGA{\Large\bf
T\overline{\dot{T}TTT}AACGAT\overline{TCG\dot{C}}S}CTA\dot{G}ATT\dot{C}AG.
\end{eqnarray*}

\subsubsection{Improving the Motif Boundaries for the Sequences in $Z_2$}

After the collision with the sequences in $Z_2$, we obtain the rough
location of motifs of the sequences in $Z_2$. Their motif boundaries
for the sequences in $Z_2$ are improved.

$S_1''=GTACCATGGA{\Large\bf \overline{TTAT}T
AACGATT\overline{AGCS}}TAGAGGACCTA$.

The improved motif boundaries of the sequences in $Z_2$ are marked
below.

\begin{eqnarray*}
S_1''&=&ATTCGATCCA\underline{G}{\Large\bf
TTTTTAACG{\small{G}}TTAGCS}\underline{C}AATTACTTAG.\\
S_2''&=&GCATTGC\underline{A}T{\Large\bf
TTTTTAACGATTACCS}\underline{G}TACTTAGCTAGATC.\\
S_3''&=&TCAGGGCATCGAGA\underline{C}{\Large\bf TTTTTAGCGATTAGCS}\underline{C}TAGAATCAGACCT.\\
S_4''&=&GTACCTGGCATTGAAC\underline{G}{\Large\bf TTTTTAACGATTAGCA}\underline{T}GCAGATGGACCTTTA.\\
S_5''&=&AATGGATCAG\underline{A}{\Large\bf
TTTTTAACGATTCGCS}C\underline{T}AGATTCAG.
\end{eqnarray*}

\subsubsection{Motif Boundaries for the Sequences in $Z_2$}

$S_1''=GTACCATGGA{\Large\bf \overline{TTAT}T
AACGATT\overline{AGCS}}TAGAGGACCTA$.

Use the pair $(G_L,G_R)$ with $G_L={\Large\bf \overline{TTAT}}$ and
$G_R={\Large\bf\overline{AGCS}}$ to find the motif boundaries in the
sequences of $Z_2$. The rough boundaries of the second group is
marked below with underlines.

%!!!($R_R$ should be $G_R$?)
%Fixed.

\begin{eqnarray*}
S_1''&=&ATTCGATCCA\underline{G}{\Large\bf
TTTTTAACG{\small{G}}TTAGCS}\underline{C}AATTACTTAG.\\
S_2''&=&GCATTGCA\underline{T}{\Large\bf
TTTTTAACGATTACCS}\underline{G}TACTTAGCTAGATC.\\
S_3''&=&TCAGGGCATCGAGA\underline{C}{\Large\bf TTTTTAGCGATTAGCS}\underline{C}TAGAATCAGACCT.\\
S_4''&=&GTACCTGGCATTGAAC\underline{G}{\Large\bf TTTTTAACGATTAGCA}\underline{T}GCAGATGGACCTTTA.\\
S_5''&=&AATGGATCAG\underline{A}{\Large\bf
TTTTTAACGATTCGCS}\underline{C}TAGATTCAG.
\end{eqnarray*}

\subsubsection{Extracting the Motif Regions}

The motif regions of the second group will be extracted. The
original motif is recovered via voting at each column.
\begin{eqnarray*}
G_1''&=&{\Large\bf TTTTTAACG{\small{G} }TTAGCS}\\
G_2''&=&{\Large\bf TTTTTAACGATTA{\small{C}}CS}\\
G_3''&=&{\Large\bf TTTTTA{\small{G}}CGATTAGCS}\\
G_4''&=&{\Large\bf TTTTTAACGATTAGC{\small{A}}}\\
G_5''&=&{\Large\bf TTTTTAACGATT{\small{C}}GCS}
\end{eqnarray*}

\subsubsection{Recovering Motif via Voting}

The original motif {\bf TTTTTAACGATTAGCS} is recovered via voting at
all columns. For example, the last {\bf S} in the motif is recovered
via voting among the characters {\bf S, S, S, A, S} in the last
column.

\subsection{Our Results}

We give an algorithm for the case with at most ${1\over (\log
n)^{2+\mu}}$ mutation rate. The performance of the algorithm is
stated in Theorem~\ref{main-theorem1}. Theorem~\ref{main-theorem1}
implies Corollary~\ref{corollary1} by selecting $k=c_1\log n$ with
some constant $c_1$ large enough.

\begin{theorem}\label{main-theorem1}Assume that $\mu$ is a fixed number in $(0,1)$
and the alphabet size $t$ is at least $4$. There exists a randomized
algorithm such that there is a constant $c_0$ that
 if the length of the motif $G$ is at least $c_0\log n$,
then given $k$ independent $\Theta(n,G,{1\over (\log
n)^{2+\mu}})$-sequences, the algorithm outputs $G'$ such that

1) with probability at most $e^{-\Omega(k)}$, $|G'|\not=|G|$, and

2) for each $1\le i\le |G|$, with probability at most
$e^{-\Omega(k)}$, $G'[i]\not=G[i]$, and

3) with probability at most ${k\over n^3}$, the algorithm
\algmnam~does not stop in $(O(k({n\over \sqrt{h}}(\log n)^{5\over
2}+h^2\log n)), O(k))$ time,

 where $n$ is the longest length of any input
sequences, and $h=\min(|G|,n^{2\over 5})$.
\end{theorem}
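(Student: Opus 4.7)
The plan is to analyze the three phases of Algorithm~\algmnam{} separately, using concentration inequalities throughout, and to argue that each failure mode of the theorem corresponds to a rare bad event in a specific phase. The three probabilistic claims are of different character: (1) is a statement about the alignment produced in the preprocessing phase, (2) is a statement about a single voting step, and (3) is a statement about the running time, which is stochastic because the \phaseone{} uses random sampling. I would therefore partition the analysis into the boundary phase, the extract phase, the voting phase, and a time-budget accounting.

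First I would analyze the \phaseone{}. The key quantity is the sampling rate $\Theta(\sqrt{h}(\log n)^{3/2}/n)$ imposed on each sequence in $Z_1 \cup Z_2$, chosen so that within any fixed window of length $h$ the number of sampled positions is $\Theta((\log n)^{3/2})$ with exponentially small deviation by Chernoff. For a pair $(S_i', S_j') \in Z_1 \times Z_1$, when the two motif regions are compared under the correct offset, each aligned sampled pair is a match unless at least one of the two positions has mutated; since $\alpha \le 1/(\log n)^{2+\mu}$, the expected number of mismatches in an aligned window is $o(1)$, so there exists a matching stretch of length $\Omega(\log n)$ with probability $1 - e^{-\Omega(\log n)}$. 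Conversely, at any \emph{wrong} offset, each sampled comparison is an independent collision with probability $1/t \le 1/4$ in a background position; standard Chernoff bounds then show that no incorrect offset produces $\Omega(\log n)$ consecutive matches, except with probability $1/n^{\Omega(1)}$ per pair. A union bound over the $O(k)$ pairs and $O(n^2)$ offsets yields the failure bound $k/n^3$ required in (3). The boundary-improvement step then refines the rough boundary by a local search of width $O(h)$, which incurs the $h^2 \log n$ factor in the running time and whose correctness follows again from Chernoff on the expected number of mismatches in a window of length $h$.

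Second I would handle the \phasetwo{}, which uses the detected left/right boundary signatures $(G_L, G_R)$ to locate the motif region of each sequence in $Z_2$. The argument mirrors that of the \phaseone: a correct alignment produces nearly all matches on $|G_L|+|G_R| = \Theta(\log n)$ positions (since the mutation rate is $1/(\log n)^{2+\mu}$), while an incorrect alignment matches on at most a $1/t$-fraction of positions in expectation; Chernoff gives failure probability $e^{-\Omega(\log n)}$ per sequence, and a union bound over the $k$ sequences gives the failure bound for claim (1). For claim (2), once the $k$ sequences have been correctly aligned, column $i$ contains $k$ independent samples of a random variable equal to $G[i]$ with probability $\ge 1-\alpha$, so by Chernoff the majority vote disagrees with $G[i]$ with probability $e^{-\Omega(k)}$, which is exactly what (2) asserts; (1) follows similarly from the probability that the recovered motif length matches.

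Third I would account for the running time. The sampling step costs $O(n \cdot \sqrt{h}(\log n)^{3/2}/n) = O(\sqrt{h}(\log n)^{3/2})$ samples per sequence, and scanning pairs for collisions takes $O(n \cdot \sqrt{h}(\log n)^{3/2}/n \cdot \log n)$ per position with appropriate hashing, giving $O(n (\log n)^{5/2}/\sqrt{h})$ per pair; multiplied by the $O(k)$ pairs in $Z_1 \times Z_2$ this gives the first term. The $O(h^2 \log n)$ term comes from the local refinement over windows of width $h$. The voting phase costs $O(k)$ per character, matching the second coordinate. The main obstacle, as I see it, is calibrating the sampling rate $\sqrt{h}(\log n)^{3/2}/n$ so that it simultaneously (a) produces enough sampled positions inside each motif region to distinguish true from spurious offsets with probability $1 - 1/n^{\Omega(1)}$, and (b) keeps the work per pair below $O(n(\log n)^{5/2}/\sqrt{h})$; the cap $h \le n^{2/5}$ exists precisely to keep the $h^2 \log n$ refinement from dominating, and verifying that these constraints are mutually satisfiable under the mutation-rate assumption $\alpha \le 1/(\log n)^{2+\mu}$ is the delicate quantitative step that drives the choice of parameters.
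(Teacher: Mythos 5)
Your high-level decomposition (boundary phase, extraction, voting, time accounting) matches the paper's, but two of the load-bearing steps are missing or wrong.

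First, the boundary phase. The engine of the sublinear algorithm is a birthday-paradox \emph{collision} argument, not an offset-by-offset comparison. Point-Selection draws an \emph{independent} random sample set in each sequence, with $\Theta(\sqrt{L}(\log n)^{3/2})$ samples per interval of length $L$ (rate $(\log n)^{3/2}/\sqrt{L}$ per position, hence $\frac{n}{\sqrt{L}}(\log n)^{3/2}$ samples per sequence, which is where the $\frac{n}{\sqrt h}(\log n)^{5/2}$ term comes from after multiplying by the $\log n$ cost of each exact-match test). Collision-Detection then asks whether some sampled position of $S_1$ and some sampled position of $S_2$ land at the \emph{same relative position} inside the two motif copies; the paper's Lemma on intersections of random sample sets shows this happens with probability $1-O(1/n^3)$ precisely because each overlapping window of length $L$ receives about $\sqrt{L}\cdot\mathrm{polylog}(n)$ samples. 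Your proposal never invokes this collision step: you speak of comparing "aligned sampled pairs" under "the correct offset," but the two sample sets are independent, so under any fixed offset essentially no sampled position of $S_1$ has its partner sampled in $S_2$. Moreover your stated rate $\sqrt{h}(\log n)^{3/2}/n$ puts only $h^{3/2}(\log n)^{3/2}/n\le (\log n)^{3/2}/n^{2/5}\ll 1$ expected samples inside a motif window of length $h\le n^{2/5}$, so nothing at all would be detected; and even with $\Theta((\log n)^{3/2})$ samples per window (as you claim), two such sets in a window of length $h$ share a position with probability only about $(\log n)^{3}/h=o(1)$ for $h\ge(\log n)^{7+\mu}$. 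The $\sqrt{h}$ in the sample count is not a free parameter to be "calibrated" at the end — it is forced by the birthday paradox, and omitting that mechanism leaves the correctness of \phaseone{} unproved. (The low mutation rate enters separately, via the "damaged position" count, to guarantee that a collided pair passes the exact-match test $\omega=0$ over $d_0\log n$ characters.)

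Second, the voting phase. You argue claim (2) "once the $k$ sequences have been correctly aligned," treating column $i$ as $k$ independent copies of $G[i]$ corrupted with rate $\alpha$. But each sequence of $Z_2$ is misaligned, unmatched, or matched to a background region with some fixed positive constant probability (the paper's constants $Q_0$, $V_0$, $R$), so the event that \emph{all} $k$ sequences are correctly extracted fails with probability $1-(1-\Omega(1))^k\to 1$; you cannot condition on it. The paper's argument instead shows by Chernoff that, except with probability $e^{-\Omega(k)}$, at most a $(Q_0+V_0+R+2\epsilon)$-fraction of the extracted strings are bad, and then uses the parameter inequality $2(Q_0+V_0+R+2\epsilon)+(\alpha+\epsilon)<1$ to conclude that the correctly extracted, unmutated votes still form a strict majority in every column. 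Without this fraction-of-bad-rows accounting, claims (1) and (2) do not follow. The remainder of your outline (Chernoff for spurious matches of length $d_0\log n$ against background, the $h^2\log n$ refinement cost, the $O(k)$ per-character voting cost) is consistent with the paper.
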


\begin{corollary}\label{corollary1}
There exists a randomized algorithm such that there are positive
constants $c_0, c_1$ and $\mu$ that if the alphabet size is at least
$4$, the number of sequences is at least $c_1\log n$, the motif
length is at least $c_0\log n$, and each character in motif region
has probability at most ${1\over (\log n)^{2+\mu}}$ of mutation,
then motif can be recovered in $(O({n\over \sqrt{h}}(\log n)^{7\over
2}+h^2\log^2 n), O(\log n))$ time, where $n$ is the longest length
of any input sequences, and $h=\min(|G|,n^{2\over 5})$.
\end{corollary}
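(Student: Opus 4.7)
The plan is to derive Corollary~\ref{corollary1} as a direct instantiation of Theorem~\ref{main-theorem1}, following the hint in the paper: set $k = c_1\log n$ with $c_1$ chosen sufficiently large. Almost all the work has already been done in Theorem~\ref{main-theorem1}; what remains is (a) substituting the value of $k$ into the time bound, and (b) using a union bound to convert the per-position error probability into a global success guarantee.

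First, I would observe that under the assumption of at least $c_1\log n$ input sequences, Theorem~\ref{main-theorem1} applies with $k=c_1\log n$. Substituting into the time complexity $(O(k({n\over \sqrt{h}}(\log n)^{5/2} + h^2\log n)),\ O(k))$ of Theorem~\ref{main-theorem1} multiplies each term by one additional factor of $\log n$, giving
\[
\Bigl(O\bigl({\tfrac{n}{\sqrt{h}}}(\log n)^{7/2} + h^2\log^2 n\bigr),\ O(\log n)\Bigr),
\]
which is exactly the bound claimed by the corollary, with $h=\min(|G|,n^{2/5})$.

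Second, I would handle correctness via a union bound. Theorem~\ref{main-theorem1} gives failure probability at most $e^{-\Omega(k)}$ for the length of $G'$, at most $e^{-\Omega(k)}$ per recovered character $G'[i]$, and at most $k/n^3$ for the running-time bound. With $k=c_1\log n$, we have $e^{-\Omega(k)} = n^{-\Omega(c_1)}$, so by taking the union over all at most $n$ character positions together with the length event and the running-time event, the total failure probability is at most $(n+1)\cdot n^{-\Omega(c_1)} + c_1\log n/n^3$. Choosing $c_1$ large enough makes the exponent $\Omega(c_1)$ strictly larger than any desired constant, so the overall failure probability tends to zero polynomially in $n$.

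The only subtle point — really the main (and rather minor) thing to check — is that the implicit constant in the $\Omega(k)$ exponent from Theorem~\ref{main-theorem1} is an absolute constant independent of the other parameters, so that scaling $c_1$ can indeed beat the $n$ factor arising from the union bound. Since the analysis underlying Theorem~\ref{main-theorem1} produces its $e^{-\Omega(k)}$ bounds from fixed Chernoff-type arguments over the $k$ independent sequences, this constant is absolute, and the corollary follows.
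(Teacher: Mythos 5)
Your proposal is correct and follows exactly the route the paper intends: the paper proves Corollary~\ref{corollary1} simply by invoking Theorem~\ref{main-theorem1} with $k=c_1\log n$ for $c_1$ large enough, which is precisely your substitution into the time bound plus the union-bound argument you spell out. Your added check that the $\Omega(k)$ constant is absolute (so scaling $c_1$ beats the union over at most $n$ positions) is a reasonable explicit filling-in of what the paper leaves implicit.
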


We give a randomized algorithm for the case with $\Omega(1)$
mutation rate. The performance of the algorithm is stated in
Theorem~\ref{main-theorem2}. Theorem~\ref{main-theorem2} implies
Corollary~\ref{corollary2} by selecting $k=c_1\log n$ with some
constant $c_1$ large enough..

\begin{theorem}\label{main-theorem2}Assume that the alphabet size $t$ is at least $4$. There
exists a randomized algorithm such that there is a constant $c_0$
that
 if the length of the motif $G$ is at least $c_0\log n$,
then given $k$ independent $\Theta(n,G,\mu))$-sequences, the
algorithm
 outputs
$G'$ such that

1) with probability at most $e^{-\Omega(k)}$, $|G'|\not=|G|$, and

2) for each $1\le i\le |G|$, with probability at most
$e^{-\Omega(k)}$, $G'[i]\not=G[i]$,

3)  with probability at most ${k\over n^3}$, the algorithm
\algmnam~does not stop in $(O(k({n^2\over |G|}(\log n)^{O(1)}+h^2)),
O(k))$,

 where $n$ is the longest length of any input
sequences, and $h=\min(|G|,n^{2\over 5})$.
\end{theorem}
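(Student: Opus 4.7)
The plan is to prove Theorem~\ref{main-theorem2} by following the same three-phase structure (Boundary-Phase, Extract-Phase, Voting-Phase) that underlies \algmnam, but recalibrating the sampling density and the collision statistics for the constant-$\mu$ regime. The key conceptual difference from Theorem~\ref{main-theorem1} is that when $\mu$ is an absolute constant, a ``signature'' of short length no longer separates a true motif alignment from a random one merely by exact equality; instead I will use Hamming-distance tests over longer windows, which is precisely why the preprocessing cost grows from $O(n/\sqrt{h})$ to $O(n^2/|G|)$ up to polylog factors.

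First I would show that for the Boundary-Phase, if two sequences are chosen from $Z_1$ and their motif regions are aligned in a common column window of length $\Theta(\log n)$, then the expected fraction of matching positions is at least $(1-\mu)^2 + \mu^2/(t-1)$, while for any window that is shifted so the motif regions overlap by $o(\log n)$ positions, the expected match fraction is at most $1/t + o(1)$. For $t\ge 4$ and a sufficiently small constant $\alpha$ chosen in advance (depending only on $t$), the gap between these two expectations is $\Omega(1)$, so a Chernoff bound over a window of $\Theta(\log n)$ positions separates the two cases with failure probability $1/n^{\Omega(1)}$. Consequently, by sampling $O(n/\sqrt{|G|}\cdot(\log n)^{O(1)})$ anchor positions per sequence and checking the Hamming distance at each anchor-shift combination (a total of $O(n^2/|G|\cdot(\log n)^{O(1)})$ comparisons of $\Theta(\log n)$-length windows), we will hit the correct rough boundary for each sequence in $Z_2$ with probability $1-1/n^{\Omega(1)}$, and the union bound over the $k$ sequences gives the claimed $O(k/n^3)$ runtime-failure bound.

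Next I would analyze the Extract-Phase: once rough boundaries within $O(h)$ of the true ones are identified, I refine them by a local $O(h^2)$ Hamming-distance scan between the candidate motif regions in two sequences, using exactly the same estimator as above but now over a larger window of length $\Theta(|G|)$. The concentration is now much sharper because the window has $\Theta(|G|)\ge\Theta(\log n)$ positions, so any misalignment of even $\Omega(1)$ positions produces a detectable increase in Hamming distance; this pins down the boundary exactly with probability $1-e^{-\Omega(k)}$ per sequence. This is where the additive $O(kh^2)$ term in the runtime appears. Then in the Voting-Phase, once the motif regions are correctly aligned, each column independently contains the true motif character with probability at least $1-\alpha$; since the other $t-1$ characters each appear with probability at most $\alpha$ and $\alpha$ is chosen small enough that $1-\alpha > 1/t + \Omega(1)$, a Chernoff bound over $k$ independent trials yields the plurality character equal to $g_i$ with failure probability $e^{-\Omega(k)}$ per column, and the length claim $|G'|=|G|$ follows from the exactness of the refined boundaries.

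The main obstacle I expect is the separation analysis in the Boundary-Phase: unlike the polylog-mutation regime, here I cannot use an ``exact-match signature'' trick and must argue that Hamming distance on $\Theta(\log n)$-length windows is a robust enough statistic despite the $\Omega(1)$ mutation noise on both compared copies of the motif. This forces the careful choice of $\alpha$ small enough that $(1-\mu)^2 + \mu^2/(t-1) - 1/t$ is a positive constant for $t\ge 4$, and the careful bookkeeping of the sample size so that the total cost stays at $O(k(n^2/|G|)(\log n)^{O(1)} + kh^2)$. All the failure-probability bounds then assemble by union-bounding over $O(k\cdot n/|G|)$ candidate window alignments and $O(k)$ sequences, each contributing $e^{-\Omega(\log n)}$, which comfortably fits in the claimed $k/n^3$ bound after choosing constants.
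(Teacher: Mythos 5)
Your overall architecture matches the paper's: the paper's second algorithm is exactly \algmname{} with $\omega_{\algtype}=\beta$ (approximate Hamming matching of $d_0\log n$-length windows in place of the exact-signature test), with Point-Selection still sampling $M(L)=O(\sqrt{L}(\log n)^{3/2})$ anchors per length-$L$ interval, so that the all-pairs collision step costs $O((n/\sqrt{|G|})^2(\log n)^{O(1)})=O((n^2/|G|)(\log n)^{O(1)})$, and the separation between aligned motif windows ($\diff\le 2\alpha+2\epsilon$ w.h.p.) and windows touching the random background ($\diff$ concentrated near $1-1/t\ge 3/4$) is exactly the paper's Lemmas on $\diff$ and the Chernoff bound. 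Your identification of the exact-match-versus-Hamming-test distinction as the reason for the $n^2/|G|$ cost is correct. (One minor point: your expected match fraction $(1-\mu)^2+\mu^2/(t-1)$ assumes mutations are uniform over $\Sigma\setminus\{g_i\}$, which the model explicitly does not; but the lower bound $(1-\mu)^2$ alone suffices, so this is cosmetic.)

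There is, however, a genuine gap in your Extract-Phase step, where you claim that a local $O(h^2)$ Hamming scan ``pins down the boundary exactly with probability $1-e^{-\Omega(k)}$ per sequence'' and that $|G'|=|G|$ then ``follows from the exactness of the refined boundaries.'' Neither claim can be made to work in the $\alpha=\Omega(1)$ regime. First, exact per-sequence boundary recovery is impossible at that confidence: if $b_1\neq g_1$ in a given sequence (probability up to $\alpha$, a constant), no pairwise comparison can decide whether position $1$ belongs to the motif, so any single sequence misplaces its boundary by $O(1)$ positions with constant probability, not $e^{-\Omega(k)}$. Second, a shift of the candidate motif window by a constant number of positions is not detectable by a Hamming scan over a $\Theta(|G|)$-length window when $G$ is (nearly) periodic --- e.g.\ $G=AA\cdots A$ --- since the interior of the overlap contributes no mismatch signal and the ends contribute only $O(1)$ mismatches out of $\Theta(|G|)$ compared positions. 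The paper circumvents this with three devices your proposal omits: (i) boundaries are only localized to an outward window of constant width $v+u_1$ (Lemma~\ref{v+u-lemma2}); (ii) the length is estimated as the \emph{median} of the $k_1$ independent pair estimates, which lands in $[|G|-2(v+u_1),|G|+2(v+u_1)]$ except with probability $e^{-\Omega(k_1)}$ (Lemma~\ref{select-median-lemma}); and (iii) the exact alignment is recovered only at the consensus level, by enumerating all $O((v+u_1)^2)$ candidate offset pairs $(G_l,G_r)$ in \phasetwo, using a ``left matched'' test that demands exact equality on the first $v-1$ characters to kill shifted candidates, and accepting a pair only when at most $(Q_0+R+2\epsilon)k_2$ of the $k_2$ Match calls fail; majority voting then tolerates the constant fraction of misextracted regions. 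Without these steps the conclusion $|G'|=|G|$ with probability $1-e^{-\Omega(k)}$ does not follow.
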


\begin{corollary}\label{corollary2}
There exists a randomized algorithm such that there are positive
constants $c_0, c_1$, and $\alpha$ that if the alphabet size is at
least $4$, the number of sequences is at least $c_1\log n$, the
motif length is at least $c_0\log n$, and each character in motif
region has probability at most $\alpha$ of mutation, then motif can
be recovered in $(O({n^2\over |G|}(\log n)^{O(1)}), O(\log n))$
time.
\end{corollary}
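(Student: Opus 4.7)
The plan is to derive Corollary~\ref{corollary2} directly from Theorem~\ref{main-theorem2} by setting $k = c_1 \log n$ for a sufficiently large constant $c_1$. First I would verify that, under this choice of $k$, each of the three probabilistic guarantees of the theorem becomes acceptable. The events in parts~(1) and (3) each have failure probability $e^{-\Omega(k)} = e^{-\Omega(c_1 \log n)} = n^{-\Omega(c_1)}$ and $k/n^3 = O((\log n)/n^3)$ respectively, both polynomially small in $n$. For part~(2), the per-character failure probability is $e^{-\Omega(k)}$, so a union bound over all $|G| \le n$ positions yields a total failure probability of at most $n \cdot e^{-\Omega(c_1 \log n)}$, which I can drive below any desired inverse polynomial in $n$ by taking $c_1$ large. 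Thus $G' = G$ with high probability.

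Next I would handle the time bound. The theorem gives a preprocessing cost of $O\bigl(k\bigl(\tfrac{n^2}{|G|}(\log n)^{O(1)} + h^2\bigr)\bigr)$ and a per-character voting cost of $O(k)$, where $h = \min(|G|, n^{2/5})$. The key observation is that the $h^2$ term is dominated by $n^2/|G|$: if $|G| \le n^{2/5}$ then $h^2 = |G|^2 \le n^{4/5} \le n^{2}/|G|$; if $|G| > n^{2/5}$ then $h^2 = n^{4/5} < n^{8/5} < n^2/|G|$. So the preprocessing time simplifies to $O\bigl(k \cdot \tfrac{n^2}{|G|}(\log n)^{O(1)}\bigr)$, and substituting $k = c_1 \log n$ yields $O\bigl(\tfrac{n^2}{|G|}(\log n)^{O(1)}\bigr)$ for $t_1$ and $O(\log n)$ for $t_2$, matching the statement of the corollary.

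The main (and essentially only) obstacle is bookkeeping: verifying that the parameter $\alpha$ and the constant $c_0$ supplied by Theorem~\ref{main-theorem2} are indeed absolute constants independent of $n$, so that the hypotheses of the corollary (``there exist constants $c_0, c_1, \alpha$\dots'') are met simultaneously. Since $\alpha$ and $c_0$ come from the theorem without reference to $n$, and $c_1$ is chosen after them purely to tame the probability bounds, no circular dependence arises. Combining the three items with a final union bound over the $O(n)$ positions and the preprocessing failure event gives a single high-probability statement that the algorithm outputs $G$ and halts within the claimed time budget, completing the proof.
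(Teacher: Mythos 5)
Your proposal is correct and takes essentially the same route as the paper, whose proof is the one-line observation that Theorem~\ref{main-theorem2} implies Corollary~\ref{corollary2} by choosing $k=c_1\log n$ with $c_1$ large enough, exactly as you do (including absorbing the extra factor $k$ into $(\log n)^{O(1)}$ and dominating $h^2$ by $n^2/|G|$). One tiny slip: in the case $|G|>n^{2/5}$ your intermediate inequality $n^{8/5}<n^2/|G|$ can fail (e.g.\ when $|G|$ is close to $n$), but the bound you actually need, $h^2=n^{4/5}<n\le n^2/|G|$, holds because $|G|\le n$, so the conclusion is unaffected.
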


We give a deterministic algorithm for the case with $\Omega(1)$
mutation rate.  The performance of the algorithm is stated in
Theorem~\ref{main-theorem3}. Theorem~\ref{main-theorem3} implies
Corollary~\ref{corollary3} by selecting $k=c_1\log n$ with some
constant $c_1$ large enough.

\begin{theorem}\label{main-theorem3}Assume that the alphabet size $t$ is at least $4$. There
exists a deterministic
%randomized!!!(should be deterministic?) %Fixed.
algorithm such that there is a constant $c_0$ that if the length of
the motif $G$ is at least $c_0\log n$, then given $k$ independent
$\Theta(n,G,\mu))$-sequences, algorithm runs in $(O(n^2(\log
n)^{O(1)}+h^2k), O(k))$, and outputs $G'$ such that

1) with probability at most $e^{-\Omega(k)}$, $|G'|\not=|G|$, and

2) for each $1\le i\le |G|$, with probability at most
$e^{-\Omega(k)}$, $G'[i]\not=G[i]$,

3)  with probability at most ${k\over n^3}$, the algorithm
\algmnam~does not stop in $(O(k(n^2(\log n)^{O(1)}+h^2)), O(k))$
time,

 where $n$ is the longest length of any input
sequences, and $h=\min(|G|,n^{2\over 5})$.
\end{theorem}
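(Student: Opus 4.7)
The plan is to mimic the preprocessing-plus-voting structure already used in Theorems~\ref{main-theorem1} and~\ref{main-theorem2}, but replace every randomized sampling step in the preprocessing phase by a deterministic exhaustive enumeration of all $O(n)$ window positions. Since the randomness in the analysis comes only from the $\Theta(n,G,\mu)$-distribution on the input, the correctness guarantees still have the form ``with probability over the random input at most $e^{-\Omega(k)}$.'' The overall algorithm is: (i) pick two reference sequences $S_1',S_2'\in Z_1$; (ii) deterministically find the rough motif region in $S_1'$ by enumerating all pairs of length-$\ell$ windows in $S_1'\times S_2'$ and collecting the collisions; (iii) use the two anchor substrings $G_L,G_R$ obtained from the rough left and right boundaries of $S_1'$ to deterministically scan each remaining sequence in $Z_2$; (iv) refine each boundary in $O(h^2)$ time per sequence; and (v) vote column-wise to recover $G$.

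For step (ii), I set $\ell=c\log n$ with $c$ chosen large enough that any fixed pair of length-$\ell$ background substrings collides with probability at most $t^{-\ell}\le n^{-5}$. Comparing all $O(n^2)$ window pairs can be done in $O(n^2(\log n)^{O(1)})$ time via standard suffix-array or rolling-hash preprocessing. A union bound over the $O(n^2)$ candidate pairs shows that, with probability at least $1-O(n^{-3})$, no pair of background windows collides, so every observed collision must involve at least one motif window. Combined with the fact that $|G|\ge c_0\log n$ guarantees (with probability $1-e^{-\Omega(k)}$) a length-$\ell$ stretch inside the motif region of $S_1'$ in which no mutation occurs, the rough motif region of $S_1'$, and hence the anchor pair $(G_L,G_R)$, is correctly identified. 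Step (iii) costs $O(n(\log n)^{O(1)})$ per sequence, i.e.\ $O(kn(\log n)^{O(1)})\subseteq O(n^2(\log n)^{O(1)})$ in total, while step (iv) contributes the $h^2k$ summand, matching the stated preprocessing bound.

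The voting analysis in step (v) is essentially identical to that of Theorems~\ref{main-theorem1} and~\ref{main-theorem2}. At each column $i$, the aligned character equals $g_i$ with probability at least $1-\alpha$ and equals any fixed alternative character with probability at most $\alpha$. Choosing $\alpha$ a sufficiently small constant strictly below $1-1/t$ separates $g_i$ from every alternative by a constant margin, so a standard Chernoff/Hoeffding bound shows that the plurality vote at column $i$ equals $g_i$ except with probability $e^{-\Omega(k)}$; this gives item (2), and feeding the same deviation bounds into the length-determination step (failure requires a sufficient fraction of the $k$ aligned boundaries to be misaligned) gives item (1). Each column vote runs in $O(k)$ time, yielding $t_2(n,k)=O(k)$; item (3) on running time follows because the only non-deterministic component is a bounded number of attempts to locate an unmutated anchor in $Z_1$, which succeeds in a single pass except with probability $O(k/n^3)$.

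\textbf{Main obstacle.} The delicate part will be the simultaneous calibration of $\ell$ in step (ii): $\ell$ must be large enough that the union bound over $O(n^2)$ spurious background matches pushes the collision-failure probability below $k/n^3$, yet small enough that a length-$\ell$ substring of the motif is mutation-free in $S_1'$ (and later, in enough sequences of $Z_2$) with probability $1-e^{-\Omega(k)}$. The constant-mutation-rate hypothesis $\alpha=\Omega(1)$ makes this calibration tight but feasible, provided the constant $c_0$ in the hypothesis $|G|\ge c_0\log n$ is taken sufficiently large relative to the hidden constants in $\ell=c\log n$; verifying this simultaneous balance, together with showing that the $O(h^2)$ refinement sharpens the rough boundary to the exact boundary except on an $e^{-\Omega(k)}$ event, is where the care of the analysis lies.
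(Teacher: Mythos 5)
There is a genuine gap, and it sits exactly where you flagged your ``main obstacle'': the exact-match collision criterion in step (ii) cannot be calibrated under the constant mutation rate that Theorem~\ref{main-theorem3} assumes. To beat the union bound over $O(n^2)$ background window pairs you need $\ell=\Theta(\log n)$, but then a fixed length-$\ell$ window inside the motif region of $S_1'$ is mutation-free only with probability $(1-\alpha)^{\ell}=n^{-\Theta(1)}$ (and you need the \emph{aligned} window in $S_2'$ clean as well, squaring this); since the motif may have length only $c_0\log n$, a union bound over its $O(\log n)$ windows shows that with probability $1-o(1)$ there is \emph{no} mutation-free window of length $\ell$ at all. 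Taking $c_0$ large does not repair this, because the clean-window probability is polynomially small per window, so your claim that an unmutated anchor exists ``with probability $1-e^{-\Omega(k)}$'' is false, and the same objection applies to exact-matching the anchors $G_L,G_R$ against the sequences of $Z_2$ in step (iii). The paper avoids this by never requiring exact matches: Collision-Detection and Match use the relative-Hamming threshold $\beta=2\alpha+2\epsilon$ (the ``left/right matched'' notions of Definition~\ref{match-def}, with $\omega_{\deterministic}=\beta$), so that aligned motif windows pass the test with high probability by Chernoff (Lemma~\ref{diff-motif-lemma}) while a window overlapping background fails except with probability $e^{-\epsilon^2\,d_0\log n/3}$ (Lemma~\ref{base-probability-lemma}), which is still small enough for the $O(n^2)$ union bound once $d_0$ satisfies inequality~(\ref{d0-sel-eqn}); this is the content of Lemma~\ref{collision-lemma2-3A} and it is the idea your proposal is missing.

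A second, smaller gap: you anchor everything on a single reference pair $S_1',S_2'$, but a single pair fails to produce correct rough boundaries (hence the correct length and anchors) with probability that is at best polynomially small in $n$ and certainly not $e^{-\Omega(k)}$; items (1) and (2) of the theorem demand failure probability exponentially small in $k$. The paper gets this by processing $k_1=\Theta(k)$ pairs from $Z_1$, taking the \emph{median} of the $k_1$ length estimates (Lemma~\ref{select-median-lemma}), looping over candidate anchor sequences until one passes the verification test in \phasetwo\ (the count of empty extractions must be at most $(Q_0+R+2\epsilon)k_2$), and then feeding all of this into the master Lemma~\ref{general-lemma}, whose conditions also control the fraction of extracted regions that are misaligned with the true motif regions before voting. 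Your voting calculation treats every aligned column as clean except for mutations, which silently assumes away exactly these misalignment events; the bookkeeping that bounds their fraction (via $Q_0$, $V_0$, $R$ and inequality~(\ref{v-set-ineqn1})) is the part of the paper's argument your sketch would still have to reproduce even after fixing the matching criterion.
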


\begin{corollary}\label{corollary3}
There exists a deterministic algorithm such that there are positive
constants $c_0, c_1$, and $\alpha$ that if the alphabet size is at
least $4$, the number of sequences is at least $c_1\log n$, the
motif length is at least $c_0\log n$, and each character in motif
region has probability at most $\alpha$ of mutation, then motif can
be recovered in $(O({n^2}(\log n)^{O(1)}), O(\log n))$ time.
\end{corollary}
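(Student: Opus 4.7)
The plan is to derive Corollary~\ref{corollary3} as an immediate specialization of Theorem~\ref{main-theorem3} by choosing the parameter $k$ equal to $c_1 \log n$ for a sufficiently large constant $c_1$, then absorbing low-order terms and applying a union bound to make the failure probabilities negligible.

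First I would instantiate Theorem~\ref{main-theorem3} with $k = c_1 \log n$ and substitute into the two time bounds. The preprocessing bound becomes $t_1(n,k) = O(n^2 (\log n)^{O(1)} + h^2 k) = O(n^2 (\log n)^{O(1)} + h^2 \log n)$. Since $h = \min(|G|, n^{2/5}) \le n^{2/5}$, we have $h^2 \log n \le n^{4/5} \log n$, which is dominated by $n^2 (\log n)^{O(1)}$. Hence $t_1(n,k) = O(n^2 (\log n)^{O(1)})$. The per-character bound becomes $t_2(n,k) = O(k) = O(\log n)$, matching the stated $(O(n^2 (\log n)^{O(1)}), O(\log n))$ complexity pair.

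Next, I would handle correctness. Conclusions (1), (2), and (3) of Theorem~\ref{main-theorem3} each carry failure probability $e^{-\Omega(k)}$, where the probability is over the random generation of the $\Theta(n,G,\alpha)$-sequences (the algorithm itself is deterministic). Plugging in $k = c_1 \log n$ gives failure probability $e^{-\Omega(c_1 \log n)} = n^{-\Omega(c_1)}$ per event. Applying the union bound over the at most $|G| \le n$ per-character events in conclusion (2) and over the event in conclusion (3) yields a total failure probability of at most $n \cdot n^{-\Omega(c_1)} + n^{-\Omega(c_1)}$. Choosing $c_1$ large enough (say, so that $\Omega(c_1) \ge 4$) makes the overall probability of incorrect recovery or of the algorithm's running time exceeding the claimed bound at most $O(n^{-2}) = o(1)$. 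Thus, with high probability the deterministic algorithm recovers the motif within the stated time.

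There is no serious obstacle here; the statement is a clean corollary. The only nontrivial bookkeeping is verifying that the $h^2 \log n$ term is absorbed into $n^2 (\log n)^{O(1)}$ (which uses $h \le n^{2/5}$) and that the $e^{-\Omega(k)}$ tail probability, after the union bound over $|G|$ positions, still goes to $0$ as a polynomially small function of $n$ once $c_1$ is picked appropriately. Both are straightforward, so the proof amounts to parameter substitution plus a one-line union bound.
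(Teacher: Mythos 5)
Your proposal is correct and is essentially the paper's own argument: the paper derives Corollary~\ref{corollary3} from Theorem~\ref{main-theorem3} simply by "selecting $k=c_1\log n$ with some constant $c_1$ large enough," and your write-up just supplies the routine bookkeeping (absorbing $h^2k\le n^{4/5}\log n$ into $n^2(\log n)^{O(1)}$ and the union bound over the at most $|G|\le n$ positions) that the paper leaves implicit.
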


%\section{Algorithm for Large Alphabet} %Detecting Any Motif}
\section{\algmname}
In this section, we give an unified approach to describe three
algorithms. %which is be converted into three different algorithms.
The performance of the algorithms is stated in the
Theorems~\ref{main-theorem1}, \ref{main-theorem2}, and
\ref{main-theorem3}. The description of \algma is given at
section~\ref{algorithm-sec}. The analysis of the algorithm is given
at section~\ref{analysis-sec}.

%\begin{theorem}\label{main-theorem}Assume that $\tau $ and $\mu$ are fixed numbers in $(0,1)$
%and the alphabet size $t$ is at least $4$. Then there exists a
%constant $c_0$ such that
% if the length of the motif $G$ is at least $c_0\log n$,
%then given $k$ independent $\Theta(n,G,o({1\over (\log
%n)^{2+\mu}}))$-sequences, algorithm \algmnam~has time complexity
%$(O({n\over \sqrt{h}}(\log n)^{3\over 2}+h^2, O(k))$, and outputs
%$G$ with probability at most $e^{-\Omega(k)}$ to be incorrect at
%each character, where $n$ is the longest length of any input
%sequences, and $h=\min(|G|,n^{2\over 5})$.
%\end{theorem}

\subsection{Some Parameters}
\begin{definition}\label{param-def}\scrod
\begin{enumerate}
\item
Constant $x$ is selected to be $10$. This parameter controls the
failure probability of our algorithms to be at most ${1\over 2^x}$.
\item
The size of alphabet is $t$ that is at least $4$.
\item
Select a constant $\rho_0\in (0,1)$ to have
inequality~(\ref{alpha-init-ineqn})
\begin{eqnarray}
\rho_0<{t-1\over 2t}.\label{alpha-init-ineqn}
\end{eqnarray}
\item The constant $\epsilon\in (0,1)$ is selected to satisfy
\begin{eqnarray}
\epsilon< \min(({t-1\over t}-(2\rho_0+2\epsilon)),\  \ {1\over
5}(1-{2\over t-1}-{4\over 2^x}),\ \  {1\over
3}).\label{epsilon-set-ineqn}
\end{eqnarray}
The existence of $\epsilon$ follows from
inequality~(\ref{alpha-init-ineqn}). The constant $\epsilon$ is used
to control the mutation in the motif area. It is a part of parameter
$\beta$ defined in item~(\ref{beta-def}) of this definition.

\item
Let $c=e^{-{\epsilon^2\over 3}}$. The constant $c$ is used to simply
probabilistic bounds which are derived from the applications of
Chernoff bounds (See Corollary~\ref{chernoff-lemma-a}).

\item
Define $r(y)=({1\over t-1}+{c^y\over 1-c})$.

\item
 Define $u_1$ to be a large constant that for all $v\ge 0$,
\begin{eqnarray}
{2(v+u_1)c^{v+u_1}\over (1-c)^2}\le {1\over 5\cdot
2^x}.\label{v-set-ineqn2}
\end{eqnarray}

\item Select constant $\rho_1\in (0,1)$ such that
\begin{eqnarray}
{2\over t-1}+{4\over 2^x}+5\epsilon+\rho_1<1.\label{rho1-ineqn}
\end{eqnarray}

The existence of $\rho_1$ follows from $\epsilon< {1\over
5}(1-{2\over t-1}-{4\over 2^x})$, which is implied by
inequality~(\ref{epsilon-set-ineqn}).

\item Select constant $\rho_2\in (0,1)$ and  constant positive integer $v$ large enough such that
\begin{eqnarray}
&&{6(v+u_1)c^v\over 1-c}+\rho_2<\rho_1,\ \ {\rm and}\ \label{rho2-ineqn}\\
&&({1\over 2^x}+(v+u_1){c^{v}\over 1-c}+{c^v\over 1-c}+{1\over
5\cdot 2^x})\le {1/2}. \label{v2-ineqn}
\end{eqnarray}

\item
Define $\varsigma_0={1\over 2^x}$, and
$\varphi(v)=(v+u_1)({c^{v}\over 1-c}+{c^{v}\over 1-c})$.
%!!!(miss open parenthesis) %Fixed.

%The inequality $1-2(Q_0+V_0+(R+2\epsilon))-(\alpha+\epsilon)>0$
%holds, where $V_0=(2(\varsigma_2(n)+ (v+u_1){c^{v+u_2}\over
%1-c}+{c^{v}\over 1-c})+\epsilon)$.

\item Select constant $\alpha_0$ such that
\begin{eqnarray}
4(v-1)\alpha_0+\alpha_0&<&\rho_2,\ \ {\rm and}\label{rho2-alpha0-inequality}\\
\alpha_0&<&\rho_0.
\end{eqnarray}
 Adding inequalities~(\ref{rho1-ineqn}), (\ref{rho2-ineqn}), and (\ref{rho2-alpha0-inequality}), we have
 inequality~(\ref{median-ineqn})

\begin{eqnarray}
({2\over t-1}+{4\over 2^x}+5\epsilon)+{6(v+u_1)c^v\over
1-c}+(4(v-1)\alpha_0+\alpha_0)<1.\label{median-ineqn}
\end{eqnarray}

By arranging the terms in inequality~(\ref{median-ineqn}) and the
definitions of $r(v)$ and $\varphi(v)$, we have
inequality~(\ref{v-alpha0-ineqn})

\begin{eqnarray}
2((2(v-1)\alpha_0+{c^v\over
1-c})+r(v)+2(\varsigma_0+\varphi(v))+2\epsilon)+(\alpha_0+\epsilon)<1.\label{v-alpha0-ineqn}
\end{eqnarray}

\item
The maximal mutation rate $\alpha$ for the second algorithm
(Theorem~\ref{main-theorem2}) and third algorithm
(Theorem~\ref{main-theorem3}) are selected as $\alpha_0$. Since the
mutation rate of our sublinear time algorithm is bounded by ${1\over
(\log n)^{2+\mu}}$, the maximal mutation rate $\alpha$ for the first
algorithm (Theorem~\ref{main-theorem1}) is less than $\alpha_0$ when
$n$ is large enough. We always assume that all mutation rates
$\alpha$ in our three algorithms are in the range $(0,\alpha_0]$.

\item
Define $q(y)=2(v-1)\alpha+{2c^y\over 1-c}$. %(1-\alpha)^{2(y-1)}-{2c^y\over 1-c}$.
By inequality~(\ref{v-alpha0-ineqn}), the definition of $q(y)$, and
the fact $\alpha\in (0,\alpha_0)$, we have
\begin{eqnarray}
2(q(v)+r(v)+2(\varsigma_0+\varphi(v))+2\epsilon)+(\alpha_0+\epsilon)&<&1.\label{v-set-ineqn1}
\end{eqnarray}

Inequality~(\ref{v-set-ineqn1}) implies $q(v)\le {1\over 2}$. By
inequality~(\ref{v2-ineqn}), we have that
\begin{eqnarray}
 ({1\over 2^x}+(v+u_1){c^{v}\over
1-c}+{c^v\over 1-c}+{1\over 5\cdot 2^x})+q(v)\le
{3/4}\label{support-P0-Q0-inequality}
\end{eqnarray}

\item\label{beta-def}
 Let
$\beta=2\alpha+2\epsilon$. The parameter $\beta$ controls the
similarity of $\aleph(S)$ and the original motif $G$ (see
Lemma~\ref{base}).

\item
Define $R=r(v)$.

%\item
%Define $u_0$ to satisfy  ${2(v+u_1)c^{v+u_1}\over (1-c)^2}{1\over
%5???2^x}$ and ????.

\item
We define the following $Q_0$. %It will be first used in Lemma~\ref{base}.
\begin{eqnarray}
 Q_0=q(v).\label{Q0-def-eqn}
\end{eqnarray}
The parameter $Q_0$ used in~Lemma~\ref{base} gives an upper  bound
of the probability that a $\Theta(n,G,\alpha)$-sequence $S$ whose
$\aleph(S)$ will not be similar enough to the original motif $G$
according to the conditions in~Lemma~\ref{base}.

\item
Select constant $d_0$ such that
\begin{eqnarray}
n^3c^{d_0\log n}<{1\over 5\cdot 2^x}.\label{d0-sel-eqn}
\end{eqnarray}

\item
Select constant $d_1$ such that $(v+u_1)c^{d_1\log n}<{1\over 5\cdot
2^x}$.

\item
Select number $u_2$ such that
\begin{eqnarray}
(d_1\log n)(v+u_1){c^{v+u_2}\over 1-c}&\le& {1\over 5\cdot 2^x}.\ {\rm and} \ \\
(v+u_1){c^{v+u_2}\over 1-c}&<&{1\over 5\cdot 2^x}
\end{eqnarray}
Since only $n$ is variable, we can make $u_2=O(\log\log n)$.

\item
For a fixed $c\in (0,1)$, define $\delta_c={\ln{1\over c}\over 2}$.
\end{enumerate}
\end{definition}

\subsection{Description of  \algma}\label{algorithm-sec}

The algorithm is described in this section. Before presenting the
algorithm, we define some notions.

\begin{definition}\label{match-def}{\scrod}
\begin{itemize}
\item
Two sequences $X_1$ and $X_2$ are {\it weak left matched} if (1)
both $|X_1|$ and $|X_2|$ are at least $d_0\log n$, (2)
$\diff(X_1[1,i], X_2[1,i])\le \beta$ for all integers $i$, $v\le
i\le d_0\log n$.

\item
Two sequences $X_1$ and $X_2$ are {\it left matched} if (1) $d_0\log
n\le |X_1|,|X_2|$, (2) $X_1[i]=X_2[i]$ for $i=1,\cdots, v-1$, and
(3) $\diff(X_1[1,i], X_2[1,i])\le \beta$ for all integers $i$, $v\le
i\le d_0\log n$.

\item
Two sequences $X_1$ and $X_2$ are {\it weak right matched} if
$X_1^R$ and $X_2^R$ are weak left matched, where $X^R=a_n\cdots a_1$
is the inverse sequence of $X=a_1\cdots a_n$.

\item
Two sequences $X_1$ and $X_2$ are {\it right matched} if $X_1^R$ and
$X_2^R$ are left matched, where $X^R=a_n\cdots a_1$ is the inverse
sequence of $X=a_1\cdots a_n$.

\item
Two sequences $X_1$ and $X_2$ are {\it matched} if $X_1$ and $X_2$
are both left and right matched.
\end{itemize}
\end{definition}

%If the motif length satisfies the condition of
%Theorem~\ref{main-theorem},
Variable $L$ will be controlled in the range $L\in [(\log
n)^{3+\epsilon_1}, n^{{2\over 5}-\epsilon_2}]$ in our algorithm with
high probability. We define the following functions that depend on
$L$.

\begin{definition}\label{M-M1-def}
Define $M(L)={\sqrt{3\log n+x}\over \sqrt{1-\gamma}}\sqrt{L}\log n$.
 Define $M_1(L)={\delta_{c_0} M(L)\over \log n}$ (see
Definition~\ref{param-def} for $\delta_c$), where $c_0={1\over 4}$.
\end{definition}

%In the rest of this paper, when $L$ is fixed, we just use $M$ to
%represent $M(L)$, and $M_1$ to represent $M_1(L)$.
We would like to
minimize the function $({n\over L}M+L^2)\log n$. This selection can
make the total time complexity sublinear.

%(P.S.??? Please think a way to approximate the motif length first.
%Then we can find an efficient way to detect the motif with the known
%$L$. We may start at a point that minimizes the function $({n\over
%L}M+L^2)\log n$, and shrink $L$ if it does not work. It looks that
%$L=n^{2\over 5}$ is a starting point. It follows from $1-{x\over
%2}=2x$.)

% \vskip 10pt

%{\bf LoadInputSequences()}

%{\bf Steps:}

%\qquad  Independently generate $k$ $\Theta(n,G,\alpha)$-sequences
%$S_1, S_2,\cdots, S_k$.

%\qquad \qquad Let $Z=\{S_1,\cdots, S_{k}\}$.

%\qquad Return $Z$.

%{\bf End of LoadInputSequences}

%\vskip 10pt
 %{\bf \phaseone~ of the \algma}

\begin{definition}
For a $\Theta_{\alpha}(n,G)$ sequence $S$, define \LB($S$) to be the
left boundary $l$ of the motif region $\aleph(S)$ in $S$, and
\RB($S$) to be the right boundary $r$ of the motif region
$\aleph(S)$ in $S$ such that $\aleph(S)=S[l,r]$.
\end{definition}

\subsubsection{\phaseone~of \algma}
 The first phase of \algma~finds the rough motif boundaries of all
input sequences. It first detects the rough motif boundaries of one
sequence via comparing two input sequences. Then the rough
boundaries of the first sequence is used to find the rough motif
boundaries of other input sequences.

Three algorithms share most of the functions. We have a unified
approach to describe them. A special variable ``\algtype" selects
one of the three algorithms, respectively.

\begin{definition}
Let \algtype~represent one of the three algorithm types,
``\sublinear", "\randomized", and "\deterministic".
\end{definition}

\begin{definition}
Assume that $A_1$ is a set of positions in a $\Theta_{\alpha}(n,G)$
sequence $S_1$ and $A_2$ is a set of positions in a
$\Theta_{\alpha}(n,G)$ sequence $S_2$. If there is a position
$a_1\in A_1$ and $a_2\in A_2$ such that for some position $j$ with
$1\le j\le |G|$, $a_1$ is the position of $\aleph(S_1)[j]$ in $S_1$
and $a_2$ is the position of $\aleph(S_2)[j]$ in $S_2$, then $A_1$
and $A_2$ have a {\it collision} at $(a_1,a_2)$.
\end{definition}

In the following function Collision-Detection, the parameter
$\omega\le \beta$ is defined below in the three  algorithms.

\begin{eqnarray}
\omega_{\mbox{\algtype}}&=& \left\{ \begin{array}{ll}
                0 & \mbox{\ if \ } \mbox{\algtype=\sublinear;}\\
                \beta & \mbox{\ if \ }\mbox{\algtype=\randomized;}\\
                 \beta & \mbox{\ if \ }\mbox{\algtype=\deterministic.}\\
          \end{array}
          \right.
\end{eqnarray}

\vskip 10pt {\bf Collision-Detection$(S_1, U_1, S_2, U_2)$}

{\bf Input:} a pair of $\Theta(n,G,\alpha)$-sequences $S_1$ and
$S_2$, $U_i$ is a set of locations in $S_i$ for $i=1,2$.

{\bf Output:} the left and right rough boundaries of two sequences.

\qquad Let $D_1$ be all subsequences $S_1[a,a+d_0\log n-1]$ of $S_1$
of length $d_0\log n$ with $a\in U_1$.

\qquad Let $D_2$ be all subsequences $S_2[b,b+d_0\log n-1]$ of $S_2$
of length $d_0\log n$ with $b\in U_2$.

%\qquad Sort all subsequences in $D_1\cup D_2$ via bucket sorting.

\qquad Find two  subsequences $X_1=S_1[a_1,a_1+d_0\log n-1]\in D_1$
and

\qquad $X_2=S_2[b_1,b_1+d_0\log n-1]\in D_2$ such that $a_1$ is the
least and $\diff(X_1,X_2)\le \omega_{\algtype}$.

\qquad Find two subsequences $X_1'=S_1[a_1',a_1'+d_0\log n-1]\in
D_1$ and

\qquad $X_2'=S_2[b_1',b_1'+d_0\log n-1]\in D_2$ such that $a_1'$ is
the largest and

\qquad $\diff(X_1',X_2')\le \omega_{\algtype}$.

\qquad Find two  subsequences $Y_1=S_1[f_1,f_1+d_0\log n-1]\in D_1$
and

\qquad $Y_2=S_2[e_1,e_1+d_0\log n-1]\in D_2$ such that $e_1$ is the
least and

\qquad $\diff(Y_1,Y_2)\le \omega_{\algtype}$.

\qquad Find two subsequences $Y_1'=S_1[f_1',f_1'+d_0\log n-1]\in
D_1$ and

\qquad $Y_2'=S_2[e_1',e_1'+d_0\log n-1]\in D_2$ such that $e_1'$ is
the largest and

\qquad $\diff(Y_1',Y_2')\le \omega_{\algtype}$.

\qquad Return $(a,a',e_1,e_1')$.

{\bf End of Collision-Detection}

Function Point-Selection$(S_1,S_2,L)$ will be defined differently in
three different algorithms. It selects some positions from each
interval of length $L$ in both $S_1$ and $S_2$.

%It can be considered as a virtual function here.

\vskip 10pt {\bf Point-Selection$(S,L, I)$}

{\bf Input:} a pair of $\Theta(n,G,\alpha)$-sequences $S$,  a size
parameter $L$ of partition, and an interval of positions $I$ in $S$.

{\bf Output:} a set $U$ of positions from $S$ respectively.

 {\bf Steps:}

Let $U=\emptyset$.

If \algtype=\sublinear~or~\randomized

\qquad If $(L\ge \thresholdL)$

\qquad\qquad For each interval $I'$ in $I$, partition $I'$  into
intervals of size $L$.

\qquad\qquad  Sample $M(L)$ random positions at every

\qquad\qquad\qquad\qquad interval of size $L$ derived in the above
partition, and put them into $U$.

\qquad Else

\qquad\qquad Put every position  of $I$ into $U_1$.

If \algtype=\deterministic

\qquad\qquad Put every position  of $I$ into $U$.

Return $U$.

{\bf End of Point-Selection}

%\vskip 10pt {\bf Point-Selection$(S_1,S_2,L)$}

%It selects some points from each interval of length $L$ in both
%$S_1$ and $S_2$. It is based on the \algtype.

%{\bf End of Point-Selection}

%\begin{definition}\label{L0-def}
%For an input sequence $S$ with known rough left boundary
%$\roughleft_{S}$ and right boundary $\roughright_{S}$, define
%$L_0(S,\roughleft_{S},\roughright_{S})=\min(n^{2\over 5},
% (\roughright_{S}-\roughleft_{S})/2)$.
%\end{definition}

\vskip 10pt {\bf Improve-Boundaries$(S_1,a_l, a_r, S_2, f_l,
f_r,L)$}

{\bf Input:} a $\Theta(n,G,\alpha)$-sequence $S_1$ with rough left
and right boundaries $a_l$ and $a_r$, a
$\Theta(n,G,\alpha)$-sequences $S_2$ with rough left and right
boundaries $f_l$ and $f_r$, and the rough distance $L$ to the
nearest motif boundary from those rough boundaries.

{\bf Output:} improved rough left and right boundaries for both
$S_1$ and $S_2$.

 {\bf Steps:}

%\qquad Let $L_0(S_1)=L_0(S_1,\roughleft_{S_1},\roughright_{S_1})$
%and

%\qquad $L_0(S_2)=L_0(S_2,\roughleft_{S_2},\roughright_{S_2})$.

\qquad Find two subsequences $X_1=S_1[a_1,a_1+d_0\log n-1]$ and
$X_2=S_2[b_2,b_2+d_0\log n-1]$

\qquad with $a_1\in [a_l-L, a_l+L]$ and $b_2\in [f_l-L, f_l+L]$ such
that $\diff(X_1,X_2)\le \beta$ and $a_1$ is

\qquad the least.

\qquad Find two subsequences $X_1'=S_1[a_1',a_1'+d_0\log n-1]$ and
$X_2'=S_2[b_2',b_2'+d_0\log n-1]$

\qquad with $a_1'\in [a_r-L, a_r+L]$ and $b_2\in [f_r-L, f_r+L]$
such that  $\diff(X_1',X_2')\le \beta$ and  $a_1'$ is

\qquad the largest.

\qquad Find two subsequences $Y_1=S_1[e_1,e_1+d_0\log n-1]$ and
$Y_2=S_2[f_2,f_2+d_0\log n-1]$

\qquad with $e_1\in [a_l-L, a_l+L]$ and $f_2\in [f_l-L, f_l+L]$ such
that  $\diff(Y_1,Y_2)\le \beta$ and $f_2$ is

\qquad the least.

\qquad Find two subsequences $Y_1'=S_1[e_1',e_1'+d_0\log n-1]$ and
$Y_2'=S_2[f_2',f_2'+d_0\log n-1]$

\qquad with $e_1'\in [a_r-L, a_r+L]$ and $f_2'\in [f_r-L, f_r+L]$
such that  $\diff(Y_1',Y_2')\le \beta$ and $f_2'$ is

\qquad  the largest.

\qquad Return $(a_1,a_1',f_2,f_2')$.

{\bf End of Improve-Boundaries}

\vskip 10pt {\bf Initial-Boundaries$(S_1,S_2)$}

{\bf Input:} a pair of $\Theta(n,G,\alpha)$-sequences $S_1$ and
$S_2$

{\bf Output:} rough left boundary $\roughleft_{S_1}$ of $S_1$, right
boundary $\roughright_{S_1}$ of $S_1$, rough left boundary
$\roughleft_{S_2}$ of $S_2$, and right boundary $\roughright_{S_2}$
of $S_2$.

 {\bf Steps:}

\qquad Let $U_1=U_2=\emptyset$.

\qquad Let $L=n^{2/5}$.

\qquad Repeat

\qquad\qquad Let $U_1=$Point-Selection$(S_1,L, [1,|S_1|])$.

\qquad\qquad Let $U_2=$Point-Selection$(S_2,L, [1,|S_2|])$.

\qquad\qquad Let
$(L_{S_1},R_{S_1},L_{S_2},R_{S_2})=$Collision-Detection$(S_1, U_1,
S_2, U_2)$.

\qquad\qquad If ($L_{S_1}\not=\emptyset$ and
$R_{S_1}\not=\emptyset$)

\qquad\qquad Then Goto H.

\qquad\qquad Else $L=L/2$.

\qquad Until $(L< {1\over 2}\thresholdL)$

\qquad H: Return Improve-Boundaries$(S_1,L_{S_l}, R_{S_1}, S_2,
L_{S_2}, R_{S_2}, 2L)$.

%\qquad Until $L< d_0\log n$.

{\bf End of Initial-Boundaries}

\vskip 10pt

{\bf Motif-Length-And-Boundaries($Z_1$)}

{\bf Input:} $Z_1=\{S_{1}',\cdots,S_{2k_1}'\}$ is a set of
independent $\Theta(n,G,\alpha)$ sequences.

{\bf Steps:}

For $i=1$ to $k_1$

\qquad let
$(\roughleft_{S_{2i-1}'},\roughright_{S_{2i}'})$=Initial-Boundaries$(S_{2i-1}',S_{2i}')$.

Let $L_1$ be the median of
$\cup_{i=1}^{k_1}\{(\roughright_{S_{2i-1}'}-\roughleft_{S_{2i-1}'})\}$.

Return $L_1$.

 {\bf End of Motif-Length-And-Boundaries}

\subsubsection{\phasetwo~ of \algma}

After a set of motif candidates $W$ is produced from \phaseone~ of
algorithm \algmnam, we use this set to match with another set of
input sequences to recover the hidden motif by voting.

\vskip 10pt

{\bf Match$(G_l,G_r, S_i)$}

{\bf Input:} a motif left part $G_l$  (which can be derived from the
rough left boundary of an input sequence $S$), a motif right part
$G_r$, a sequence $S_i''$ from the group $Z_2$, with known rough
left and right boundaries.

{\bf Output:} either a rough motif region of $S_i''$, or an empty
sequence which means the failure in extracting the motif region
$\aleph(S_i'')$ of $S_i''$.

{\bf Steps:}

\qquad Find a position $a$ in $S_i''$ with $\roughleft_{S_i''}\le
a\le \roughleft_{S_i''}+(v+u_2)$.
%, L_0(S_i'',\roughleft_{S_i''}, \roughright_{S_i''})$

\qquad such that $G_l$ and $S_i''[a,a+|G_l|-1]$ are left matched
(see Definition~\ref{match-def}).

\qquad Find a position $b$ in $S_i''$ with
$\roughright_{S_i''}-(v+u_2),
%L_0(S_i'',\roughleft_{S_i''},
\roughright_{S_i''})\le b\le \roughright_{S_i''}$

\qquad such that $G_r$ and $S_i''[b-|G_r|+1,b]$ are right matched
(see Definition~\ref{match-def}).

\qquad If both $a$ and $b$ are found

\qquad Then output $S_i''[a,b]$

\qquad Else output $\emptyset$ (empty string).

{\bf End of Match}

\vskip 10pt

{\bf Extract$(G_l, G_r, Z_2$):}

Input $Z_2=\{S_1'',S_2'',\cdots, S_{k_2}''\}$ and their rough left
boundaries and rough right boundaries.
%in the begining of Section~\ref{algorithm-sec}.

{\bf Steps:}

\qquad For each $S_i''$ with $i=1,2,\cdots, k_2$,

\qquad\qquad let $G_i''=\match(G_l, G_r, S_i'')$.

\qquad  Return $(G_1'', G_2'',\cdots, G_{k_2}'')$.

{\bf End of Extract}

\vskip 10pt

The following is \phasetwo~ of algorithm \algmnam. It  extracts the
motif regions of another set $Z_2$ of input sequences.

{\bf \phasetwo($S', Z_2$):}

Input $S'$ is an input sequence with known $\roughleft_{S'}$ and
$\roughright_{S'}$ for its rough left and right boundaries
respectively, and $Z_2=\{S_1'',\cdots, S_{k_2}''\}$ is a set of
input sequences.
%in the begining of Section~\ref{algorithm-sec}.

{\bf Steps:}

\qquad For each subsequence $G_l=S'[a, a+d_0\log n-1]$ with $a\in
[\roughleft_{S'}, \roughleft_{S'}+(v+u_1)]$

\qquad and $G_r=S'[b-d_0\log n+1,b]$ with $b\in
[\roughright_{S'}-(v+u_1), \roughright_{S'}]$

\qquad\qquad let $(G_1'', G_2'',\cdots, G_{k_2}'')$ be the output
from Extract$(G_l, G_r, Z_2$).

\qquad\qquad  If  the number of empty sequences in $G_1'',\cdots,
G_{k_2}''$ is at most $(Q_0+(R+2\epsilon))k_2$

\qquad\qquad Then return $(G_1'', G_2'',\cdots, G_{k_2}'')$.

\qquad Return $\emptyset$ (empty set).

{\bf End of \phasetwo~}

\subsubsection{\phasethree}

The function Vote$(G_1'', G_2'',\cdots, G_{k_2}'')$ is to generate
another sequence $G'$ by voting, where $G'[i]$ is the most frequent
character among $G_1''[i], G_2''[i],\cdots, G_{k_2}''[i]$.

 \vskip 10pt

{\bf \phasethree$(G_1'', G_2'',\cdots, G_{k_2}'')$}

{\bf Input:} $\Theta(n,G,\alpha)$ sequences $G_1'', G_2'',\cdots,
G_{k_2}''$ of the same length $m$.

{\bf Output:} a sequence $G'$, which is derived by voting on every
position of the input sequences.

 {\bf Steps:}

%\qquad Let $m$ be the most frequent length among $|G_1''|,\cdots, |G_{k_2}''|$.

\qquad For each $j=1,\cdots, m$

\qquad\qquad let $a_j$ be the most frequent character among
$G_1''[j],\cdots, G_{k_2}''[j]$.

\qquad Return $G'=a_1\cdots a_{m}$.

{\bf End of Vote}

\subsubsection{Entire \algma}

 The entire algorithm is described below. We maintain the size of $Z_1$ and $Z_2$ to be
roughly equal, which implies
\begin{eqnarray}
|Z_1|=\Theta(|Z_2|) \label{Z1-Z2-eqn}
\end{eqnarray}

{\bf \algma(Z)}

Input: $Z=Z_1\cup Z_2$, where $Z_1=\{S_1',\cdots, S_{2k_1}'\}$ and
$Z_2=\{S_1'',\cdots, S_{k_2}''\}$ are two sets of input sequences.

{\bf Steps:}

{\bf Preprocessing Part:}

For each $S\in Z_1\cup Z_2$, let $\roughleft_S=\roughright_S=0$ (the
two boundaries are unknown).

$l_{motif}=$MotifLengthAndBoundaries($Z_1$).

Let $L=l_{motif}/4$.

For $i=1$ to $k_1$,

\qquad let $U_{S_{2i-1}'}=$Point-Selection$(S_{2i-1}',L,
[\roughleft_{S_{2i-1}'}-2L, \roughleft_{S_{2i-1}'}+2L])\cup$

\qquad \qquad \qquad\hskip 13pt Point-Selection$(S_{2i-1}',L,
[\roughright_{S_{2i-1}'}-2L, \roughright_{S_{2i-1}'}+2L])$.

For $j=1$ to $k_2$

\qquad let $U_{S_j''}=$Point-Selection$(S_j'',L,[1,|S_j''|])$.

For $i=1$ to $k_1$

%\qquad All-Rough-Boundaries($S_{2i-1}', Z_2,   l_{motif}$).

\qquad For each $S_j''\in Z_2$

\qquad\qquad Let $(L_{S_{2i-1}'}, R_{S_{2i-1}'},
L_{S_j''},R_{S_j''})=$Collision-Detection$(S_{2i-1}', U_{S_{2i-1}'},
S_j'', U_{S_j''})$.

\qquad\qquad Let $(L_{S_{2i-1}'}, R_{S_{2i-1}'},
\roughleft_{S_j''},\roughright_{S_j''})$=

\qquad\qquad\qquad Improve-Boundaries$(S_{2i-1}',L_{S_{2i-1}'},
R_{S_{2i-1}'}, S_j'', L_{S_j''}, R_{S_j''}, 2L)$.

\qquad Let $(G_1'', G_2'',\cdots, G_{k_2}'')$ be the output from
\phasetwo($S_{2i-1}', Z_2$).

\qquad If $(G_1'', G_2'',\cdots, G_{k_2}'')$ is not empty

\qquad Then go to Voting Part.

{\bf Voting Part:}

\qquad Return \phasethree($G_1'', G_2'',\cdots, G_{k_2}''$).

{\bf End of \algma}

%\vskip 10pt

%The preprocessing part is the combination of \phaseone~and
%\phasetwo~for deriving $(G_1'', G_2'',\cdots, G_{k_2}'')$ to be used
%for voting to recover the motif. The second part is voting that is
%described by the executing \phasethree($G_1'', G_2'',\cdots,
%G_{k_2}''$).

%P.S. For right match, we just use the existing left
%match(reverse(S), reverse(S')) to deal with right match(S,S').

\section{Analysis of Algorithm}\label{analysis-sec}
 The correctness
of the algorithm will be proved via a series of Lemmas in
Sections~\ref{sec-analysis} and~\ref{sec-analysis2}.
Section~\ref{sec-analysis} is for \phaseone~ and
Section~\ref{sec-analysis2} is for \phasetwo. Furthermore,
Section~\ref{sec-analysis2} gives some lemma for the two randomized
algorithms and Section~\ref{deterministic-sec} gives the proof for
the deterministic algorithm.

\subsection{Review of Some Classical Results in Probability}

Some well known results in classical probability theory are listed.
The readers can skip this section if they understand them well. The
inclusion of these results make the paper self-contained.
\begin{itemize}
\item
 For a list of events $A_1,\cdots, A_m$,
$\prob[A_1\cup A_2\cup \cdots \cup A_m]\le
\prob[A_1]+\prob[A_2]+\cdots+\prob[A_m]$.
\item
For two independent events $A$ and $B$, $\prob[A\cap
B]=\prob[A]\prob[B]$.
\item
For a random variable $Y$, $\prob[Y\ge t]\le {E[Y]\over t}$ for all
positive real number $t$. This is called Markov inequality.
\end{itemize}

The analysis of our algorithm employs the Chernoff bound
\cite{MotwaniRaghavan00} and Corollary~\ref{chernoff-lemma-a} below,
which can be derived from it
%Theorem~\ref{chernoff-theorem}
(see~\cite{LiMaWang99}).

\begin{theorem}[\cite{MotwaniRaghavan00}]\label{chernoff-theorem}
Let $X_1,\cdots, X_n$ be $n$ independent random $0$-$1$ variables,
where $X_i$ takes $1$ with probability $p_i$. Let $X=\sum_{i=1}^n
X_i$, and $\mu=E[X]$. Then for any $\delta>0$,
\begin{enumerate}
\item $\Pr(X<(1-\delta)\mu)<e^{-{1\over 2}\mu\delta^2}$, and
\item
$\Pr(X>(1+\delta)\mu)<\left[{e^{\delta}\over
(1+\delta)^{(1+\delta)}}\right]^{\mu}$.
\end{enumerate}
\end{theorem}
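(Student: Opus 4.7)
The approach is the standard exponential moment method: combine Markov's inequality with a moment generating function, then optimize the free parameter. I would dispatch the upper tail (item 2) first. For any $t>0$, Markov's inequality gives
\begin{equation*}
\Pr(X > (1+\delta)\mu) = \Pr(e^{tX} > e^{t(1+\delta)\mu}) \le \frac{E[e^{tX}]}{e^{t(1+\delta)\mu}}.
\end{equation*}
Independence of the $X_i$ lets me factor $E[e^{tX}] = \prod_{i=1}^n E[e^{tX_i}] = \prod_{i=1}^n (1 + p_i(e^t-1))$, and the inequality $1+x \le e^x$ gives $E[e^{tX}] \le e^{\mu(e^t-1)}$. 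Substituting and then minimizing over $t$ by setting $t = \ln(1+\delta)$ produces exactly $\left[e^\delta / (1+\delta)^{1+\delta}\right]^\mu$, which is item 2.

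For the lower tail (item 1) I would run the dual argument with $e^{-tX}$ for $t>0$, obtaining
\begin{equation*}
\Pr(X < (1-\delta)\mu) \le \frac{E[e^{-tX}]}{e^{-t(1-\delta)\mu}} \le \frac{e^{\mu(e^{-t}-1)}}{e^{-t(1-\delta)\mu}}.
\end{equation*}
Optimizing with $t = -\ln(1-\delta)$ yields the intermediate bound $\left[e^{-\delta}/(1-\delta)^{1-\delta}\right]^\mu$. Getting to this intermediate form uses only the same ingredients as the upper-tail proof; the remaining, and only nontrivial, step is to massage it into the cleaner target $e^{-\mu\delta^2/2}$.

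The hard part is therefore the scalar inequality $-\delta - (1-\delta)\ln(1-\delta) \le -\delta^2/2$ on $(0,1)$. I would prove it by Taylor expansion: $\ln(1-\delta) = -\sum_{k\ge 1}\delta^k/k$, so multiplying by $(1-\delta)$ and collecting terms gives $(1-\delta)\ln(1-\delta) = -\delta + \delta^2/2 + \sum_{k\ge 3}\delta^k/(k(k-1))$. Hence $-\delta - (1-\delta)\ln(1-\delta) = -\delta^2/2 - \sum_{k\ge 3}\delta^k/(k(k-1))$, and each tail summand is nonnegative on $(0,1)$, so the inequality follows. (Alternatively, set $f(\delta) = -\delta - (1-\delta)\ln(1-\delta) + \delta^2/2$, verify $f(0)=0$, and check $f'(\delta) = \ln(1-\delta) + \delta \le 0$.) Both routes are short; the only care is in tracking signs through the series manipulation, which is where the proof is most error-prone.
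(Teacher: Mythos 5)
Your proof is correct and takes essentially the same route as the paper: the paper cites this theorem from Motwani--Raghavan without reproving it, but its own nearby Chernoff-bound variants are established by exactly your argument (Markov's inequality applied to $e^{\pm tX}$, factoring by independence, $1+x\le e^x$, optimizing $t$), including the same final scalar step $(1-\delta)^{1-\delta}\ge e^{-\delta+\delta^2/2}$ for the lower tail, which you verify correctly by Taylor expansion. No gaps.
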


We follow the proof of Theorem~\ref{chernoff-theorem} to make the
following version of Chernoff bound so that it can be used in our
algorithm analysis. %Its proof makes our entire paper self-contained.

\begin{theorem}\label{ourchernoff-theorem}
Let $X_1,\cdots, X_n$ be $n$ independent random $0$-$1$ variables,
where $X_i$ takes $1$ with probability at most $p$. Let
$X=\sum_{i=1}^n X_i$. Then for any $\delta>0$,
$\Pr(X>(1+\delta)pn)<\left[{e^{\delta}\over
(1+\delta)^{(1+\delta)}}\right]^{pn}$.
\end{theorem}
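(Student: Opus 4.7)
The plan is to reduce this variant to the classical Chernoff bound (Theorem~\ref{chernoff-theorem}) by a stochastic-domination / coupling argument, since the only difference is that each $X_i$ now takes value $1$ with probability \emph{at most} $p$ rather than with an exactly specified value $p_i$.

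First, I would construct, on a common probability space, independent Bernoulli($p$) random variables $Y_1,\ldots,Y_n$ coupled to $X_1,\ldots,X_n$ so that $X_i\le Y_i$ almost surely for every $i$. Concretely, let $U_1,\ldots,U_n$ be i.i.d.\ uniform on $[0,1]$ and independent of everything else; if $p_i\le p$ denotes the success probability of $X_i$, one can realize both $X_i = \mathbf{1}[U_i\le p_i]$ and $Y_i = \mathbf{1}[U_i\le p]$ from the same $U_i$. Because $p_i\le p$, this yields $X_i\le Y_i$ with probability $1$, so $X=\sum_i X_i \le \sum_i Y_i =: Y$ pointwise, and consequently
\[
\Pr[X>(1+\delta)pn]\ \le\ \Pr[Y>(1+\delta)pn].
\]

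Next I would apply Theorem~\ref{chernoff-theorem} (part~2) to $Y$. Since the $Y_i$ are independent $0$--$1$ variables with success probability exactly $p$, the mean is $\mu := E[Y] = pn$. Plugging into the upper-tail bound gives
\[
\Pr[Y>(1+\delta)pn]\ =\ \Pr[Y>(1+\delta)\mu]\ <\ \left[\frac{e^{\delta}}{(1+\delta)^{(1+\delta)}}\right]^{\mu}\ =\ \left[\frac{e^{\delta}}{(1+\delta)^{(1+\delta)}}\right]^{pn},
\]
which combined with the previous display yields the desired conclusion.

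There is no real obstacle here: the whole argument is a one-line coupling followed by a direct appeal to the existing Chernoff theorem, and does not require reopening the moment-generating-function calculation. If one preferred to avoid the coupling, an equivalent alternative plan would be to mimic the standard proof directly: apply Markov's inequality to $e^{tX}$ for $t>0$, use independence to write $E[e^{tX}]=\prod_i(1+p_i(e^t-1))$, bound each factor by $1+p(e^t-1)\le e^{p(e^t-1)}$ (valid since $e^t-1>0$ and $p_i\le p$), and finally optimize at $t=\ln(1+\delta)$. Either route gives exactly the stated tail bound.
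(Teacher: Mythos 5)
Your proposal is correct. Your primary route (the coupling) is genuinely different from the paper's proof: the paper does not invoke Theorem~\ref{chernoff-theorem} as a black box but instead reruns the moment-generating-function calculation from scratch --- it applies Markov's inequality to $e^{yX}$, bounds each factor via the monotonicity of $x\mapsto xe^{y}+(1-x)$ to get $E(e^{yX_i})\le pe^{y}+(1-p)=1+p(e^{y}-1)\le e^{p(e^{y}-1)}$, and then optimizes at $y=\ln(1+\delta)$. That is exactly the "alternative plan" you sketch in your last sentences, so your fallback is essentially the paper's argument verbatim. Your coupling route is sound: realizing $X_i=\mathbf{1}[U_i\le p_i]$ and $Y_i=\mathbf{1}[U_i\le p]$ from common uniforms gives $X\le Y$ pointwise, hence $\Pr[X>(1+\delta)pn]\le\Pr[Y>(1+\delta)pn]$, and since $E[Y]=pn$ the classical bound applies directly (and the strict inequality survives the chain). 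What the coupling buys is brevity and a clean separation of the "at most $p$" relaxation from the tail estimate itself; what the paper's direct computation buys is self-containedness --- it does not depend on the cited theorem being stated with exactly the right form of $\mu$ --- and it is the template the paper reuses for its lower-tail variant immediately afterward. Either proof is acceptable here.
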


\begin{proof} Let $y$ be an arbitrary positive real number. By the
definition of expectation, we have
$E(e^{yX_i})=\Pr(X_i=1)e^y+\Pr(X_i=0)$. Since the function
$f(x)=xe^y+(1-x)$ is increasing for all $y>0$ and $\Pr(X_i=1)\le p$,
we have $E(e^{yX_i})\le pe^y+(1-p)$. We have the following
inequalities:
\begin{eqnarray}
\Pr(X>(1+\delta)pn)&<&{E(e^{yX})\over e^{y(1+\delta)pn}}\label{markov-inequ}\\
&\le&{\prod_{i=1}^nE(e^{yX_i})\over e^{y(1+\delta)pn}}\label{independent1}\\
&=&{\prod_{i=1}^n(pe^y+1-p)\over e^{y(1+\delta)pn}}\label{independent2}\\
&=&{\prod_{i=1}^n(1+p(e^y-1))\over e^{y(1+\delta)pn}}\\
&\le&{\prod_{i=1}^ne^{p(e^y-1)}\over e^{y(1+\delta)pn}}\\
&=&{e^{(e^y-1)pn}\over e^{y(1+\delta)pn}}\\
&=&({e^{(e^y-1)}\over e^{y(1+\delta)}})^{pn}\label{final1}.
\end{eqnarray}
The inequality (\ref{markov-inequ}) is based on Markov inequality.
The transition from (\ref{independent1}) to (\ref{independent2}) is
due to the independence of those variables $X_1,\cdots, X_n$.

Since $({e^{(e^y-1)}\over e^{y(1+\delta)}})$ is minimal at
$y=\ln(1+\delta)$, we have
$\Pr(X>(1+\delta)pn)<\left[{e^{\delta}\over
(1+\delta)^{(1+\delta)}}\right]^{pn}$.
\end{proof}

%Define $g_1(\delta)=e^{-{1\over 2}\delta^2}$ and
Define $g(\delta)={e^{\delta}\over (1+\delta)^{(1+\delta)}}$.
%Define $g(\delta)=\max(g_1(\delta),g_2(\delta))$.
We note that $g(\delta)$ is always strictly less than $1$ for all
$\delta>0$, and $g(\delta)$ is fixed if $\delta$ is a constant. This
can be verified by checking that the function $f(x)=\ln {e^x\over
(1+x)^{(1+x)}}=x-(1+x)\ln (1+x)$ is decreasing and $f(0)=0$. This is
because $f'(x)=-\ln (1+x)$, which is less than $0$ for all $x>0$.

\begin{theorem}\label{ourchernoff-theorem}
Let $X_1,\cdots, X_n$ be $n$ independent random $0$-$1$ variables,
where $X_i$ takes $1$ with probability at most $p$. Let
$X=\sum_{i=1}^n X_i$. Then for any $\delta>0$,
$\Pr(X>(1-\delta)pn)<e^{-{1\over 2}^{pn}\delta^2}$.
\end{theorem}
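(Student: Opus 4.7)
The plan is to mirror the proof of the preceding theorem almost line for line, differing only in the sign conventions used in the exponential Markov device and in the final optimization step. With the same hypothesis $\Pr(X_i=1)\le p$, for any real parameter $y$ I can still bound
\begin{eqnarray*}
\Pr(X > (1-\delta)pn) \;\le\; \frac{E[e^{yX}]}{e^{y(1-\delta)pn}}
\end{eqnarray*}
by Markov's inequality. Using independence, then monotonicity of $f(x)=xe^y+(1-x)$ together with $\Pr(X_i=1)\le p$, and finally $1+z\le e^z$, I would factor the numerator exactly as in the previous proof to arrive at a bound of the form $\bigl(e^{e^y-1}/e^{y(1-\delta)}\bigr)^{pn}$, in direct analogy to the chain (\ref{markov-inequ})--(\ref{final1}) already displayed.

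Next I would minimize the right-hand side over $y$. The first-order condition $e^y=1-\delta$ (the natural counterpart of $e^y=1+\delta$ used above) yields the exponent $pn\bigl(-\delta-(1-\delta)\ln(1-\delta)\bigr)$, and the standard Taylor-type inequality $-\delta-(1-\delta)\ln(1-\delta)\le -\delta^2/2$, valid for $\delta\in(0,1)$, bounds this by $-pn\delta^2/2$. This is the Chernoff exponent already recorded as item (1) of Theorem~\ref{chernoff-theorem}, and it is the quantity in which the displayed $e^{-\frac{1}{2}^{pn}\delta^2}$ is intended to be read, i.e.\ as $e^{-\frac{1}{2}pn\delta^2}$, consistent with the form of the Chernoff bound used throughout the algorithm analysis.

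The main obstacle is calibrating the sign and direction conventions so that the Markov step genuinely controls the stated event. Under the ``at most $p$'' hypothesis alone, the mean of $X$ is bounded only above by $pn$, so the event $\{X > (1-\delta)pn\}$ is not an exponentially rare tail event in the usual sense; the only substantive exponential control one can extract from the template of the preceding proof parallels its upper-tail analysis, with the optimization step choosing $y=\ln(1-\delta)$ in place of $y=\ln(1+\delta)$. I would therefore present the derivation in the same order as the preceding proof so that each estimate corresponds to its counterpart, verify the Taylor-type bound on $-\delta-(1-\delta)\ln(1-\delta)$, and flag the exponent in the displayed bound as the intended Chernoff quantity $\tfrac{1}{2}pn\delta^2$, which is what the subsequent use of Corollary~\ref{chernoff-lemma-a} requires.
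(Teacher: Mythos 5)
There is a genuine gap in the central step. The statement (read through its typos) is the lower-tail Chernoff bound $\Pr(X<(1-\delta)pn)<e^{-\frac{1}{2}pn\delta^2}$, and the paper proves it by first flipping the event: $\Pr[X<(1-\delta)pn]=\Pr[e^{-yX}>e^{-y(1-\delta)pn}]$ for $y>0$, then applying Markov's inequality to the random variable $e^{-yX}$, bounding each factor by $E(e^{-yX_i})\le pe^{-y}+(1-p)\le e^{p(e^{-y}-1)}$, choosing $y=\ln\frac{1}{1-\delta}$, and finishing with $(1-\delta)^{1-\delta}>e^{-\delta+\delta^2/2}$. Your plan never performs that flip. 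You keep the upper-tail form $\Pr(X>a)\le E[e^{yX}]/e^{ya}$ ``for any real $y$'' and then optimize at $y=\ln(1-\delta)<0$; but for negative $y$ this Markov step is simply false, since $X>a$ corresponds to $e^{yX}<e^{ya}$ (a constant $X=a+1$ already violates the claimed inequality). Equivalently, restricted to the legitimate range $y>0$ the quantity $\bigl(e^{e^y-1}/e^{y(1-\delta)}\bigr)^{pn}$ has its infimum at $y\to 0^+$, giving only the trivial bound $1$, so mirroring the upper-tail chain (\ref{markov-inequ})--(\ref{final1}) cannot produce the exponent $-\delta-(1-\delta)\ln(1-\delta)$ you want. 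You correctly flag a ``calibration'' problem and correctly identify the Taylor-type inequality and the intended exponent $\frac{1}{2}pn\delta^2$, but flagging the obstacle is not the same as resolving it: the resolution is precisely to restate the event as a lower-tail event and run Markov on $e^{-yX}$ with $y>0$, as the paper does.

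A second, related direction error: you invoke monotonicity of $f(x)=xe^y+(1-x)$ together with $\Pr(X_i=1)\le p$ to bound the moment generating function, but with the negative $y$ your optimization demands, $f$ is decreasing in $x$, so ``at most $p$'' gives the inequality in the wrong direction. In fact the lower-tail bound is not true under the ``at most $p$'' hypothesis at all (take all $\Pr(X_i=1)=0$, so $X\equiv 0<(1-\delta)pn$ with probability $1$); the correct hypothesis is $\Pr(X_i=1)\ge p$, which is what the paper's proof implicitly uses when it writes $E(e^{-yX_i})\le pe^{-y}+(1-p)$ and what Corollary~\ref{chernoff-lemma-a}(ii) states explicitly. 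A complete proof must both switch to the lower-tail event with $e^{-yX}$, $y>0$, and note that the monotonicity step requires the ``at least $p$'' direction.
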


\begin{proof}
$\prob[X<(1-\delta)pn]=\prob[-X>-(1-\delta)pn]=\prob[e^{-yX}>e^{-y(1-\delta)pn}]$
for each real number $y$. Applying Markov inequality, we have
\begin{eqnarray}
\prob[X<(1-\delta)pn]&<&{\prod_{i=1}^nE(e^{-yX_i}]\over
e^{-y(1-\delta)np}}\label{chernoff-1}\\
&<&{e^{(e^{-y}-1)np}\over e^{-y(1-\delta)np}}\label{chernoff-2}\\
&<&\left( {e^{-\delta}\over
(1-\delta)^{1-\delta}}\right)^{pn}\label{chernoff-3}\\
&<&e^{-{1\over 2}^{pn}\delta^2}.\label{chernoff-4}
\end{eqnarray}
The transition from~(\ref{chernoff-2}) to (\ref{chernoff-2}) is to
let $t=\ln {1\over 1-\delta}$. The transition
from~(\ref{chernoff-3}) to (\ref{chernoff-4}) follows from the fact
$(1-\delta)^{1-\delta}>e^{-\delta+\delta^2/2}$.
\end{proof}

\begin{corollary}[\cite{LiMaWang99}]\label{chernoff-lemma-a}
Let $X_1,\cdots, X_n$ be $n$ independent random $0$-$1$ variables
and $X=\sum_{i=1}^n X_i$.

 i. If $X_i$ takes $1$ with probability at most $p$, then for any ${1\over 3}>\epsilon>0$, $\Pr(X>pn+\epsilon
n)<e^{-{1\over 3}n\epsilon^2}$.
%\end{corollary}

%\begin{corollary}\label{chernoff-lemma-b}
ii. If  $X_i$ takes $1$ with probability at least $p$, then for any
$\epsilon>0$, $\Pr(X<pn-\epsilon n)<e^{-{1\over 2}n\epsilon^2}$.
\end{corollary}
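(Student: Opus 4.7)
The plan is to derive both parts of Corollary~\ref{chernoff-lemma-a} from the Chernoff-type theorems already established in the excerpt by converting multiplicative deviations into additive ones, i.e.\ by choosing the multiplicative parameter $\delta$ equal to $\epsilon/p$ so that $(1\pm\delta)pn = pn\pm\epsilon n$.

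For part (i), I would invoke Theorem~\ref{ourchernoff-theorem} with $\delta = \epsilon/p$, which immediately yields $\Pr(X > pn+\epsilon n) < g(\epsilon/p)^{pn}$ for $g(y)=e^y/(1+y)^{1+y}$. The remaining work is purely analytic: show that $g(\epsilon/p)^{pn} \le e^{-n\epsilon^2/3}$. For this I would apply the standard Bennet-type inequality
\[
(1+y)\ln(1+y) - y \;\ge\; \frac{y^2}{2+2y/3} \qquad (y\ge 0),
\]
which is verified by noting that both sides agree at $y=0$ and comparing derivatives. Plugging $y=\epsilon/p$ and multiplying by $pn$ gives $-pn\ln g(\epsilon/p) \ge n\epsilon^2/(2p+2\epsilon/3)$. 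Since $p\le 1$ and $\epsilon<1/3$, we have $2p+2\epsilon/3 \le 2 + 2/9 < 3$, so the exponent is at most $-n\epsilon^2/3$, finishing (i).

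For part (ii), I would couple each $X_i$ with a Bernoulli$(p)$ random variable $Y_i$ satisfying $Y_i \le X_i$ (possible because $X_i$ takes $1$ with probability at least $p$). Setting $Y=\sum_{i=1}^n Y_i$, stochastic domination gives $\Pr(X<pn-\epsilon n)\le \Pr(Y<pn-\epsilon n)$. If $\epsilon \ge p$, the right-hand side is $0$ because $Y\ge 0$, so the bound holds trivially. Otherwise $\delta:=\epsilon/p\in (0,1]$, and I apply the first inequality in Theorem~\ref{chernoff-theorem} to $Y$ (which has $E[Y]=pn$): $\Pr(Y<(1-\delta)pn) < e^{-pn\delta^2/2} = e^{-n\epsilon^2/(2p)} \le e^{-n\epsilon^2/2}$, using $p\le 1$ in the final step.

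The main obstacle is the parameter conversion in part (i) when $\delta = \epsilon/p$ is \emph{not} small: if $p\ll \epsilon$, the standard multiplicative Chernoff form (requiring $\delta\le 1$) does not apply directly, and a naive estimate of $g(\delta)^{pn}$ collapses. One genuinely needs the uniform-in-$y$ inequality above for $(1+y)\ln(1+y)-y$, combined with the hypothesis $\epsilon<1/3$, to guarantee the constant $1/3$ in the final exponent uniformly across all $p\in(0,1]$.
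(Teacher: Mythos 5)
Your proposal is correct, and it follows the same overall skeleton as the paper's proof (set $\delta=\epsilon/p$ in the multiplicative Chernoff bound and convert to an additive deviation), but the two analytic cores differ. For part (i) the paper rewrites the bound as $\bigl[e/(1+\tfrac{\epsilon}{p})^{1+p/\epsilon}\bigr]^{\epsilon n}$, uses monotonicity of $y\mapsto(1+y)^{1+1/y}$ to reduce to the worst case $p=1$, and then applies the Taylor estimate $\ln(1+\epsilon)\ge\epsilon-\epsilon^2/2$ together with $\epsilon<1/3$ to get $(1+\tfrac1\epsilon)\ln(1+\epsilon)>1+\tfrac\epsilon3$; you instead invoke the Bernstein-type inequality $(1+y)\ln(1+y)-y\ge y^2/(2+2y/3)$ uniformly in $y=\epsilon/p$ and then bound $2p+2\epsilon/3<3$. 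Both yield the constant $1/3$, but your route avoids the monotonicity reduction and is robust for all $y$, at the cost of needing the (standard, and indeed provable by the two-level derivative comparison you sketch, since $(y+3)^3\ge 27(1+y)$) Bennett--Bernstein inequality. For part (ii) you are actually more careful than the paper: the paper simply cites the lower-tail bound of Theorem~\ref{chernoff-theorem} (its proof even mislabels which part is being proved), silently treating the "probability at least $p$" hypothesis as if the mean were exactly $pn$, whereas your coupling with i.i.d.\ Bernoulli$(p)$ variables $Y_i\le X_i$ plus the trivial case $\epsilon\ge p$ makes the stochastic-domination step explicit and closes that small gap.
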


%\begin{lemma}[\cite{LiMaWang02}]\label{chernoff-lemma}
%Let $X_1,\cdots, X_n$ be $n$ independent random $0,1$ variables,
%where $X_i$ takes $1$ with probability $p$. Let $X=\sum_{i=1}^n
%X_i$. Then for any ${1\over 3}>\epsilon>0$, (1) $Pr(X<pn-\epsilon
%n)<e^{-{1\over 2}n\epsilon^2}$, and (2)$Pr(X>pn+\epsilon
%n)<e^{-{1\over 3}n\epsilon^2}$.
%\end{lemma}

\begin{proof}
For $X=\sum_{i=1}^n$, $\mu=E(X)=\sum_{i=1}^nE(X_i)=pn$. Let
$\delta={\epsilon\over p}$. (1) follows from
Theorem~\ref{chernoff-theorem}.  By Taylor theorem,
$\ln(1+\epsilon)\ge \epsilon-{\epsilon^2\over 2}$. We have that
$(1+{1\over \epsilon})\ln(1+\epsilon)\ge(1+{1\over
\epsilon})(\epsilon-{\epsilon^2\over 2})=1+{\epsilon\over
2}-{\epsilon^2\over 2}>1+{\epsilon\over 3}$. Thus, $ {e\over
(1+\epsilon)^{(1+{1\over \epsilon})}}<e^{-{\epsilon\over 3}}$. Since
$pn+\epsilon n=(1+\delta)\mu$ and the function $(1+y)^{1\over y}$ is
increasing for $y>0$, $Pr(X>pn+\epsilon
n)=Pr(X>(1+\delta)\mu)<\left[ {e^{\epsilon\over p}\over
(1+{\epsilon\over p})^{(1+{\epsilon\over p})}}\right]^{pn}=\left[
{e\over (1+{\epsilon\over p})^{(1+{p\over
\epsilon})}}\right]^{\epsilon n}\le \left[ {e\over
(1+\epsilon)^{(1+{1\over \epsilon})}}\right]^{\epsilon n}\le
e^{-{\epsilon^2n\over 3}}.$ Thus (ii) is proved.
\end{proof}

\subsection{Analysis of \phaseone~ of  \algma}\label{sec-analysis}

%\begin{definition}\scrod
%\begin{itemize}
%\item
%Assume that \algtype~is either \sublinear~or \randomized. Let $A$ be
%a set of $L$ elements, $B$ be an another set $L$ elements, and $C$
%be a subset in $A\cap B$ with $|C|\ge {L\over 2}$. Let $U_1$ be a
%subset of $M(L)$ random elements of $A$, and $U_2$ be a subset of
%$M(L)$ random elements of $B$. Define $q_{\rm
%selection}(L)=\prob(U_1\cap U_2\cap C=\emptyset)$.
%\item
%Assume that \algtype~is \deterministic. Let $q_{\rm
%selection}(L)=0$.
%\end{itemize}
%\end{definition}

%\begin{lemma}\label{collision-lemma}
%Assume that CollisionDetection($S_1, U_1, S_2, U_2$) returns
% the result in time $T(n,||U_1||+||U_2||)$ time. Then
%InitialBoundary$(S_1,S_2)$ runs in time $O(\sum_{i=1}^{i_0}T(n,
%{n\over 2^in^{2/5}})(\log n)^2)$, where $i_0$ is the least $j$ such
%that ${n\over 2^jn^{2/5}}\le |G|$.
%\end{lemma}

%\begin{proof} It follows from the implementation of Initial-Boundaries() and the
%way it calls the function Collision-Detection().
%\end{proof}

%\begin{lemma}\label{no-collision-lemma}
%with probability at most ${1\over 2^xn^{3}}$, there is a pair of
%intervals such that one of the two intervals is has distance more
%than $L+???\log$ with the left boundary,   and there is a collision
%between two points of them.
%\end{lemma}

%\begin{proof}
%If a point has at least $???\log$ distance to the left boundary, the
%probability is at most $e^{\epsilon^2 ???\log n\over 3}{1\over
%2^{???x}n^{100???}}$ that a collision will happen by the Chernoff
%bound. This is because a random character at each position.
%\end{proof}

Lemma~\ref{base-probability-lemma} shows that with only small
probability, a sequence can match a random sequence. It will be used
to prove that when two substrings in two different
$\Theta(n,G,\alpha)$-sequences are similar, they are unlikely not to
coincide with  the motif regions in the two
$\Theta(n,G,\alpha)$-sequences, respectively.

\begin{lemma}\label{base-probability-lemma}
Assume that $X_1$ and $X_2$ are two independent sequences of the
same length and that every character of $X_2$ is a random character
from $\Sigma$. Then
\begin{enumerate}
\item
 if $1\le |X_1|=|X_2|< v$, then the probability that $X_1$ and
$X_2$ are matched is $\le {1\over t^{|X_1|}}$ ($t=||\Sigma||$); and
\item
 the probability for $\diff(X_1, X_2)\le \beta$ is at most $ e^{-{\epsilon^2|X_1|\over 3}}$.
\end{enumerate}
\end{lemma}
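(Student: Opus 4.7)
The plan is to dispatch the two parts separately by direct probability estimates, leaning only on the uniform independent randomness of the characters of $X_2$ and its independence from $X_1$.

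For part (i), with $\ell = |X_1| = |X_2| < v$, the left-matched requirement $X_1[i] = X_2[i]$ for $1 \le i \le v-1$ already forces coincidence at every one of the $\ell$ positions of $X_1$; the right-matched condition and the $\diff$-clause can only tighten this, not relax it. So being matched entails exact character-by-character agreement. Since the characters of $X_2$ are independent and uniform on $\Sigma$, the events $X_1[i] = X_2[i]$ are independent with probability $1/t$ each, and multiplying gives
\[
\Pr[X_1 \text{ and } X_2 \text{ matched}] \le \prod_{i=1}^{\ell} \Pr[X_1[i] = X_2[i]] = \frac{1}{t^{\ell}},
\]
which is precisely the claim.

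For part (ii), I introduce the indicator $Y_i = \mathbf{1}[X_1[i] = X_2[i]]$ and set $Y = \sum_{i=1}^{|X_1|} Y_i$. The event $\diff(X_1, X_2) \le \beta$ is equivalent to $Y \ge (1-\beta)|X_1|$, so I want an upper tail bound on $Y$. Because $X_2$'s characters are independent and uniform while $X_1$ is independent of $X_2$, the $Y_i$ are independent $0$-$1$ variables with $\Pr[Y_i = 1] = 1/t$, hence $E[Y] = |X_1|/t$. The key parameter check is that $1 - \beta \ge 1/t + \epsilon$; using $\beta = 2\alpha + 2\epsilon$ with $\alpha \le \alpha_0 < \rho_0$ together with inequality~(\ref{epsilon-set-ineqn}), I obtain
\[
\beta \le 2\alpha_0 + 2\epsilon < 2\rho_0 + 2\epsilon < \frac{t-1}{t} - \epsilon,
\]
which is exactly the required bound. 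Applying Corollary~\ref{chernoff-lemma-a}(i) with deviation $\epsilon$ (which is less than $1/3$ by inequality~(\ref{epsilon-set-ineqn})) then yields
\[
\Pr\bigl[Y > |X_1|/t + \epsilon |X_1|\bigr] \le e^{-\epsilon^2 |X_1|/3},
\]
and this dominates $\Pr[Y \ge (1-\beta)|X_1|]$.

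The whole argument is elementary: the only ``obstacle'' is the parameter chasing through Definition~\ref{param-def} to confirm $\beta < (t-1)/t - \epsilon$, which is pure bookkeeping rather than any analytic difficulty. No structural property of the motif-recovery problem is invoked, which is the point --- this lemma will serve as a workhorse later to bound the probability that a random window is accidentally similar enough to a motif region to fool the collision-detection subroutine.
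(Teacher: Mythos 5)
Your proof is correct and follows essentially the same route as the paper's: part (i) by independence and uniformity of each character of $X_2$, and part (ii) by applying Corollary~\ref{chernoff-lemma-a} to the count of agreeing positions, with the parameter check $1-\beta-\frac{1}{t}\ge\epsilon$ coming from inequality~(\ref{epsilon-set-ineqn}) and $\alpha\le\alpha_0<\rho_0$. The only difference is cosmetic: the paper invokes the Chernoff corollary with deviation $1-\beta-\frac{1}{t}$ and then bounds that by $\epsilon$, whereas you use deviation $\epsilon$ directly and note the event inclusion, which amounts to the same estimate.
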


\begin{proof} The two statements are proved as follows.

Statement i: For every character $X_2[j]$ with $1\le j<v$, the
probability is ${1\over t}$ that $X_2[j]=X_1[j]$.

\hskip 35pt Statement ii: For every character $X_2[j]$ with $1\le
j\le |X_2|$, the probability is ${1\over t}$ for $X_2[j]$ to equal
$X_1[j]$. If $\diff(X_1, X_2)\le \beta$, the two sequences $X_1$ and
$X_2$ are identical at least $(1-\beta)|X_1|$ positions, but the
expected number of positions where the two sequences are identical
is ${1\over t}|X_1|$. The probability for $\diff(X_1, X_2)\le \beta$
is at most $e^{-{(1-\beta-{1\over t})^2\over 3}|X_1|}\le
e^{-{\epsilon^2\over 3}|X_1|}$ by Corollary~\ref{chernoff-lemma-a},
and Definitions~\ref{param-def} and~\ref{match-def}.
%Section~\ref{parameters-subsect}).
\end{proof}

Lemma~\ref{diff-motif-lemma} shows that with small probability, an
input $\Theta_{\alpha}(n,G)$ sequence contains motif region that has
many mutations.

\begin{lemma}\label{diff-motif-lemma}
With probability at most ${c^y\over 1-c}$,  a $\Theta_{\alpha}(n,G)$
sequence $S$ changes more than ${\beta\over 2}t$ characters in its
first left $t$ motif region $\aleph(S)$ for some $t$ with $y\le t\le
|G|$, where $c=e^{-{\epsilon^2\over 3}}$.
\end{lemma}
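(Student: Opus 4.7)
The plan is to apply a Chernoff-type concentration bound for each fixed prefix length $t$ and then sum the resulting bounds over $t = y, y+1, \ldots, |G|$ via a union bound, exploiting the geometric decay built into the constant $c = e^{-\epsilon^2/3}$.

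First I would unpack the parameter $\beta$. Recall from item~(\ref{beta-def}) of Definition~\ref{param-def} that $\beta = 2\alpha + 2\epsilon$, so $\beta/2 = \alpha + \epsilon$. Thus the event to be bounded is that, in the first $t$ positions of $\aleph(S)$, strictly more than $(\alpha + \epsilon)t$ positions are mutated, for some $t$ in the range $y \le t \le |G|$.

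Next, I would fix an arbitrary $t$ and let $X_1, \ldots, X_t$ be the indicator variables of mutation at positions $1, \ldots, t$ of the motif region. By the definition of a $\Theta_\alpha(n,G)$ sequence, the $X_i$ are independent and each takes value $1$ with probability at most $\alpha$. Applying Corollary~\ref{chernoff-lemma-a}(i) with $p = \alpha$, $n = t$, and the given $\epsilon$ (which lies in $(0, 1/3)$ by~(\ref{epsilon-set-ineqn})) yields
\[
\Pr\!\left[\sum_{i=1}^t X_i > (\alpha + \epsilon)t\right] < e^{-\epsilon^2 t/3} = c^t.
\]

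Finally I would union-bound over $t$. Since the bad event in the lemma is the union over $t \in \{y, y+1, \ldots, |G|\}$ of the individual events just bounded, its probability is at most
\[
\sum_{t=y}^{|G|} c^t \;\le\; \sum_{t=y}^{\infty} c^t \;=\; \frac{c^y}{1-c},
\]
where the geometric series converges because $c \in (0,1)$. This yields exactly the claimed bound, so there is no real obstacle; the only point to be careful about is matching the Chernoff hypothesis (probability \emph{at most} $\alpha$, not exactly $\alpha$), which is precisely the version stated in Corollary~\ref{chernoff-lemma-a}(i), and ensuring $\epsilon < 1/3$ so that this corollary applies.
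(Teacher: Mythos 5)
Your proposal is correct and follows essentially the same route as the paper's own proof: for each prefix length $t$, Corollary~\ref{chernoff-lemma-a} bounds the probability of more than $(\alpha+\epsilon)t = \frac{\beta}{2}t$ mutations by $e^{-\epsilon^2 t/3} = c^t$, and summing the geometric series from $t=y$ to infinity gives $\frac{c^y}{1-c}$. Your write-up merely makes explicit the details (unpacking $\beta$, the indicator variables, and the union bound) that the paper leaves implicit.
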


\begin{proof}
Every character in the $\aleph(S)$ region has probability at most
$\alpha$ to mutate. We know that $|\aleph(S)|=|G|\ge d$.  By
Corollary~\ref{chernoff-lemma-a}, with probability at most
$e^{-{\epsilon^2\over 3}t}$, a sequence $S$ in $Z_1$ has more than
$(\alpha+\epsilon)t$ mutations (recall the setting for $\beta$ at
Definition~\ref{match-def}) among the first left $t$ characters. The
total is $\sum_{t=y}^{\infty} e^{-{\epsilon^2\over 3}t}={c^y\over
1-c}$.
\end{proof}

Lemma~\ref{v+u-lemma2} shows that Improve-Boundaries() has good
chance to improve the accuracy of rough motif boundaries.

\begin{lemma}\label{v+u-lemma2}
Assume that $\Theta_{\alpha}(n,G)$ sequence $S_i$ has $L_{S_i}\in
[\LB(S_i)-L,\LB(S_i)+L]$
 and $R_{S_i}\in [\RB(S_i)-L,\RB(S_i)+L]$ for $i=1,2$. Then
 for $(\roughleft_{S_1},\roughright_{S_1},\roughleft_{S_2},$ $\roughright_{S_2})$=Improve-Boundaries$(S_1,L_{S_1},R_{S_1},S_2,L_{S_2},R_{S_2},L)$,
 we  have the following two facts:
\begin{enumerate}
\item
 With probablity at most
${2c^{v}\over 1-c}+{2(v+u)c^{v+u}\over (1-c)^2}+{1\over 5\cdot 2^x
n}$, $\roughleft_{S_i}$ is not in $[\LB(S_i)-(v+u),\LB(S_i)]$ for
$i=1,2$.
\item
 With probablity at most
${2c^{v}\over 1-c}+{2(v+u)c^{v+u}\over (1-c)^2}+{1\over 5\cdot 2^x
n}$, $\roughright_{S_i}$ is not in $[\RB(S_i),\RB(S_i)+(v+u)]$ for
$i=1,2$.
\item
Improve-Boundaries$(S_1,L_{S_1},R_{S_1},S_2,L_{S_2},R_{S_2},L)$ runs
in $O(L^2\log n)$ time.
\end{enumerate}
\end{lemma}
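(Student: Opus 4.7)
\emph{Setup.} Items~1 and~2 are symmetric under the sequence reversal $S \mapsto S^R$, so I restrict attention to item~1 (the left boundaries); item~2 follows by applying the same argument to the reversed inputs. Write $a_1^{\star} = \roughleft_{S_1}$ for the output of Improve-Boundaries on $S_1$ (and analogously $b_2^{\star}$ for $S_2$). I decompose the failure event $\{a_1^{\star} \notin [\LB(S_1)-(v+u),\LB(S_1)]\}$ into two sub-events: event $(A)$, a \emph{spurious match too far left}, in which some pair $(a,b)$ in the search grid with $a < \LB(S_1)-(v+u)$ satisfies $\diff(S_1[a,a+d_0\log n-1],\, S_2[b,b+d_0\log n-1]) \le \beta$; and event $(B)$, a \emph{missing anchor}, in which no pair $(a,b)$ with $a \le \LB(S_1)$ in the search grid achieves $\diff \le \beta$. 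Bounding $\Pr[(A) \cup (B)]$ yields the claim for a single $i$, and a further union bound over $i=1,2$ is absorbed in the constants.

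\emph{Bounding $(A)$.} For any candidate pair with $a < \LB(S_1)-(v+u)$, the substring $X_1 = S_1[a,a+d_0\log n-1]$ begins with at least $v+u+1$ uniform random background characters, and these are independent of the entire sequence $S_2$ (hence of $X_2$). Conditioning on $X_2$ and applying Lemma~\ref{base-probability-lemma}(ii), the probability that $\diff(X_1,X_2)\le \beta$ is at most $c^{d_0\log n}$. Union-bounding over the at most $(2L+1)^2 \le n^2$ grid pairs, and invoking the parameter choice $n^3 c^{d_0\log n} < \tfrac{1}{5\cdot 2^x}$ from Definition~\ref{param-def}, yields $\Pr[(A)] \le \tfrac{1}{5\cdot 2^x n}$.

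\emph{Bounding $(B)$ and item~3.} To exhibit a valid match near the true boundary, consider the canonical candidate pair $(\LB(S_1)+j,\, \LB(S_2)+j)$ for each small offset $j \in \{0,1,\dots,v+u\}$: both substrings then lie inside the respective motif regions (using $d_0\log n \le |G|$), so positions disagree only at mutations. Applying Lemma~\ref{diff-motif-lemma} with $y=v$ to each of $S_1$ and $S_2$, the probability that either motif region has more than $(\beta/2)\,d_0\log n$ mutations among its first $d_0\log n$ characters is at most $\tfrac{2c^v}{1-c}$; summing the same lemma over the $v+u$ relevant shifts contributes the extra term $\tfrac{2(v+u)c^{v+u}}{(1-c)^2}$. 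Under the complementary event, $j=0$ already witnesses $\diff \le \beta$ at $a = \LB(S_1)$, so a matching pair with $a \le \LB(S_1)$ exists and $(B)$ does not occur. For item~3, each of the four searches in Improve-Boundaries iterates over $O(L)\times O(L) = O(L^2)$ pairs of start positions and computes a length-$d_0\log n$ Hamming distance in $O(\log n)$ time, giving the claimed $O(L^2\log n)$ bound.

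\emph{Main obstacle.} The delicate step is $(A)$: an adversary could in principle select $b$ so that the motif-overlap portions of $X_1$ and $X_2$ align in a common offset, producing matches even when $a$ is far left of $\LB(S_1)$. The crux is that the exposed random-background prefix of $X_1$ (of length strictly greater than $v+u$) is independent of $S_2$, so Lemma~\ref{base-probability-lemma}(ii) applies conditionally to the full length $d_0\log n$ and yields the exponential-in-$d_0\log n$ bound that the choice of $d_0$ in Definition~\ref{param-def} is engineered to absorb.
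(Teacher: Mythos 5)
There is a genuine gap in your treatment of event $(A)$, and it is exactly the case the paper has to work hardest on. You claim that for \emph{every} pair with $a<\LB(S_1)-(v+u)$ the probability of $\diff(X_1,X_2)\le\beta$ is at most $c^{d_0\log n}$, by conditioning on $X_2$ and applying Lemma~\ref{base-probability-lemma}(ii) ``to the full length $d_0\log n$.'' That lemma requires one of the two compared strings to be uniformly random in \emph{every} position and independent of the other; here, when $a$ is only slightly more than $v+u$ to the left of $\LB(S_1)$, only the first few characters of $X_1$ are background, while the remaining $d_0\log n-O(v+u)$ characters lie inside the motif region and are strongly correlated with the motif portion of $X_2$ (both are near-copies of $G$). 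For such a pair the event $\diff(X_1,X_2)\le\beta$ is not exponentially unlikely in $d_0\log n$ at all --- with aligned motif overlaps its probability is close to $1$, since the short random prefix contributes only an $O\!\left(\frac{v+u}{d_0\log n}\right)$ fraction of disagreements. The independence of the exposed prefix buys you a bound exponential only in the \emph{prefix length}, not in $d_0\log n$; your ``Main obstacle'' paragraph asserts otherwise, and that assertion is false. Consequently your bound $\Pr[(A)]\le\frac{1}{5\cdot 2^x n}$ only covers the pairs whose position is more than $d_0\log n$ from the boundary (the paper's $P_3$ term, a union bound over $n^2$ pairs each of probability $c^{d_0\log n}$), and the intermediate range --- distance between $v+u$ and $d_0\log n$ from the boundary --- is left unbounded.

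The paper closes exactly this range with a separate argument: for a spurious pair whose farther position is at distance $v+y$ from its boundary, only the length-$(v+y)$ background prefix is used, giving probability at most $c^{v+y}$ per pair, with at most $v+y$ choices for the other position; summing $2\sum_{y\ge u}(v+y)c^{v+y}$ and using $\sum_{i\ge j} i a^i<\frac{ja^j}{(1-a)^2}$ yields the term $\frac{2(v+u)c^{v+u}}{(1-c)^2}$ in the lemma statement. In your write-up that term is instead attributed to the ``missing anchor'' event $(B)$ by ``summing Lemma~\ref{diff-motif-lemma} over the $v+u$ shifts,'' which neither produces a quantity of that form (it would give roughly $(v+u)\frac{c^v}{1-c}$) nor is needed, since, as you note, the offset $j=0$ alone witnesses a match once the two motif prefixes are not too mutated; the correct contribution of $(B)$ is just $\frac{2c^v}{1-c}$, as in the paper. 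Your symmetry reduction to the left case and your $O(L^2\log n)$ running-time count for item~3 are fine.
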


\begin{proof}
We need a bound for the following inequality:
\begin{eqnarray}
 \sum_{i=j}^{\infty}ia^i< {ja^j\over (1-a)^2}.\label{sum-eqn}
\end{eqnarray}
 Let $f(x)=\sum_{i=j}^{\infty}e^{\theta i x}$. Compute the
derivative $f'(x)=\theta \sum_{i=j}^{\infty} ie^{\theta i x}$. We
also have the closed form for the function $f(x)={e^{\theta j
x}\over 1-e^{\theta x}}$, which implies
\begin{eqnarray}
f'(x)&=&{\theta j e^{\theta j x}(1-e^{\theta x})-e^{\theta j
x}(-\theta e^{\theta x})\over (1-e^{\theta x})^2}\\
&=&{\theta j e^{\theta j x}-\theta (j-1)e^{\theta (j+1) x}\over
(1-e^{\theta x})^2}. \label{f'-inequality}
\end{eqnarray}

 Let $\theta=\ln a$ and $x=1$. We have
 $\sum_{i=j}^{\infty}ia^i={ja^j-(j-1)a^{j+1}\over (1-a)^2}<
{ja^j\over (1-a)^2}$.

Statement i. By Lemma~\ref{diff-motif-lemma}, with probability at
most $2{c^{v}\over 1-c}$, one of the left motif first $y$ characters
region of $S_i$ will change ${\beta\over 2}y$ characters. Therefore,
with probability at most $P_1=2{c^{v}\over 1-c}$,
$\roughleft_{S_i}>\LB(S_i)$.

For a pair of positions $p$ in $S_1$ and $q$ in $S_2$, without loss
generality, assume that $p$ has larger distance to the left boundary
$\LB(S_1)$ of $S_1$ than $q$ to the left boundary $\LB(S_2)$ of
$S_2$. Let $v+y$ be the distance from $p$ to the left boundary
$\LB(S_1)$ of $S_1$.

By Lemma~\ref{base-probability-lemma}, the probability is at most
$c^{v+y}$ that there will be a match. There are at most $(v+y)$
cases for $q$. With probability is at most $P_2=2\sum_{y=u}^{\infty}
(v+y)c^{v+y}<{2(v+u)c^{v+u}\over (1-c)^2}$ by inequality
(\ref{sum-eqn}), $\roughleft_{S_1}<LB(S_1)-(v+u)$.

For the cases that one position  is in random region and has
distance more than $d_0\log n$ with the left boundary, the
probability is at most $P_3=n^2c^{d_0\log n}<{1\over 5\cdot 2^x n}$
by inequality~(\ref{d0-sel-eqn}).

Therefore, we have total probability at most $P_1+P_2+P_3$ that
$\roughleft_{S_1}$ is not in $[\LB(S_1)-(v+u),\LB(S_1)]$.

Statement ii. One can also provide a symmetric analogous proof for
this statement.

Statement iii. The computation time easily follows from the
implementation of
Improve-Boundaries$(S_1,L_{S_1},R_{S_1},S_2,L_{S_2},R_{S_2})$.
\end{proof}

%\begin{definition}\scrod
%\begin{itemize}
%\item
%Define $z(n)$ to be the least $L$ used by the algorithm to partition
%the sequence into intervals of size $L$.
%\item
%Define $p_{\rm selection}(n)=\max_{n^{2\over 5}\ge L\ge
%z(n)}\{q_{\rm selection}(L)\}$.
%\end{itemize}
%\end{definition}

%\begin{lemma}\scrod
%\begin{enumerate}
%\item
%With probability at most $(p_{\rm selection}(n)+ {2(v+u)c^{v+u}\over
%(1-c)^2})^{k_1}$, each $\roughleft_{S_{2i-1}}$ ($i=1,\cdots, k_1$)
%returned by function Initial-Boundaries($S_{2i-1},S_{2i}$) is not in
%$[\LB(S_1),\LB(S_1)+(v+u)]$.
%\item
%With probability at most $(p_{\rm selection}(n)+ {2(v+u)c^{v+u}\over
%(1-c)^2})^{k_1}$, each $\roughright_{S_{2i-1}}$ ($i=1,\cdots, k_1$)
%returned by function Initial-Boundaries($S_{2i-1},S_{2i}$) is not in
%$[\RB(S_1)-(v+u),\RB(S_1)]$.
%\end{enumerate}
%\end{lemma}

%\begin{proof}
%It follows from Lemmas~\ref{v+u-lemma2}, \ref{collision-lemma}, and
%\ref{collision-lemma2}.
%\end{proof}

\begin{lemma}\label{initial-boundary-lemma}
Assume that for each $L$ with $0<L\le {|G|\over 2}$, with
probability at most $\varsigma(n)$, $L_{S_i}\not \in [\LB_{S_i}-L,
\LB_{S_i}+L]$ for $i=1,2$, where $(L_{S_1}, R_{S_1}, L_{S_2},
R_{S_2})=$Collision-Detection($S_1, U_1, S_2, U_2)$,
$U_1=$Point-Selection$(S_1,L)$, and $U_2=$Point-Selection$(S_2,L)$.
 Then with probability at most $\varsigma(n) +
{2(v+u_1)c^{v+u_1}\over (1-c)^2}+{c^v\over 1-c}+{1\over 5\cdot 2^x
n}$, Initial-Boundary$(S_1,S_2)$ returns $(L_{S_1}, R_{S_1},
L_{S_2}, R_{S_2})$ with  $L_{S_i}\not\in
[\LB(S_i)-(v+u_1),\LB(S_i)]$
 or $R_{S_i}\not\in [\RB(S_i),\RB(S_i)+(v+u_1))]$ for $i=1,2$;
\end{lemma}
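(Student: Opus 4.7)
The plan is to view Initial-Boundary$(S_1,S_2)$ as a pipeline of two stages---the halving loop calling Collision-Detection, followed by a single call to Improve-Boundaries with parameter $2L$---and to bound the total failure probability by a union bound over the stages.

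For the first stage, the halving loop reduces $L$ from $n^{2/5}$ until Collision-Detection returns non-empty outputs. I would argue that at the value of $L$ at which the loop halts, $L$ lies in $(0, |G|/2]$ (otherwise no collision of the $d_0 \log n$-length windows sampled from the motif regions could have occurred). The hypothesis of the lemma then applies directly, and with probability at most $\varsigma(n)$ the returned quadruple $(L_{S_1},R_{S_1},L_{S_2},R_{S_2})$ fails to satisfy $L_{S_i}\in[\LB(S_i)-L,\LB(S_i)+L]$ and $R_{S_i}\in[\RB(S_i)-L,\RB(S_i)+L]$ for $i=1,2$.

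Conditioning on the first stage succeeding, the rough boundaries are within distance $L \le 2L$ of the true boundaries, so the hypothesis of Lemma~\ref{v+u-lemma2} is met with slack parameter $2L$. Items (i) and (ii) of that lemma bound the probability that the refined boundaries fail to lie in $[\LB(S_i)-(v+u_1),\LB(S_i)]$ or $[\RB(S_i),\RB(S_i)+(v+u_1)]$, respectively. A combined accounting---observing that the ``excess mutations in some prefix of the motif region'' event of Lemma~\ref{diff-motif-lemma} is the same one exploited for both the left-side and the right-side failure cases, and can be charged only once across the two sequences---collapses the per-side bounds into $\frac{c^v}{1-c} + \frac{2(v+u_1)c^{v+u_1}}{(1-c)^2} + \frac{1}{5\cdot 2^x n}$ rather than doubling them. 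Summing with the stage-one contribution yields the claimed total $\varsigma(n) + \frac{2(v+u_1)c^{v+u_1}}{(1-c)^2} + \frac{c^v}{1-c} + \frac{1}{5\cdot 2^x n}$.

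The main obstacle will be the careful bookkeeping across stages and across left/right directions. I must verify that the halving loop terminates with an $L$ for which the hypothesis' range $0 < L \le |G|/2$ is satisfied (so that the $\varsigma(n)$ bound applies at the winning iteration), that the parameter $2L$ passed to Improve-Boundaries covers the tolerance of the Collision-Detection output, and that the sample-point randomness in the first stage is independent of the motif-mutation events used in the second stage so the two bounds can be combined without interference. The tightened constant $c^v/(1-c)$ in place of the naive $4c^v/(1-c)$ coming from four independent applications of Lemma~\ref{diff-motif-lemma} is the subtlest point and will require identifying the motif-mutation event as a single global failure event rather than four separate ones.
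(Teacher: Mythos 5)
Your overall decomposition is exactly what the paper intends: its entire proof of this lemma is the single line ``It follows from Lemma~\ref{v+u-lemma2},'' so the two-stage accounting (hypothesis covers the Collision-Detection loop, Lemma~\ref{v+u-lemma2} covers the Improve-Boundaries call with slack $2L$) is the right and only route. The one place where your argument does not go through is the step where you collapse the mutation contributions to a single $\frac{c^v}{1-c}$ term. The event exploited in Lemma~\ref{diff-motif-lemma} is \emph{not} one global event: to get $\roughleft_{S_i}\in[\LB(S_i)-(v+u_1),\LB(S_i)]$ and $\roughright_{S_i}\in[\RB(S_i),\RB(S_i)+(v+u_1)]$ for both $i=1$ and $i=2$ you need (a) no excessive mutations in any length-$y$ \emph{prefix} of $\aleph(S_1)$, (b) the same for $\aleph(S_2)$, (c) no excessive mutations in any length-$y$ \emph{suffix} of $\aleph(S_1)$, and (d) the same for $\aleph(S_2)$. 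These concern different random characters (different sequences, and different ends of each motif copy), so they are four distinct failure events, each of probability up to $\frac{c^v}{1-c}$, and the honest union bound over statements (i) and (ii) of Lemma~\ref{v+u-lemma2} yields roughly $\frac{4c^v}{1-c}+\frac{4(v+u_1)c^{v+u_1}}{(1-c)^2}+\frac{2}{5\cdot 2^x n}$ rather than the constants $1$, $2$, $1$ appearing in the lemma's statement. No ``charge once'' argument can repair this, because the events live on disjoint randomness.

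That said, this constant-factor discrepancy is already present in the paper's own statement (its one-line proof does not address it either), and it is harmless downstream: the resulting quantity only feeds into $P_0$ in Lemma~\ref{general-lemma}, whose conditions (inequalities~(\ref{v2-ineqn}) and~(\ref{support-P0-Q0-inequality})) have enough slack in the choice of $v$ and $u_1$ to absorb a factor of $4$. So the correct conclusion of your write-up should be the union-bounded quantity with the larger constants (or a remark that the stated constants can be met by enlarging $v$), not a claim that the four mutation events coincide. Your handling of the first stage is also slightly too optimistic---the halving loop can in principle halt at some $L>|G|/2$ on a spurious collision, a case your argument (and the paper's) silently ignores---but the main substantive issue is the event-collapsing step.
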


\begin{proof}
It follows from Lemma~\ref{v+u-lemma2}.
\end{proof}

\begin{lemma}\label{select-median-lemma}
Assume that with probability $p<0.5$, each $S_{2i-1}'$ has its rough
boundaries $\roughleft_{S_{2i-1}'}\not\in[\LB(S_{2i-1}')-u,
\LB(S_{2i-1}')]$ or $\roughright_{S_{2i-1}'}\not\in[\RB(S_{2i-1}'),
\RB(S_{2i-1}')+u]$, then with probability at most
$e^{-{(0.5-p-\epsilon)^2k_1/3}}$, $l_{motif}$ is not in $[|G|-2u,
|G|+2u]$, where $l_{motif}$ is selected as median of
$\cup_{i=1}^{k_1}\{(\roughright_{S_{2i-1}'}-\roughleft_{S_{2i-1}'})\}$.
\end{lemma}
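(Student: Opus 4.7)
The plan is to use the Chernoff bound to guarantee that strictly more than half of the $k_1$ values $\roughright_{S_{2i-1}'}-\roughleft_{S_{2i-1}'}$ land in a ``good'' interval, which then forces the median to lie in that interval as well.

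First, I would define an indicator $X_i$ that equals $1$ precisely when the $i$th pair of rough boundaries is bad, i.e.\ when $\roughleft_{S_{2i-1}'}\notin[\LB(S_{2i-1}')-u,\LB(S_{2i-1}')]$ or $\roughright_{S_{2i-1}'}\notin[\RB(S_{2i-1}'),\RB(S_{2i-1}')+u]$. By hypothesis $\Pr[X_i=1]\le p$, and since the input sequences $S_1',S_3',\dots,S_{2k_1-1}'$ are independent, the variables $X_1,\dots,X_{k_1}$ are independent. Then I would apply Corollary~\ref{chernoff-lemma-a}(i) with deviation $\epsilon'=0.5-p-\epsilon$, which is positive because $p<0.5$ and $\epsilon$ may be chosen small. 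This gives
\[
\Pr\!\Bigl[\,\textstyle\sum_{i=1}^{k_1}X_i>p\,k_1+(0.5-p-\epsilon)k_1\Bigr]<e^{-k_1(0.5-p-\epsilon)^2/3}.
\]
Hence, except on an event of that probability, the number of bad indices satisfies $\sum_i X_i\le (0.5-\epsilon)k_1<k_1/2$, so strictly more than $k_1/2$ indices are good.

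Next, when $X_i=0$ we have $\roughleft_{S_{2i-1}'}\in[\LB(S_{2i-1}')-u,\LB(S_{2i-1}')]$ and $\roughright_{S_{2i-1}'}\in[\RB(S_{2i-1}'),\RB(S_{2i-1}')+u]$, and since $\RB(S_{2i-1}')-\LB(S_{2i-1}')=|G|$, subtracting the intervals gives
\[
\roughright_{S_{2i-1}'}-\roughleft_{S_{2i-1}'}\in[|G|,\,|G|+2u].
\]
So more than half of the $k_1$ values in the multiset $\bigcup_{i=1}^{k_1}\{\roughright_{S_{2i-1}'}-\roughleft_{S_{2i-1}'}\}$ lie in $[|G|,|G|+2u]$. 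A standard observation about medians then finishes the argument: if a majority of the values in a finite multiset lie in a closed interval $[a,b]$, then every median of the multiset also lies in $[a,b]$ (otherwise at least half the values would exceed $b$ or be below $a$, contradicting the majority). Applying this with $[a,b]=[|G|,|G|+2u]$ yields $l_{\text{motif}}\in[|G|,|G|+2u]\subseteq[|G|-2u,|G|+2u]$.

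There is no substantive obstacle here; the only care is bookkeeping. In particular, one must choose the Chernoff deviation so that the final exponent matches the stated $e^{-(0.5-p-\epsilon)^2 k_1/3}$, and one must invoke independence of the sequences $S_{2i-1}'$ to justify using the Chernoff bound on the $X_i$. The median-majority observation is elementary and needs no extra probabilistic input.
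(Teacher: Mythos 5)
Your proof is correct and follows essentially the same route as the paper's: bound the number of indices with bad rough boundaries via Corollary~\ref{chernoff-lemma-a}, note that a good index yields a length estimate in $[|G|-2u,|G|+2u]$, and conclude via the elementary fact that the median of a multiset lies in any interval containing a majority of its elements (the paper states this as the contrapositive). Your write-up is in fact more explicit than the paper's about the independence of the $X_i$ and the choice of deviation in the Chernoff bound.
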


\begin{proof}
If both $\roughleft_{S_{2i-1}'}\in[\LB(S_{2i-1}')-u,
\LB(S_{2i-1}')]$ and $\roughright_{S_{2i-1}'}\in[\RB(S_{2i-1}'),
\RB(S_{2i-1}')+u]$, then
$(\roughright_{S_{2i-1}'}-\roughleft_{S_{2i-1}'})$ is in $[|G|-2u,
|G|+2u]$.

If the median of
$\cup_{i=1}^{k_1}\{(\roughright_{S_{2i-1}'}-\roughleft_{S_{2i-1}'})\}$
is not in $[|G|-2u, |G|+2u]$, then there are at least $\floor{k_1}$
$i$s to have $\roughleft_{S_{2i-1}'}\not\in[\LB(S_{2i-1}')-u,
\LB(S_{2i-1}')]$ or $\roughright_{S_{2i-1}'}\not\in[\RB(S_{2i-1}'),
\RB(S_{2i-1}')+u]$.

On the other hand, the probability is at most $p$,
$\roughleft_{S_{2i-1}'}\not\in[\LB(S_{2i-1}')-u, \LB(S_{2i-1}')]$ or
$\roughright_{S_{2i-1}'}\not\in[\RB(S_{2i-1}'), \RB(S_{2i-1}')+u]$.
So,  this lemma follows from Corollary~\ref{chernoff-lemma-a}.
\end{proof}

For a $\Theta(n,G,\alpha)$-sequence $S$, we often obtain its left
rough boundary with $\roughleft_{S}\le\LB(S)$. Some times its
exactly left boundary may be miss in the algorithm.

\begin{definition}\scrod
\begin{itemize}
\item
A $\Theta(n,G,\alpha)$-sequence $S$ misses its left boundary if
$\roughleft_{S}>\LB(S)$.
\item
A $\Theta(n,G,\alpha)$-sequence $S$ misses its right boundary if
 $\roughright_{S}<\RB(S)$.
\end{itemize}
\end{definition}

\begin{definition}\label{stable-def}\scrod
\begin{itemize}
\item
  A $\Theta(n,G,\alpha)$-sequence $S$ contains a {\it left half stable} motif region $\aleph(S)$
if $\diff(G'[1, h],G[1, h])\le {\beta\over 2}$ for all
$h=v,v+1,\cdots, m$, where $G'=\aleph(S)$, $c=e^{-{\epsilon^2\over
3}}$ and $m=|G|$ as defined in Definition~\ref{param-def}
% hahaha
%Sections~\ref{parameters-subsect}
and Section~\ref{notation-sec}, respectively.
\item
 A $\Theta(n,G,\alpha)$-sequence $S$ contains a {\it right half stable} motif region $\aleph(S)$
if $\diff(G'[m-h, m], G[m-h, m])\le {\beta\over 2}$ for
$h=v-1,v+1,\cdots, m-1$, where $G'=\aleph(S)$ and $m=|G|$.
\item
  A $\Theta(n,G,\alpha)$-sequence $S$ contains a {\it stable} motif region $\aleph(S)$
satisfying  the following conditions: (1) $G'[i]=G[i]$ for
$i=1,\cdots, v-1$; (2) $G'[m-i+1]=G[m-i+1]$ for $i=1,\cdots, v-1$;
(3) $S$ motif region is both left and right half stable, where
$G'=\aleph(S)$ and $m=|G|$.
\end{itemize}
\end{definition}

\begin{lemma}\label{left-boundary-detect-lemma}
 Assume that
\begin{itemize}
\item
  $l_{motif}\in [|G|-2(v+u_1), |G|+2(v+u_1)]$;
\item
 $S$ contains a both left half and right half stable motif region and $\roughleft_{S}\in [\LB(S)-(v+u_1),\LB(S)]$
 and $\roughright_{S}\in [\RB(S),\RB(S)+(v+u_1)]$ (see Definition~\ref{param-def} for $u_1$ and
 $v$); and
\item
for each $L$ with $(v+u_1)<L\le {|G|\over 2}$, if $S_1$ has
$\roughleft_{S_1}\not \in [\LB_{S_1}-L, \LB_{S_1}+L]$ and
$\roughright_{S_1}\not \in [\RB_{S_1}-L, \RB_{S_1}+L]$, then
 with probability at most
$\varsigma(n)$, $L_{S_i''}\not \in [\LB_{S_i''}-2L, \LB_{S_i''}+2L]$
for $i=1,2$, where $(L_{S_1}, R_{S_1}, L_{S_i''},
R_{S_i''})=$Collision-Detection($S_1, U_1, S_i'', U_2)$,
$U_1=$Point-Selection$(S_1,L,
[\roughleft_{S_1}-2L,\roughleft_{S_1}+2L])\cup$
Point-Selection$(S_1,L,
[\roughright_{S_1}-2L,\roughright_{S_1}+2L])$, and
$U_2=$Point-Selection$(S_i'',$ $L, [1,|S_i''|])$.
\item
The rough boundaries for all sequences $S_i''\in Z_2$ are computed
via $(L_{S}, R_{S}, L_{S_i''},R_{S_i''})=$Collision-Detection$(S,
U_{S}, S_i'', U_{S_i''})$, and $(L_{S}, R_{S},
\roughleft_{S_i''},\roughright_{S_i''})$=Improve-Boundaries$(S,$
$L_{S}, R_{S}, S_i'', L_{S_i''}, R_{S_i''}, 2L)$.
\end{itemize}

 Then with probability at most $e^{-{\epsilon^2k_2\over 3}}$, there are
more than $(2(\varsigma(n)+ (v+u_1){c^{v+u}\over 1-c}+{c^{v}\over
1-c})+\epsilon)k_2$ sequences $S_i''$ in $\{S_1'',\cdots,
S_{k_2}''\}$ with $\roughleft(S_i'')\not\in [\LB(S_i'')-(v+u),
\LB(S_i'')]$ or $\roughright(S_i'')\not\in [\RB(S_i''),
\RB(S_i'')+(v+u)]$.
\end{lemma}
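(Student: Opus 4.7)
The plan is to reduce the statement to a Chernoff-type concentration inequality applied to $k_2$ independent "failure" indicator variables, one per sequence $S_i'' \in Z_2$. The key preliminary step is to bound the per-sequence failure probability $p_{\mathrm{fail}}$ that $\roughleft(S_i'') \notin [\LB(S_i'')-(v+u),\LB(S_i'')]$ or $\roughright(S_i'') \notin [\RB(S_i''),\RB(S_i'')+(v+u)]$, and to show
\[
p_{\mathrm{fail}} \;\le\; 2\!\left(\varsigma(n) + (v+u_1)\frac{c^{v+u}}{1-c} + \frac{c^{v}}{1-c}\right).
\]
Once this is done, Corollary~\ref{chernoff-lemma-a}(i) applied to the sum of the $k_2$ independent indicator variables, with threshold $(p_{\mathrm{fail}}+\epsilon)k_2$, yields the desired tail bound $e^{-\epsilon^2 k_2/3}$.

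To establish the per-sequence bound, I would analyze the left and right boundaries of $S_i''$ separately and sum the two contributions (this accounts for the factor of $2$). For the left boundary, I would condition on the event $E_{\mathrm{CD}}$ that Collision-Detection produces $L_{S_i''} \in [\LB(S_i'')-2L, \LB(S_i'')+2L]$; by the third bullet in the hypothesis this event fails with probability at most $\varsigma(n)$. On the complementary good event, Improve-Boundaries is invoked with valid rough boundaries for both $S$ (which we are given has a stable motif region with $\roughleft_S, \roughright_S$ within $v+u_1$ of the truth) and $S_i''$. I would then appeal to Lemma~\ref{v+u-lemma2}(i), whose conclusion bounds the probability that $\roughleft_{S_i''} \notin [\LB(S_i'')-(v+u_1),\LB(S_i'')]$ by ${2c^{v}}/{(1-c)} + 2(v+u_1)c^{v+u_1}/(1-c)^2 + 1/(5\cdot 2^x n)$, which is absorbed into the $(v+u_1)c^{v+u}/(1-c) + c^{v}/(1-c)$ contribution. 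A symmetric argument using Lemma~\ref{v+u-lemma2}(ii) handles the right boundary.

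A subtle point is that Lemma~\ref{v+u-lemma2} also relies on $S$ having a stable motif region so that the random-character tail of the Chernoff estimates can be controlled; this is exactly the second bullet of the hypothesis, together with the assumption that $l_{\mathrm{motif}} \in [|G|-2(v+u_1),|G|+2(v+u_1)]$, which guarantees the $2L$ search window used by Improve-Boundaries contains both true motif boundaries of $S_i''$. Finally, independence across $Z_2$ follows because $S$ is fixed and the sequences $S_1'',\ldots,S_{k_2}''$ are generated independently (by the problem model), so the failure indicators are independent and Corollary~\ref{chernoff-lemma-a}(i) applies directly.

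The main obstacle is the bookkeeping in step two: carefully combining the conditional failure probability of Collision-Detection with the conditional failure probability of Improve-Boundaries, under the stable-motif assumption on $S$, while keeping the bookkeeping clean enough to recover the exact expression $\varsigma(n) + (v+u_1)c^{v+u}/(1-c) + c^{v}/(1-c)$ on each side. Once this per-sequence tail is pinned down, the final Chernoff step is essentially immediate from the parameter setup in Definition~\ref{param-def}.
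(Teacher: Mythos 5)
Your overall skeleton---bound the per-sequence failure probability, then apply Corollary~\ref{chernoff-lemma-a} to the $k_2$ (conditionally) independent failure indicators---is exactly the paper's argument, and your handling of the independence caveat matches the paper's. Where you diverge is in how the per-sequence bound is obtained. The paper does \emph{not} invoke Lemma~\ref{v+u-lemma2} here; it decomposes the per-sequence failure into three explicit pieces: $P_1=\varsigma(n)$ for the Collision-Detection step (as you do), then $P_{2,l}=P_{2,r}=(v+u_1)\frac{c^{v+u}}{1-c}$ per side for overshooting the true boundary, obtained from Lemma~\ref{base-probability-lemma} together with a union bound over the $v+u_1$ candidate patterns taken from $S$, and finally $P_{3,l}=P_{3,r}=\frac{c^{v}}{1-c}$ per side for the event that $S_i''$ itself fails to contain a left (resp.\ right) half stable motif region, via Lemma~\ref{diff-motif-lemma}; summing gives exactly the stated expression. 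Your route through Lemma~\ref{v+u-lemma2} is natural given the fourth hypothesis, but the ``absorption'' you claim is not literal: that lemma's conclusion is $\frac{2c^{v}}{1-c}+\frac{2(v+u)c^{v+u}}{(1-c)^2}+\frac{1}{5\cdot 2^x n}$ per side, so doubling for both sides yields a coefficient $4$ on $\frac{c^{v}}{1-c}$ where the statement allows only $2$, a denominator $(1-c)^2$ in place of $(1-c)$, and a leftover $\frac{1}{5\cdot 2^x n}$ term with no home in the stated bound. This is a constants mismatch rather than a conceptual error---the parameters in Definition~\ref{param-def} have enough slack that the downstream use in Lemma~\ref{general-lemma} would survive---but to recover the exact expression in the lemma you need the paper's direct decomposition (union bound over the $v+u_1$ patterns plus half-stability of $S_i''$) rather than citing Lemma~\ref{v+u-lemma2} as a black box.
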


\begin{proof}
%By Lemma~\ref{collision-lemma}, with probability at most
%$P_{1,l}={1\over 2^{x}n^{3}}$, a collision does not happen near the
%true left boundary.  By Lemma~\ref{collision-lemma}, with
%probability at most $P_{1,r}={1\over 2^{x}n^{3}}$, a collision does
%not happen near the true right boundary. Therefore, with probability
%at most $P_1=P_{1,l}+P_{1,r}$, a collision does not happen near ear
%the true left boundary or the true right boundary
According to the condition of this lemma, with probability at most
$P_1=\varsigma(n)$, $L_{S_i''}\not\in [\LB_{S_i''}-2L,
\LB_{S_i''}+2L]$, where $(L_{S}, R_{S}, L_{S_i''},
R_{S_i''})=$Collision-Detection($S, U_1, S_i'', U_2)$ and $(U_1,
U_2)=$Point-Selection$(S,S_i'',L)$.

For a fix pattern from $S$, by Lemma~\ref{base-probability-lemma},
with probability at most $\sum_{y=v+u}^{\infty}c^y={c^{v+u}\over
1-c}$, it has distance more than $v+u$ to the true left boundary. As
we need to deal with $v+u_1$ possible patterns from $S$, with
probability at most $P_{2,l}= (v+u_1){c^{v+u}\over 1-c}$,
$\roughleft_{S_i''}< \LB(S_i'')-(v+u)$.

Similarly, with probability at most $P_{2,r}= (v+u_1){c^{v+u}\over
1-c}$, $\roughright_{S_i''}< \RB(S_i'')+(v+u)$. Let
$P_2=P_{2,l}+P_{2,r}$.

With probability at most $P_{3,l}={c^{v}\over 1-c}$,  $S_i''$ does
not contain a left half stable motif region by
Lemma~\ref{diff-motif-lemma}. Similarly, with probability at most
$P_{3,r}={c^{v}\over 1-c}$,  $S_i''$ does not contain a right half
stable motif region. Let $P_3=P_{3,l}+P_{3,r}$.

Although $S$ is involved to search the left boundary with all other
sequences. The non-missing condition is to let each sequence do not
change too many characters in the motif region. Therefore, this is
an independent event for each sequence. It is safe to use Chernoff
bound to deal with it.

With probability at most $P=e^{-{\epsilon^2k_2\over 3}}$, the are
more than $(P_1+P_2+P_3+\epsilon)k_2$ sequences $S_i''$ in
$\{S_1'',\cdots, S_{k_2}''\}$ with $\roughleft(S_i'')\not\in
[\LB(S_i'')-(v+u), \LB(S_i'')]$ or $\roughright(S_i'')\not\in
[\RB(S_i''), \RB(S_i'')+(v+u)]$.

\end{proof}

\subsection{Analysis of \phasetwo~and \phasethree~of  \algma}\label{sec-analysis2}

\begin{figure}{
\vskip 100pt
  {\begin{picture}(5.0,5.0)

      \put(0.0, 75.0){{$G''$} }

      \put(190.0, 20.0){{$\aleph(S)$} }

      \put(180.0, 23.0){\begin{picture}(0.0,0.0)
                           \thinlines{\vector(-1,0){130.0}}
                         \end{picture}
                        }

      \put(215.0, 23.0){\begin{picture}(0.0,0.0)
                           \thinlines{\vector(1,0){150.0}}
                         \end{picture}
                        }

    \put(380.0, 30.0){ $w'$ }

    \put(25.0, 30.0){ $w$ }

      \put(0.0, 33.0){$M$}

      \put(20.0, 40.0){\begin{picture}(0.0,0.0)
                           \thicklines{\line(1,0){390.0}}
                         \end{picture}
                        }

      \put(365.0, 40.0){\begin{picture}(0.0,0.0)
                           \thinlines{\line(0,-1){5.0}}
                         \end{picture}
                        }

      \put(50.0, 40.0){\begin{picture}(0.0,0.0)
                           \thinlines{\line(0,-1){5.0}}
                         \end{picture}
                        }

      \put(20.0, 80.0){\begin{picture}(0.0,0.0)
                           \thicklines{\line(1,0){390.0}}
                         \end{picture}
                        }

   \end{picture}
 }\caption{$G''$ and $M$}\label{figure2}
}\end{figure}

Lemma~\ref{base} shows that with high probability, the left and last
parts of the motif region in a $\Theta(n,G,\alpha)$-sequence do not
change much.

\begin{lemma}\label{base}  With probability at most
$Q_0$,
%=(1-\alpha)^{2}-{2c^{v}\over 1-c}$,
a $\Theta(n,G,\alpha)$-sequence $S$ does not contain a stable motif
region.
\end{lemma}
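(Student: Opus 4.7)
The plan is to bound the probability of each failure mode in the definition of ``stable motif region'' separately, then combine them by a union bound. Recall from Definition~\ref{stable-def} that a stable motif region requires four things to hold: (a) $G'[i]=G[i]$ for $i=1,\ldots,v-1$, (b) $G'[m-i+1]=G[m-i+1]$ for $i=1,\ldots,v-1$, (c) $\diff(G'[1,h],G[1,h])\le \beta/2$ for all $h=v,\ldots,m$ (left half stable), and (d) the symmetric right-half stability. So it suffices to upper bound the probability that each of (a)--(d) fails and sum.

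For (a), each of the first $v-1$ motif characters mutates with probability at most $\alpha$, so by a union bound the probability that (a) fails is at most $(v-1)\alpha$. The symmetric argument handles (b), contributing another $(v-1)\alpha$. For (c), failure means there is some prefix length $h\in\{v,\ldots,m\}$ in which more than $(\beta/2)h = (\alpha+\epsilon)h$ of the characters have mutated; this is exactly the event bounded in Lemma~\ref{diff-motif-lemma}, which gives probability at most $c^v/(1-c)$. Symmetrically, (d) contributes at most $c^v/(1-c)$.

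Summing the four bounds by a union bound gives
\[
\Pr[S\text{ not stable}]\le 2(v-1)\alpha + \frac{2c^v}{1-c} = q(v) = Q_0,
\]
where the last equality uses the definitions of $q(\cdot)$ and $Q_0$ from Definition~\ref{param-def} (items containing equations~(\ref{Q0-def-eqn})). Since the mutation events on different motif positions are independent, the only subtle point is applying Lemma~\ref{diff-motif-lemma} to the reverse sequence for part~(d); but the definition of a $\Theta_{\alpha}(n,G)$-sequence is symmetric enough that the same bound applies to any contiguous tail of the motif. I do not anticipate a real obstacle here: the lemma is essentially a bookkeeping step that packages the previously established tail bound together with two trivial union bounds on the $v-1$ ``exact'' positions at each end.
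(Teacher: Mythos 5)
Your proposal is correct and follows essentially the same route as the paper: a union bound over the four requirements of Definition~\ref{stable-def}, with $(v-1)\alpha$ for each block of exactly-matching end positions and $c^v/(1-c)$ for each half-stability condition, summing to $2(v-1)\alpha+2c^v/(1-c)=q(v)=Q_0$. The only cosmetic difference is that you invoke Lemma~\ref{diff-motif-lemma} for the half-stability tail bound, whereas the paper rederives the same Chernoff-plus-geometric-series estimate directly from Corollary~\ref{chernoff-lemma-a}.
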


\begin{proof}
The probability is $V_1=2(v-1)\alpha$ not to satisfy conditions (1)
and (2) of Definition~\ref{stable-def}. Consider condition (3).
Since every character of $\aleph(S)[1, m]$ (notice that $m=|G|$) has
probability at most $\alpha$ to mutate, by
Corollary~\ref{chernoff-lemma-a}, the probability is at most
$e^{-{1\over 3}\epsilon^2 r}$ that $\diff(G[1, h],G'[1, h])>
{\beta\over 2}=\alpha +\epsilon$. Let
$V_3=\sum_{r=v}^{\infty}e^{-{1\over 3}\epsilon^2 r}={c^v\over 1-c}$,
where $c=e^{-{1\over 3}\epsilon^2}$ as defined in
Definition~\ref{param-def}. Therefore, the probability is at most
$V_3$ that $\diff(G[1, h],G'[1, h])> {\beta\over 2}=\alpha
+\epsilon$ for some $h\in\{v, v+1,\cdots, m\}$.
% Section~\ref{parameters-subsect}.
Similarly we define $V_4=\sum_{r=v}^{\infty}e^{-{1\over 3}\epsilon^2
r}\le {c^v\over 1-c}$ for the probability on the right-hand side.
The probability is at most $V_4$ that $\diff(G[m-h, m],G'[m-h, m])>
{\beta\over 2}=\alpha +\epsilon$ for some $h\in\{v, v+1,\cdots,
m\}$. The probability that $S$ does not contain a stable motif
region is at most $V_1+V_3+V_4=Q_0$.
\end{proof}

\begin{definition} Assume that $Z_1=\{S_1',\cdots, S_{2k_1}\}$ contains
$S_{2i-1}'$ that contains a stable motif region. We fix such a
$S_{2i-1}'$.
\begin{itemize}
\item
Define $G_{L}=\aleph(S_{2i-1}')[1,d_0\log n-1]$ to be the left part
of the motif region $\aleph(S_{2i-1}')$.
%with $1\le i\le k_1$ that is stable.
\item
Define $G_{R}=\aleph(S_{2i-1}')[|G|-(d_0\log n)+1,|G|]$ to be the
right part of the motif region $\aleph(S_{2i-1}')$.
% with $1\le i\le k_1$ that is stable.
\end{itemize}
\end{definition}

Lemma~\ref{small-shift-lemma} shows that with high probability,
\phasetwo~ of algorithm \algmnam~extracts the correct motif regions
from the sequences in $Z_1$. It uses $G''$ to match $\aleph(S)$ in
another sequences $S$. The parameter $R$ gives a small probability
that the matched region between $G''$ and $S$ is not in $\aleph(S)$.

\begin{lemma}\label{small-shift-lemma}\scrod
\begin{enumerate}
\item
 Assume that $G_l$ and $G_r$ are  fixed sequences of length $d_0\log n$.
 Let $S$ be a
$\Theta(n,G,\alpha)$-sequence with $M\in \match(G_l, G_r, S)$ and
let $w_0$ be the number of characters of M that are not in the
region of $\aleph(S)$. Then the probability is  at most $R$ that
$w_0\ge 1$, where $R$ is defined in Definition~\ref{param-def}.
\item
 The probability is  at most
$Q_0$ that given a $\Theta(n,G,\alpha)$-sequence $S$,
$\match(G_{L},G_R, S)=\emptyset$.
\end{enumerate}
\end{lemma}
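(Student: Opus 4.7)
The plan is to prove the two parts separately. For part~(1), the event $w_0 \ge 1$ occurs exactly when the matched window $M = S[a,b]$ extends outside $\aleph(S) = S[\LB(S), \RB(S)]$, either on the left (so $a < \LB(S)$) or on the right ($b > \RB(S)$). I would bound the left-shift event in detail and invoke a symmetric argument for the right. For a candidate $a = \LB(S) - s$ with $s \ge 1$, the ``left matched'' condition of Definition~\ref{match-def} requires (i)~exact agreement with $G_l$ on positions $1, \ldots, v-1$ of the window, and (ii)~$\diff(S[a, a+i-1], G_l[1,i]) \le \beta$ for every $i$ with $v \le i \le d_0 \log n$. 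The key structural fact is that the initial $\min(s, d_0\log n)$ characters of this window lie in the random background region of $S$ and are independent of the motif-region characters.

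In the regime $1 \le s \le v-1$ I invoke condition~(i) restricted to the $s$ random positions: each must equal the corresponding fixed symbol of $G_l$, giving probability $(1/t)^s$ by independence, so a union bound yields $\sum_{s=1}^{v-1}(1/t)^s < 1/(t-1)$. In the regime $v \le s \le d_0\log n$ I apply condition~(ii) at prefix length~$s$; since $S[a, a+s-1]$ is entirely random against the fixed sequence $G_l[1,s]$, Lemma~\ref{base-probability-lemma}(ii) gives probability $\le c^s$, and summing yields $\sum_{s=v}^{d_0\log n} c^s \le c^v/(1-c)$. Very large shifts $s > d_0\log n$ make the entire length-$d_0\log n$ window random, contributing $\le c^{d_0\log n}$ per shift, which is absorbed as a negligible error using inequality~(\ref{d0-sel-eqn}). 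Combining yields the stated bound $R = r(v)$.

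For part~(2), I use Lemma~\ref{base}: with probability at least $1 - Q_0$, $S$ contains a stable motif region. Conditioning on this event, I argue that $a = \LB(S)$ witnesses the left match against $G_L$ and $b = \RB(S)$ witnesses the right match against $G_R$. Since both $S$ and the reference sequence $S_{2i-1}'$ (the source of $G_L$ and $G_R$) are chosen to have stable motif regions, Definition~\ref{stable-def} gives $\aleph(S)[j] = G[j] = G_L[j]$ for $j = 1, \ldots, v-1$, establishing condition~(2) of ``left matched''. For condition~(3), the left-half-stable property yields $\diff(\aleph(S)[1,i], G[1,i]) \le \beta/2$ and $\diff(G_L[1,i], G[1,i]) \le \beta/2$ for every $i \in [v, d_0\log n]$, and the triangle inequality then delivers $\diff(\aleph(S)[1,i], G_L[1,i]) \le \beta$. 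The analogous argument at the right endpoint produces the right match, so $\match(G_L, G_R, S) \ne \emptyset$ on this event, proving $\Pr[\match(G_L, G_R, S) = \emptyset] \le Q_0$.

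The main obstacle is the shift-by-shift bookkeeping in part~(1): Lemma~\ref{base-probability-lemma} applies only to fully random segments, so one must be careful to restrict each probability estimate to the random-background portion of the window and not the whole window (which contains correlated motif characters). Fortunately, in the small-shift regime $s < v$ the exact-match constraint already involves only background positions, while for $s \ge v$ the background itself spans a length-$s$ prefix on which Lemma~\ref{base-probability-lemma}(ii) is directly applicable. The large-shift tail is routine via the choice of $d_0$.
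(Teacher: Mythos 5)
Your proposal is correct and follows essentially the same route as the paper: for part (1) a case split on the size of the overhang (small shifts bounded by exact-match probabilities summing to $1/(t-1)$, shifts of length at least $v$ bounded via Lemma~\ref{base-probability-lemma}(ii) summing to $c^v/(1-c)$), and for part (2) an appeal to Lemma~\ref{base} that a non-stable motif region occurs with probability at most $Q_0$. You are in fact slightly more careful than the paper in two places — the explicit treatment of shifts beyond $d_0\log n$ via inequality~(\ref{d0-sel-eqn}) and the triangle-inequality argument showing a stable region is matched by $G_L,G_R$ — both of which the paper leaves implicit.
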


\begin{proof} Assume that $w_0\ge 1$. Let $w$ be the number of characters outside of $\aleph(S)$ on
the left of $M$, and let $w'$ be the number of characters outside of
$\aleph(S)$ on the right of $M$. Clearly, $w_0=w+w'$. Since $w_0\ge
1$, either $w\ge 1$ or $w'\ge 1$. See Figure~\ref{figure2}. Without
loss of generality, we assume $w\ge 1$.

Statement i: There are two cases.
%Without loss of generality, we assume that case (a) holds with $N^*$
%not equal to empty string.

Case (a): $1\le w<v$.  By Lemma~\ref{base-probability-lemma}, the
probability for this case is at most ${1\over t}$ for a fixed $w$.
The total probability for this case for $1\le w<v$ is at most
$\sum_{i=1}^{v-1}{1\over t^i}\le \sum_{i=1}^{\infty}{1\over
t^i}={1\over t-1}$.

Case (b): $v\le w$. By Lemma~\ref{base-probability-lemma}, the
probability is at most $e^{-{\epsilon^2\over 3}w}$ for a fixed $w$.
The total probability for $v\le w$ is at most $\sum_{w=v}^{\infty}
e^{-{\epsilon^2\over 3}w}={c^v\over 1-c}$.

The probability analysis is similar when $w'\ge 1$. Therefore, the
probability for this case is at most $R=({1\over t-1}+{c^v\over
1-c})$ for $w_0\ge 1$.

Statement ii: By Lemma~\ref{base}, with probability at most  $Q_0$,
 $S$ does not contain a stable motif region.
%satisfies the conditions of Lemma~\ref{base}.
%Since it has at most probability $R$ for $w_0\ge 1$.
%By the first case of this lemma,
Therefore, we have probability at most $Q_0$ that given a random
$\Theta(n,G,\alpha)$-sequence $S$, $\match(G_L, G_R,S)=\emptyset$.
\end{proof}

Lemma~\ref{large-algphabet-shift0-lemma} shows that we can use $G_l$
and $G_r$ to  extract most of the motif regions for the sequences in
$Z_2$ if
 $G'=G_{L}$ (recall that $G_{L}$ is defined right after Lemma~\ref{base}).

\begin{lemma}\label{large-algphabet-shift0-lemma} Assume that $G_l$ and $G_r$
are two sequences of length $d_0\log n$, and
 $G_i=\match(G_l, G_r, S_i'')$ for $S_i''\in Z_2=\{S_1'',\cdots, S_{k_2}''\}$ and $i=1,\cdots, k_2$  (recall that
 each sequence
$G_i$ is either an empty sequence or a sequence of the length
$|G_l|$).
\begin{enumerate}
\item
 If $G_l=G_{L}$, $G_r=G_R$, and there are no more than $yk_2$ ($y\in
 [0,1]$)
sequences $S_i''$ with $\roughleft_{S_i''}\not\in
[\LB(S_i'')-(v+u_2),\LB(S_i'')]$ or $\roughright_{S_i''}\not\in
[\RB(S_i''),\RB(S_i'')+(v+u_2)]$, then the probability is  at most
$e^{-{\epsilon^2 k_2\over 3}}$ that there are more than
$(Q_0+y+\epsilon)k_2$ sequences $G_i$ with $G_i=\emptyset$.
\item
For arbitrary $G_l$ and $G_r$, with probability at most
$e^{-{\epsilon^2 k_2\over 3}}$, $|\{i|G_i\not=\emptyset\ {\rm and}\
G_i\not=\aleph(S_i''), i=1,\cdots, k_2\}|> (R+\epsilon)k_2$, where
$R$ is defined in Definition~\ref{param-def}.
\end{enumerate}
\end{lemma}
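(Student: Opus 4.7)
The plan is to handle the two statements separately, both via Chernoff bounds applied to independent events across the $k_2$ input sequences in $Z_2$.

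For statement 1, I will show that for each $S_i''$, if two favorable conditions hold then $\match(G_L,G_R,S_i'')\neq\emptyset$. The first condition is that $S_i''$ contains a stable motif region (Definition~\ref{stable-def}); by Lemma~\ref{base}, this fails with probability at most $Q_0$, independently across $i$. The second condition, namely that $\roughleft_{S_i''}\in[\LB(S_i'')-(v+u_2),\LB(S_i'')]$ and $\roughright_{S_i''}\in[\RB(S_i''),\RB(S_i'')+(v+u_2)]$, fails for at most $yk_2$ indices by hypothesis. When both hold, the search window $[\roughleft_{S_i''},\roughleft_{S_i''}+(v+u_2)]$ in \texttt{Match} contains the true left boundary $\LB(S_i'')$, and at $a=\LB(S_i'')$ we need to verify left-matching of $G_L$ with $S_i''[a,a+|G_L|-1]=\aleph(S_i'')[1,d_0\log n-1]$. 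The first $v-1$ characters of both sides coincide with $G[1,v-1]$ by stability condition~(1) of Definition~\ref{stable-def}; for $v\le i\le d_0\log n$, stability condition~(3) applied to both $S_{2i-1}'$ and $S_i''$ gives $\diff(G_L[1,i],G[1,i])\le\beta/2$ and $\diff(\aleph(S_i'')[1,i],G[1,i])\le\beta/2$, so the triangle inequality yields $\diff(G_L[1,i],\aleph(S_i'')[1,i])\le\beta$, verifying left-matching. The symmetric argument handles right-matching with $G_R$. Thus the bad indices are bounded by (number without a stable region) $+\,yk_2$, and the first term is bounded by $(Q_0+\epsilon)k_2$ with probability at least $1-e^{-\epsilon^2k_2/3}$ by Corollary~\ref{chernoff-lemma-a}, giving the desired $(Q_0+y+\epsilon)k_2$ threshold.

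For statement 2, for each $i$ let $X_i$ be the indicator of the event $G_i\ne\emptyset$ and $G_i\ne\aleph(S_i'')$. By statement~1 of Lemma~\ref{small-shift-lemma}, applied with the fixed sequences $G_l,G_r$, we have $\Pr[X_i=1]\le R$, and the events are independent because the sequences $S_1'',\ldots,S_{k_2}''$ are independent $\Theta(n,G,\alpha)$-sequences (the sequence $S_{2i-1}'$ that determines $G_l,G_r$ lies in $Z_1$, disjoint from $Z_2$, so conditioning on $G_l,G_r$ preserves independence across $Z_2$). Applying Corollary~\ref{chernoff-lemma-a}(i) to $\sum_i X_i$ gives the bound $e^{-\epsilon^2 k_2/3}$ on the probability that $\sum_i X_i>(R+\epsilon)k_2$.

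The main subtlety will be statement~1's reduction: verifying that ``stable motif region plus good rough boundaries implies \texttt{Match} succeeds'' requires careful bookkeeping so that the left-matching definition (Definition~\ref{match-def}) is met at \emph{every} prefix length $v\le i\le d_0\log n$ simultaneously, not just on average. This is exactly what the uniform bound $\diff(G'[1,h],G[1,h])\le\beta/2$ for all $h\ge v$ in Definition~\ref{stable-def} is designed to deliver, and the triangle inequality then propagates it to the comparison between $G_L$ and $\aleph(S_i'')$ without losing the uniformity in $h$. A secondary point is ensuring independence in statement~2 despite $G_l,G_r$ being derived from $Z_1$, which is handled by conditioning on $Z_1$ and appealing to the $Z_1$--$Z_2$ disjointness.
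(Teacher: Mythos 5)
Your proposal is correct and follows essentially the same route as the paper: reduce the per-sequence failure events to ``$S_i''$ lacks a stable motif region'' (probability at most $Q_0$, via Lemma~\ref{base}/Lemma~\ref{small-shift-lemma}) for statement~1 and to the $R$-bound of Lemma~\ref{small-shift-lemma} for statement~2, add the $yk_2$ boundary exceptions, and apply Corollary~\ref{chernoff-lemma-a} across the independent sequences of $Z_2$. The only difference is that you spell out the step the paper leaves implicit --- that a stable region together with good rough boundaries forces $\match$ to succeed, via the triangle inequality $\beta/2+\beta/2\le\beta$ and the exact agreement on the first $v-1$ characters --- which is a welcome elaboration, not a deviation.
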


\begin{proof} Recall that sequence $G_{1L}$ is selected right after
Lemma~\ref{base}.

Statement i: By Lemma~\ref{small-shift-lemma}, for every $S_i''\in
Z_2$, the probability is  at most $Q_0$ that $S_i''$ does not
contain a stable motif region $\aleph(S_i'')$. By
Corollary~\ref{chernoff-lemma-a}, we have probability at most
$e^{-{\epsilon^2 k_2\over 3}}$ that there are more than
$(Q_0+y+\epsilon)k_2$ sequences $G_i$ with $G_i=\emptyset$.

Statement ii:  By Lemma~\ref{small-shift-lemma}, the probability is
at most $R$ that $G_i\not=\aleph(S_i'')$.  By
Corollary~\ref{chernoff-lemma-a}, with probability  at most
$e^{-{\epsilon^2 k_2\over 3}}$, $|\{i|G_i\not=\aleph(S_i''),
i=1,\cdots, k_2 \}|> (R+\epsilon)k_2$.
\end{proof}

%\subsection{Proof of Main Theorem}

%We give the proof of Theorem~\ref{main-theorem}.

\begin{definition}\scrod
\begin{itemize}
\item
 Given two sequences $G_r$ and $G_r$, define

$M(G_r, G_r)=\{G_i'': G_i''=$Match$(G_l, G_r,
\roughleft_{S_i''},\roughright_{S_i''}, S_i'')$ $i=1,\cdots, k_2\}$.
\item
For a $\Theta_{\alpha}(n,G)$ sequence $S$, define $G_{S,L}$ to be
the $\aleph(S)[1,d_0\log n]$, which is the leftmost subsequence of
length $d_0\log n$ in the motif region of $S$.
\item
For a $\Theta_{\alpha}(n,G)$ sequence $S$, define  $G_{S,R}$ to be
the $\aleph(S)[m-d_0\log n+1,m]$, which is the rightmost subsequence
of length $d_0\log n$ in the motif region of $S$, where
$m=|G|=|\aleph(S)|$.
\end{itemize}
\end{definition}

the condition~\ref{V_0-condition} of Lemma~\ref{general-lemma}

\begin{lemma}\label{general-lemma}
Assume that we have the following conditions:
\begin{enumerate}
%\item
%With probability at most $p(n)$, Initial-Boundary($S_1,S_2)$ returns
%$(L_{S_1}, R_{S_1}, L_{S_2}, R_{S_2})$ with $L_{S_i}\not\in
%[\LB(S_i)-(v+u_1),\LB(S_i)]$
% or $R_{S_i}\not\in [\RB(S_i),\RB(S_i)+(v+u_1))]$ for $i=1,2$; and
%1. CollisionDetection$(S_1, U_1,S_2,U_2)$ returns
%$(L_{S_1}, R_{S_1}, L_{S_2}, R_{S_2})$ with $L_{S_i}\in
%[\LB(S_i)-L_0(S_i,L_{S_i},R_{S_i}),\LB(S_i)]$
% and $R_{S_i}\in [\RB(S_i),\RB(S_i)+L_0(S_i,L_{S_i},R_{S_i})]$ for $i=1,2$
% with probability at most $p_{selection}(n,L)$.
% has
%probability $p_{selection}(n,L)$ to provide collision for these
%$U_1$ and $U_2$ returned by Point-Selection($S_1,S_2,L$);
\item\label{varsigma1-condition}
For each $L$ with $0<L\le {|G|\over 2}$, with probability at most
$\varsigma_1(n)$, $L_{S_i}\not \in [\LB_{S_i}-2L, \LB_{S_i}+2L]$ and
$R_{S_i}\not \in [\RB_{S_i}-2L, \RB_{S_i}+2L]$ for $i=1,2$, where
$(L_{S_1}, R_{S_1}, L_{S_2}, R_{S_2})=$Collision-Detection($S_1,
U_1, S_2, U_2)$, $U_1=$Point-Selection$(S_1,L, [1,|S_1|])$, and
$U_2=$Point-Selection$(S_2,L, [1,|S_2|])$.

\item\label{varsigma2-condition}
For each $L$ with $0<L\le {|G|\over 2}$, if $S_1$ has
$\roughleft_{S_1}\not \in [\LB_{S_1}-L, \LB_{S_1}+L]$ and
$\roughright_{S_1}\not \in [\RB_{S_1}-L, \RB_{S_1}+L]$, then
 with probability at most
$\varsigma_2(n)$, $L_{S_i''}\not \in [\LB_{S_i''}-2L,
\LB_{S_i''}+2L]$ for $i=1,2$, where $(L_{S_1}, R_{S_1}, L_{S_i''},
R_{S_i''})=$Collision-Detection($S_1, U_1, S_i'', U_2)$,
$U_1=$Point-Selection$(S_1,L,
[\roughleft_{S_1}-2L,\roughleft_{S_1}+2L])\cup$
Point-Selection$(S_1,L,
[\roughright_{S_1}-2L,\roughright_{S_1}+2L])$, and
$U_2=$Point-Selection$(S_i'',L, [1,|S_i''|])$.

\item\label{P0-Q0-condition}
The inequality $(P_0+Q_0)<c_0$ holds for some constant $c_0<1$,
where $Q_0$ is defined at equation (\ref{Q0-def-eqn}) and
$P_0=\varsigma_1(n)+ {2(v+u_1)c^{v+u_1}\over (1-c)^2}+{c^v\over
1-c}+{1\over 5\cdot 2^x n}$.

\item\label{V_0-condition} The inequality
$1-2(Q_0+V_0+(R+2\epsilon))-(\alpha+\epsilon)>0$ holds, where
$V_0=(2(\varsigma_2(n)+ (v+u_1){c^{v+u_2}\over 1-c}+{c^{v}\over
1-c})+\epsilon)$.
\end{enumerate}

Then the algorithm generates a set of at most $k_2$ subsequences for
voting and votes a sequence $G'$ such that

(1) with probability at most $e^{-\Omega(k_1)}+e^{-\Omega(k_2)}$,
$|G'|\not=|G|$, and

(2) for each $1\le i\le |G|$, with probability at most
$e^{-\Omega(k_1)}+e^{-\Omega(k_2)}$, $G'[i]\not=G[i]$.
\end{lemma}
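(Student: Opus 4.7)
My approach would be to split the argument into a preprocessing half and a voting half, and to show that each of the bad events that could prevent the voting step from recovering $G$ has probability at most $e^{-\Omega(k_1)}+e^{-\Omega(k_2)}$. The preprocessing half handles everything through the rough-boundary computation on $Z_2$; the voting half shows that the output of Vote agrees with $G$ character by character. The driving quantitative input is condition~(\ref{V_0-condition}), which guarantees that the fraction of ``correct'' sequences that survive to voting strictly exceeds the fraction of ``bad'' sequences, even with an $\alpha$-fraction of mutations inside the correct ones.

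\textbf{Preprocessing half.} First, for each pair $(S_{2i-1}',S_{2i}')\in Z_1$, I would combine Lemma~\ref{base} (probability $\leq Q_0$ that the sequence lacks a stable motif region) with Lemma~\ref{initial-boundary-lemma} (probability $\leq P_0$ of bad rough boundaries once a stable region exists) to bound by $P_0+Q_0$ the probability that $\mathrm{Initial\mbox{-}Boundaries}(S_{2i-1}',S_{2i}')$ returns boundaries outside the $(v+u_1)$-windows of the true boundaries. Condition~(\ref{P0-Q0-condition}) then puts this probability strictly below the $\tfrac12$ threshold needed by Lemma~\ref{select-median-lemma} (with the constants chosen in Definition~\ref{param-def} so that the actual bound is $\leq \tfrac34 < \tfrac12$ after the implicit slack is unpacked), so a Chernoff/median argument gives $l_{\mathrm{motif}}\in[|G|-2(v+u_1),|G|+2(v+u_1)]$ with probability $1-e^{-\Omega(k_1)}$. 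Using a second Chernoff bound on the same events, a constant fraction of the $S_{2i-1}'\in Z_1$ simultaneously (i) contain stable motif regions and (ii) have their rough boundaries within the $(v+u_1)$-window, so at least one such $S_{2i-1}'$ is available to serve as the anchor in the outer loop. Fixing such an anchor, Lemma~\ref{left-boundary-detect-lemma} (whose hypotheses are exactly conditions~(\ref{varsigma1-condition}) and~(\ref{varsigma2-condition})) bounds, by $e^{-\Omega(k_2)}$, the probability that more than $V_0 k_2$ sequences in $Z_2$ end up with rough boundaries outside the $(v+u_2)$-windows of their true boundaries.

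\textbf{Voting half.} Conditioned on the anchor $S_{2i-1}'$ having a stable motif region, its left/right endpoints $G_L$ and $G_R$ of length $d_0\log n$ are genuine copies of the motif ends, so they appear among the pairs tried by \phasetwo. I would then apply Lemma~\ref{large-algphabet-shift0-lemma}(i) with $y=V_0$ to bound by $e^{-\Omega(k_2)}$ the probability that more than $(Q_0+V_0+\epsilon)k_2$ of the $G_i''$ are empty, which is at most the threshold in \phasetwo, so the algorithm proceeds to Vote with these $(G_l,G_r)$. By Lemma~\ref{large-algphabet-shift0-lemma}(ii), with probability $1-e^{-\Omega(k_2)}$ at most $(R+\epsilon)k_2$ of the non-empty $G_i''$ fail to coincide with $\aleph(S_i'')$. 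Hence at least a $1-(Q_0+V_0+R+O(\epsilon))$-fraction of the $G_i''$ are the actual motif regions of their sequences. For each column $j\le |G|$, the characters contributed by these correct sequences are independent, each equal to $G[j]$ with probability $\ge 1-\alpha$, so Corollary~\ref{chernoff-lemma-a} gives that at least a $1-\alpha-\epsilon$ fraction of them vote for $G[j]$ except with probability $e^{-\Omega(k_2)}$. Condition~(\ref{V_0-condition}) was arranged so that
\[
(1-(Q_0+V_0+R+O(\epsilon)))(1-\alpha-\epsilon)\;>\;(Q_0+V_0+R+O(\epsilon))+(\alpha+\epsilon)(1-(Q_0+V_0+R)),
\]
which is exactly the statement that the count of correct votes strictly exceeds the worst-case count for any competing character; thus $G'[j]=G[j]$ with the claimed failure probability. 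Because the $k_2$ winning $G_i''$'s all have length $|G|$ under these events, $|G'|=|G|$ as well, handling clause~(1).

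\textbf{Main obstacle.} The delicate point is coordinating the two sources of randomness: the same anchor $S_{2i-1}'$ is used to compute the rough boundaries for \emph{every} $S_j''\in Z_2$, so the events ``$S_j''$ has correct rough boundaries'' are not independent across $j$. My plan to handle this is to condition on the anchor's own motif region (it is a fixed $\Theta_\alpha(n,G)$-sequence, conditioned on being stable), and then apply Lemma~\ref{left-boundary-detect-lemma}, which already isolates exactly the per-$S_j''$ events (``$S_j''$ is stable on both sides'' and ``the collision between the anchor's sampled positions and $S_j''$'s sampled positions lands inside the motif''). Once those events are identified as depending only on the randomness of $S_j''$ conditional on the anchor, they are mutually independent across $j$ and a Chernoff bound closes the argument. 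The second subtle point is that in case $G_l\neq G_L$ or $G_r\neq G_R$, \phasetwo~might still accept a bad pair; I would handle this by noting that any accepted $(G_l,G_r)$ still triggers Lemma~\ref{large-algphabet-shift0-lemma}(ii), which bounds spurious matches, while the acceptance test itself forces the number of empty sequences to be small, so the same counting inequality derived above applies and the voting still recovers $G$.
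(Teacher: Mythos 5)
Your overall plan coincides with the paper's own proof: the same preprocessing/voting split, the same chain Lemma~\ref{base} plus Lemma~\ref{initial-boundary-lemma} feeding Lemma~\ref{select-median-lemma} to pin $l_{motif}$, the same reliance on an anchor $S_{2i-1}'$ that has both a stable motif region and good rough boundaries (the paper bounds the failure of this by $(P_0+Q_0)^{k_1}$, you by a Chernoff bound --- either gives $e^{-\Omega(k_1)}$), Lemma~\ref{left-boundary-detect-lemma} for the $V_0k_2$ bound on badly-bounded sequences of $Z_2$, and Lemma~\ref{large-algphabet-shift0-lemma}(i)--(ii) plus a per-column Chernoff bound in the voting phase. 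Your ``main obstacle'' discussion (conditioning on the anchor so the per-$S_j''$ events become independent) and your treatment of a data-dependent accepted pair $(G_l,G_r)$ are exactly the paper's remarks and its statement (iv), where the union bound over the constantly many candidate pairs appears explicitly as $2(v+u_1)e^{-\epsilon^2 k_2/3}$; you should state that union bound, but it changes nothing asymptotically.

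The one place where your write-up outruns the stated hypotheses is the final vote count. Writing $B=Q_0+V_0+R+2\epsilon$, your displayed inequality $(1-B)(1-\alpha-\epsilon) > B+(\alpha+\epsilon)(1-B)$ is equivalent to $1-2B>2(1-B)(\alpha+\epsilon)$, which is strictly stronger than condition~(\ref{V_0-condition}), namely $1-2B>\alpha+\epsilon$, because $2(1-B)>1$ once $B<1/2$; for instance $B=0.4$, $\alpha+\epsilon=0.19$ satisfies the condition but violates your inequality. So the claim that condition~(\ref{V_0-condition}) ``was arranged'' to give your inequality does not follow from the lemma's hypotheses as stated. The paper instead makes the cruder comparison --- correct unmutated votes, at least $(1-B)k_2-(\alpha+\epsilon)k_2$, against the at most $Bk_2$ elements not from motif regions --- which is exactly what condition~(\ref{V_0-condition}) supplies (your sharper accounting, which also charges mutated motif-region votes to a competing character, is arguably the more faithful comparison, but to use it you would have to strengthen the hypothesis or invoke the extra slack in the parameter choices of Definition~\ref{param-def} explicitly). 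A smaller slip: the parenthetical ``$\le 3/4<1/2$'' in your median step is garbled; what Lemma~\ref{select-median-lemma} needs is $P_0+\epsilon<1/2$ (only $P_0$, not $P_0+Q_0$, governs the boundary events there), and the $3/4$ of inequality~(\ref{support-P0-Q0-inequality}) is used to verify condition~(\ref{P0-Q0-condition}), not the median threshold.
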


Before proving Lemma~\ref{small-shift-lemma}, we note that both
$\varsigma_1(n)$ and $\varsigma_2(n)$ is at most ${1\over 2^xn^3}$
for all of the three algorithms. They will be proved by
Lemma~\ref{collision-lemma} and Lemma~\ref{collision-lemma2} for the
case~\algtype=\sublinear, Lemma~\ref{collision-lemma-2A} and
Lemma~\ref{collision-all-select-lemma} for the
case~\algtype=\randomized, and Lemma~\ref{collision-lemma2-3A} for
the case~\algtype=\deterministic.

\begin{proof}
%Let $k=2k_1+k_2$ be the total number of input sequences.
%We also assume that $k\le d_1\log n$ as that $d_1\log n$ many input
%sequences will be enough to recover the true motif with high
%probability.

By Lemmas~\ref{initial-boundary-lemma}% and~\ref{collision-lemma}
, with probability at most $P_0=\varsigma_1(n)+
{2(v+u_1)c^{v+u_1}\over (1-c)^2}+{c^v\over 1-c}+{1\over 5\cdot 2^x
n}$, $\roughleft_{S_{2i-1}'}\not\in
[\LB(S_{2i-1}')-(v+u_1),\LB(S_{2i-1}')]$ or
$\roughright_{S_{2i-1}'}\not\in
[\RB(S_{2i-1}'),\RB(S_{2i-1}')+(v+u_1)]$.

 By
Lemma~\ref{select-median-lemma}, with probability at most
$P_a=e^{-{(0.5-P_0-\epsilon)^2k_1/3}}=e^{\Omega(k_1)}$, the
approximate motif length $l_{motif}$ is not in the range
$[|G|-2(v+u_1), |G|+2(v+u_1)]$.

 By Lemma~\ref{base}, with probability at
most $Q_0$, a $\Theta_{\alpha}(n, G)$ sequence does not contain a
stable motif region. Therefore, with probability at most
$P_1=(P_0+Q_0)^{k_1}$, the following statement is false.

(i) One of $S_{2i-1}'$ for $i=1,\cdots, k_1$ has
$\roughleft_{S_{2i-1}'}\in [\LB(S_{2i-1}')-(v+u_1),\LB(S_{2i-1}')]$,
$\roughright_{S_{2i-1}'}\in
[\RB(S_{2i-1}'),\RB(S_{2i-1}')+(v+u_1)]$, and has a stable motif
region.

By Lemma~\ref{left-boundary-detect-lemma}, with probability at most
$P_2=e^{-{\epsilon^2k_2\over 3}}$, there are more than
$(2(\varsigma_2(n)+ (v+u_1){c^{v+u_2}\over 1-c}+{c^{v}\over
1-c})+\epsilon)k_2$ sequences $S_i''$ with
$\roughleft_{S_i''}\not\in [\LB(S_i'')-(v+u_2),\LB(S_i'')]$ or
$\roughright_{S_i''}\not\in [\RB(S_i''),\LB(S_i'')+(v+u_2)]$. In
other words, with probability at most $P_2$, the following statement
is false:

(ii) There are no more than $V_0k_2$ sequences $S_i''$ with
$\roughleft_{S_i''}\not\in [\LB(S_i'')-(v+u_2),\LB(S_i'')]$ or
$\roughright_{S_i''}\not\in [\RB(S_i''),\RB(S_i'')+(v+u_2)]$, where
$V_0=(2(\varsigma_2(n)+ (v+u_1){c^{v+u_2}\over 1-c}+{c^{v}\over
1-c})+\epsilon)$.

%By Lemma~\ref{???}, with probability at most $k???$, there is a
%sequence that has rough left boundary with at least $v+u_1$ distance
%to its true left boundary.

Assume that Statement (ii) is true. By
Lemma~\ref{large-algphabet-shift0-lemma}, with probability at most
$P_3=c^{k_2}$, the following statement is false.

(iii) $M(G_L,G_R)$ contains at most $(Q_0+V_0+\epsilon)k_2$ empty
sequences.

We start from the rough left boundary $\roughleft_1$ of $S_1$ to
match the other left boundaries of $S_i''$ for $i=1,\cdots, k_2$.
There are totally at most $2(v+u_1)$ candidates to consider.

By Lemma~\ref{large-algphabet-shift0-lemma},  if $M(G_l, G_r)$,
which consists $k_2$ matched regions,  has at most
$(Q_0+V_0+\epsilon)k_2$ empty sequences, then it has more than
$(R+\epsilon)k_2$ from non-motif regions with probability at most
$P_4=2(v+u_1)e^{-{\epsilon^2 k_2\over 3}}$.
% to fail this property as we have at most $2(v+u_1)$ candidates.
After the pattern is fixed, those events in the matching are
considered to be independent each other. This is why we can apply
the Chernoff bound to deal with them. So, the probability is at most
$P_4$, the following statement is false.

(iv). If  $M(G_l, G_r)$ contains at most $(Q_0+V_0+\epsilon)k_2$
empty sequences, then $M(G_l, G_r)$ contains at most
$(Q_0+V_0+\epsilon+(R+\epsilon))k_2=(Q_0+V_0+(R+2\epsilon))k_2$
elements not from motif regions $\{\aleph(S_i''): 1\le i\le k_2\}$.

Therefore, with probability at most $P_1+P_2+P_3+P_4=
e^{-\Omega(k_1)}+e^{-\Omega(k_2)}$, the sequences are not ready for
voting in the next phase, which means the following two conditions
are satisfied:

(a). There exists $G_l$ and $G_r$ generated by the algorithm such
that %$|M(G_l,G_r)|$ contains at most $(Q_0+V_0+\epsilon)k_2$ empty sequences, and
$M(G_l,G_r)$ contains at most $(Q_0+V_0+(R+2\epsilon))k_2$ elements
not from motif regions $\{\aleph(S_i''): 1\le i\le k_2\}$.

(b). For every $G_l$ and $G_r$ that  $M(G_l, G_r)$ contains at most
$(Q_0+V_0+\epsilon)k_2$ empty sequences generated by the algorithm,
$M(G_l, G_r)$ contains at most
$(Q_0+V_0+\epsilon+(R+\epsilon))k_2=(Q_0+V_0+(R+2\epsilon))k_2$
elements not from motif regions $\{\aleph(S_i''): 1\le i\le k_2\}$.

Statement (1): For a $M(G_l, G_r)$ with at most
$(Q_0+V_0+(R+2\epsilon))k_2$ elements not from motif regions
$\{\aleph(S_i''): 1\le i\le k_2\}$, we still have
$k_2-(Q_0+V_0+(R+2\epsilon))k_2$ elements in $M(G_l, G_r)$ from
motif regions $\{\aleph(S_i''): 1\le i\le k_2\}$.
%By the choice of $v$ to satisfy inequality~(\ref{v-set-ineqn1}),
By by the condition~(\ref{V_0-condition}) in this lemma, we have
$k_2-(Q_0+V_0+(R+2\epsilon))k_2>(Q_0+V_0+(R+2\epsilon))k_2$.
Therefore, $|G'|$ is selected to be the length of $G$ in the
\phasethree().

Statement (2): For a $M(G_l, G_r)=\{G_1'',\cdots, G_{k_2}''\}$ with
at most $(Q_0+V_0+(R+2\epsilon))k_2$ elements not from motif regions
$\{\aleph(S_i''): 1\le i\le k_2\}$, we still have
$k_2-(Q_0+V_0+(R+2\epsilon))k_2$ elements in $M(G_l, G_r)$ from
motif regions $\{\aleph(S_i''): 1\le i\le k_2\}$.
%For the voting part to
%recover each character in motif, each character has probability at
%most $\alpha$ to mutate.
By Corollary~\ref{chernoff-lemma-a}, with probability at most
$e^{-{\epsilon^2 k_2\over 3}}$ there are more than
$(\alpha+\epsilon)k_2$ characters are mutated in the same position
among all $k_2$ the motif regions for the sequences in $Z_2$. We
have that
$k_2-(Q_0+V_0+(R+2\epsilon))k_2-(\alpha+\epsilon)k_2>(Q_0+V_0+(R+2\epsilon))k_2$
by the condition~(\ref{V_0-condition}) in this lemma.
% by inequality~(\ref{v-set-ineqn1}).
We let $G'[j]$ be the most
frequent character among $G_1''[j],\cdots, G_{k_2}''[j]$ in
\phasethree.
 Therefore, with probability at most $e^{-{\Omega(k_1)}}+e^{-{\Omega(k_2)}}$, $G'[j]\not=G[j]$.

%The computational time of the entire algorithm follows from Lemma~\ref{total-time-lemma}.
%By the selection of range of $L$, the computational time is
%determined by the function $({n\over L}M+L^2)\log n$. The second
%phase takes $O((v+u_1)(v+u_2)\log n)=O((\log\log n)(\log n))$ time
%to match two motif regions between two sequences
%Since $n^{2\over 5}$ is the maximal for $L$, the complexity is
%$O({n\over \sqrt{h}}(\log n)^{3\over 2}+h^2)\log n)$ time, where $n$
%is the longest length of any input sequences, and
%$h=\min(|G|,n^{2\over 5})$. This selection of range for $L$ can make
%the total time complexity sublinear.
\end{proof}

We will use multiple variable functions to characterize the
computational time for three algorithms. In order to unify the
complexity analysis of three algorithm, we introduce the following
notation.

\begin{definition}
A function $T(x,y): N\times N\rightarrow N$ is monotonic if it is
monotonic on both variables. If for arbitrary positive constants
$c_1$ and $c_2$,  $T(c_1x, c_2y)\le cT(x,y)$ for some positive
constant $c$, then $T(x,y)$ is {\it slow}.
\end{definition}

\begin{lemma}\label{total-time-lemma} Assume that $T(x,y)$, $s(n,L)$
and $g(n,l)$ are monotonic slow functions.
 Assume that Collision-Detection($S_1, U_1, S_2, U_2$) returns
 the result in time $t(n, ||U_1||+||U_2||)$ time and the Point-Selection($S_1, S_2,L)$) selects $s(n,L)$ positions in
 $g(n,L)$ time.
Assume that with probability at most $\varphi(n)$, the function does
not stop Initial-Boundaries() does not stop when $L\le |G|/4$, and
$||U_{S_{2i-1}'}||+||U_{S_j''}||$ in the algorithm \algmnam~ is no
more than $f(n,|G|)$.

 Then with probability at most $k_1\varphi(n)$,
the entire algorithm \algmnam~  does not stop in the time complexity
$(O(k_1(\sum_{i=1}^{i_0}(T(n,s(n,{n\over 2^in^{2/5}}))+g(n,{n\over
2^{i_0}n^{2/5}})))+k_1h^2\log n+k_1k_2 t(n,f(n, |G|))+h^2\log
n)+k_1k_2(\log n)(\log\log n)), O(k_2))$, where $i_0$ is the largest
$j$ such that ${n\over 2^jn^{2/5}}\le \min(n^{2/5},|G|)$ and
$h=\min(n^{2/5}, |G|)$.
\end{lemma}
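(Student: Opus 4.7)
The plan is a purely accounting argument in which the only probabilistic event is the possible failure of Initial-Boundaries to exit before $L$ drops below $|G|/4$. Every other subroutine of \algmnam~has a deterministic running time once its inputs are fixed, so a union bound over the $k_1$ invocations of Initial-Boundaries inside Motif-Length-And-Boundaries already accounts for the failure probability $k_1\varphi(n)$. Conditioned on all $k_1$ calls succeeding, I will sum the cost of each phase: Motif-Length-And-Boundaries, the main preprocessing double loop, the $k_1$ invocations of \phasetwo, and the voting phase.

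For Motif-Length-And-Boundaries, each Initial-Boundaries call halves $L$ starting from $n^{2/5}$ and, on a successful run, performs at most $i_0$ halvings, where $i_0$ is defined in the statement so that $n/(2^{i_0}n^{2/5})\le h=\min(n^{2/5},|G|)$. In iteration $i$ with $L_i=n/(2^i n^{2/5})$, Point-Selection returns at most $O(s(n,L_i))$ positions in time $g(n,L_i)$, and Collision-Detection on the two sampled point sets costs $T(n,s(n,L_i))$ by hypothesis; the monotonic-slow property of $T$, $s$, and $g$ absorbs the constant-factor blow-up from combining two point sets. The closing Improve-Boundaries call costs $O(L^2\log n)=O(h^2\log n)$ by Lemma~\ref{v+u-lemma2}(iii). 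Summing over the $k_1$ pairs in $Z_1$ yields the $k_1\sum_{i=1}^{i_0}(T(n,s(n,n/(2^i n^{2/5})))+g(n,n/(2^{i_0}n^{2/5})))+k_1 h^2\log n$ portion of the claimed bound, where I have used monotonicity to replace each iteration's $g$-cost by its finest-granularity value.

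The main preprocessing loop, for each outer index $i\in\{1,\dots,k_1\}$, makes a pass over all $S_j''\in Z_2$ that invokes Collision-Detection and Improve-Boundaries on the pair $(S_{2i-1}',S_j'')$. The hypothesis $||U_{S_{2i-1}'}||+||U_{S_j''}||\le f(n,|G|)$ bounds each Collision-Detection by $t(n,f(n,|G|))$, so the full double loop contributes $k_1 k_2\cdot t(n,f(n,|G|))$. The Improve-Boundaries calls in this phase are absorbed into the $h^2\log n$ summands, with the trailing $h^2\log n$ term handling the final successful round whose output is handed to \phasetwo. Each \phasetwo~invocation enumerates the $O(1)$ candidate pairs $(G_l,G_r)$ inside the window $[\roughleft_{S'},\roughleft_{S'}+(v+u_1)]\times[\roughright_{S'}-(v+u_1),\roughright_{S'}]$, and for each candidate runs Extract, which calls Match on $k_2$ sequences. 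Each Match scans $O(v+u_2)=O(\log\log n)$ offsets on both sides and performs an $O(\log n)$ matched-test per offset; multiplied by the $k_1$ outer iterations this gives the $k_1 k_2(\log n)(\log\log n)$ summand. Finally, the voting phase contributes $O(k_2)$ per output character, which is the second component $O(k_2)$ of the complexity pair.

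The main obstacle is bookkeeping: I must verify that the monotonic-slow property of $T$, $s$, $g$ truly absorbs every constant-factor blow-up---including the doubling of search windows between Initial-Boundaries and Improve-Boundaries, the union of sampled intervals around each rough boundary in the main loop, and the switch from $L$ to $2L$ passed into Improve-Boundaries---and that no further probabilistic event besides the $k_1$ Initial-Boundaries failures can push the algorithm over the claimed time bound. The latter reduces to the observation that the early-termination ``go to Voting Part'' guard in the main outer loop can only decrease the running time, so the worst-case cost matches the stated bound deterministically and no additional terms leak into the failure probability $k_1\varphi(n)$.
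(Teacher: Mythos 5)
Your proposal is correct and follows essentially the same route as the paper's own proof: a union bound over the $k_1$ invocations of Initial-Boundaries for the $k_1\varphi(n)$ failure probability, followed by a conditional, phase-by-phase deterministic cost accounting (the halving sum for Initial-Boundaries, $k_1k_2\,t(n,f(n,|G|))$ for the main double loop, $O(k_1k_2(\log n)(\log\log n))$ for the Match calls, and $O(k_2)$ per character for voting). Your added bookkeeping remarks about the monotonic-slow property absorbing constant-factor window blow-ups only make explicit what the paper leaves implicit.
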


\begin{proof}
The function Initial-Boundaries()is executed $k_1$ times. According
to the condition that with probability at most $\varphi(n)$, the
function does not stop Initial-Boundaries(.) does not stop when
$L\le |G|/4$, we have the fact that with probability at most
$k_1\varphi(n)$, one of those executions of Initial-Boundaries(.)
does not stop when $L\le |G|/4$.
%By Lemma~\ref{collision-lemma},

In the rest of the proof, we assume that all executions of
Initial-Boundaries(.) stops when $L\le |G|/4$.

When $L=O(h)$, we detect rough left and right motif boundaries and
run Improve-Boundaries(), which takes $O(h^2\log n)$ time. It takes
$O(\sum_{i=1}^{i_0}(T(n,s(n,{n\over 2^in^{2/5}}))+g(n,{n\over
2^in^{2/5}}) +h^2\log n)$ time to run Initial-Boundaries($S_{2i-1}',
S_{2i}'$) one time for one pair ($S_{2i-1}', S_{2i}'$) in $Z_1$. It
takes $O(k_1(\sum_{i=1}^{i_0}(t(n,s(n,{n\over
2^in^{2/5}}))+g(n,{n\over 2^in^{2/5}})+k_1h^2\log n )$ time to run
Initial-Boundaries($S_{2i-1}', S_{2i}'$) one time for all pairs
($S_{2i-1}', S_{2i}'$) in $Z_1$.

It takes $k_2(t(n,f(n, |G|))+h^2\log n)$ time to find the rough
boundaries for all sequences in $Z_2$ with a fixed sequence $S$ from
$Z_1$ by executing the for loop ``For each $S_j''\in Z_2$" in the
algorithm \algmnam. It takes $k_1k_2(t(n,f(n, |G|))+h^2\log n)$ time
to find the rough boundaries for all sequences in $Z_2$ via all
sequences $S_{2i-1}'$ from $Z_1$ through for loop ``For each
$S_j''\in Z_2$" in the algorithm \algmnam.

Recall that parameters $v$ and $u_1$ are constants, and $u_2$ is
$O(\log\log n)$. Calling Match($G_l,G_r,S_i''$) takes $O((v+u_2)\log
n)$ time for each $S_i''\in Z_2$. The total times for calling
Match($G_l,G_r,S_i''$) is $O(k_1k_2(v+u_1)(v+u_2)\log
n)=O(k_1k_2(\log n)(\log\log n))$.

The voting part takes $O(k_2)$ time for executing voting for
recovering one character in motif.
% which is sufficient to output result with high probability.
%By the selection of range of $L$, the computational time is
%determined by the function $({n\over L}M+L^2)\log n$. The second
%phase takes $O((v+u_1)(v+u_2)\log n)=O((\log\log n)(\log n))$ time
%to match two motif regions between two sequences
%Since $n^{2\over 5}$ is the maximal for $L$, the complexity is
%$O({n\over \sqrt{h}}(\log n)^{3\over 2}+h^2)\log n)$ time, where $n$
%is the longest length of any input sequences, and
%$h=\min(|G|,n^{2\over 5})$. This selection of range for $L$ can make
%the total time complexity sublinear.
\end{proof}

\subsection{Randomized Algorithms for Motif Detection}

In this section, we present two randomized algorithms for motif
detection. The first one is a sublinear time algorithm that can
handle ${1\over (\log n)^{2+\mu}}$ mutation, and the second one is a
super-linear time algorithm that can handle $\Omega(1)$ mutation.
They also share some common functions.

\begin{lemma}\label{alpha-lemma} Let $c$ be a constant in $(0,1)$.
Assume $m$ and $n$ are two non-negative integer with $m\le n$. Then
for every integer $m_1$ with $0\le m_1\le {\delta_c m\over \ln n}$,
${n\choose m_1} c^m \le e^{(m\ln c)/2}$, where constant
$\delta_c={-\ln c\over 2}$ as defined in Definition~\ref{param-def}.
\end{lemma}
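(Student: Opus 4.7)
The plan is to bound the binomial coefficient by a simple power of $n$ and then compare exponents. Specifically, I would use the elementary bound $\binom{n}{m_1} \le n^{m_1}$ (valid since $m_1 \le n$), which converts the inequality into a purely exponential statement.

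With that bound, we have
\[
\binom{n}{m_1} c^m \le n^{m_1} c^m = \exp\bigl(m_1 \ln n + m \ln c\bigr).
\]
So it suffices to show $m_1 \ln n + m \ln c \le \tfrac{1}{2} m \ln c$, i.e.\ $m_1 \ln n \le -\tfrac{1}{2} m \ln c = m \delta_c$, where the last equality uses the definition $\delta_c = -\tfrac{1}{2}\ln c$ from Definition~\ref{param-def}. This is exactly the hypothesis $m_1 \le \delta_c m / \ln n$ (after multiplying through by $\ln n > 0$, assuming $n \ge 2$; the case $n = 1$ forces $m_1 = 0$ and the claim is trivial since $c^m \le 1 \le 1$ and actually $c^m \le e^{(m\ln c)/2}$ holds iff $(m\ln c)/2 \ge m\ln c$, which holds because $\ln c < 0$).

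The proof has no real obstacle; the only thing to be careful about is the direction of the inequality when working with $\ln c < 0$, and the trivial edge cases $m = 0$ or $m_1 = 0$, for which the claim reduces to $1 \le 1$ or to $c^m \le e^{(m\ln c)/2}$ respectively. I would write the argument as a single short display followed by the observation that the hypothesis is precisely the rearranged target inequality.
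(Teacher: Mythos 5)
Your argument is correct and is essentially identical to the paper's proof: both bound $\binom{n}{m_1}\le n^{m_1}$, rewrite everything as an exponential, and apply the hypothesis $m_1\le \delta_c m/\ln n$ together with $\delta_c=-\tfrac12\ln c$ to absorb the $n^{m_1}$ factor into $e^{(m\ln c)/2}$. Your extra attention to the sign of $\ln c$ and the degenerate cases $n=1$, $m_1=0$ is fine but not needed beyond what the paper does.
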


\begin{proof} We have the inequalities
\begin{eqnarray}
 {n\choose m_1} c^m
 %&\le& {e^{m_1}n^{m_1}\over
%m_1^{m_1}}c^m \\
%&=&({en\over m_1})^{m_1}c^m\\
&\le& n^{m_1}c^m\\
&=& e^{m_1\ln n}c^m\\
&\le& e^{{\delta_c m\over \ln n}\ln n}c^m\\
&=& e^{\delta_cm}e^{m\ln c}\label{final-eqn1}\\
 &=&e^{(m\ln c)/2\label{final2-eqn1}}
\end{eqnarray}
%For the transition from (\ref{final-eqn1}) to (\ref{final2-eqn1}),
%it is because of $0\le m_1\le {\delta_c m\over \log n}$ and
%inequality $(\ref{alpha-eqn})$ (see Definition~\ref{param-def} for
%$\delta_c$).
\end{proof}

\begin{lemma}\label{spread-lemma}
Let  $S=U\cup V$ be a set of $n$ elements with $U\cap V=\emptyset$.
Assume that $x_1,\cdots, x_m$ are $m$ random elements in $S$. Then
with probability at most ${||U||\choose m_1} ({||V||+m_1\over
n})^m$, the list $x_1,\cdots, x_m$ contains at most $m_1$ different
elements from $U$ (in other words, $||\{\ x_1,\cdots, x_m\}\cap
U||\le m_1$).
\end{lemma}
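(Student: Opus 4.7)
My plan is to bound the event by a union over subsets of $U$ of size exactly $m_1$, and then bound the probability of each single event by a direct product computation.

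First I would introduce, for each subset $U' \subseteq U$ with $|U'| = m_1$, the event
\[
E_{U'} = \{\, x_i \in V \cup U' \text{ for every } i = 1, \ldots, m \,\}.
\]
The key observation is a set-theoretic inclusion: if the list $x_1, \ldots, x_m$ contains at most $m_1$ distinct elements from $U$, then (assuming the natural case $m_1 \le \|U\|$) the set $\{x_1,\ldots,x_m\} \cap U$ can be padded with arbitrary additional elements of $U$ to produce some $U' \subseteq U$ with $|U'| = m_1$, and by construction every $x_i$ lies in $V \cup U'$. Hence the event of interest is contained in $\bigcup_{U' \subseteq U,\, |U'|=m_1} E_{U'}$.

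Next I would compute $\Pr[E_{U'}]$ for a fixed $U'$. Since the $x_i$ are independent uniform draws from $S$ and $|V \cup U'| = \|V\| + m_1$, each draw independently lands in $V \cup U'$ with probability $(\|V\|+m_1)/n$, giving
\[
\Pr[E_{U'}] = \left(\frac{\|V\|+m_1}{n}\right)^m.
\]
Finally I would apply the union bound over the $\binom{\|U\|}{m_1}$ choices of $U'$ to conclude
\[
\Pr\bigl[\|\{x_1,\ldots,x_m\} \cap U\| \le m_1\bigr] \le \binom{\|U\|}{m_1}\left(\frac{\|V\|+m_1}{n}\right)^m,
\]
as claimed.

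The argument is almost entirely bookkeeping, so I do not expect any real obstacle. The only subtlety is the padding step used to justify the set-theoretic inclusion; I would make sure to state it explicitly so the union bound is clearly valid. The edge case $\|U\| < m_1$ is vacuous since there the binomial coefficient is $0$ under the usual convention, and in the intended application (Lemma~\ref{alpha-lemma} combined with the sampling used in \algmname) $m_1$ is chosen well below $\|U\|$, so this case does not arise.
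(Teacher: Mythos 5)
Your proposal is correct and follows essentially the same route as the paper: both bound the event by the union, over all size-$m_1$ subsets $U'\subseteq U$, of the event that every $x_i$ falls in $V\cup U'$, bound each such event by $\left({\|V\|+m_1\over n}\right)^m$ using independence, and apply the union bound over the ${\|U\|\choose m_1}$ choices. Your write-up is in fact a bit cleaner about the padding step, which the paper states only loosely.
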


\begin{proof}
For a subset $S'\subseteq S$ with $|S'|=m_1$, the probability is at
most $({m_1\over n})^m$ that all elements $x_1,\cdots, x_m$ are in
$S'$. For every subset $X\subseteq S$ with $|X|\le m_1$, there
exists another subset $S'\subseteq S$ such that $|S'|=m_1$. We have
that $\prob[||\{\ x_1,\cdots, x_m\}\cap U||\le m_1]\le \prob[\{
x_1,\cdots, x_m\}\cap U\subseteq U' {\rm \ for \ some}\ U'\subseteq
U \ {\rm with \ } ||U'||=m_1]$. There are ${||U||\choose m_1}$
subsets of $U$ with size $m_1$. We have the probability at most
${||U||\choose m_1} ({||V||+m_1\over n})^m$ that $x_1,\cdots, x_m$
contains at most $m_1$ different elements in $U$.
\end{proof}

\begin{lemma}\label{intersection-lemma}
Let $\delta$ be the same as that in Lemma~\ref{alpha-lemma}.
%Let $\delta_1$ be the constant???
Let $\beta$ be a constant in $(0,1)$ and $c=1-{\beta\over 2}$.
%Let $m_1\log n\le \delta_1m$
Let $m_1\le {\delta_c m\over \ln \beta n}$  and $m\le
n^{1-\epsilon}$ for some fixed $\epsilon>0$. Let $S_1$ and $S_2$ be
two sets of $n$ elements with $|S_1\cap S_2|\ge \beta n$ and $C$ be
a set of size $|C|\le \gamma m_1$ for some constant $\gamma\in
(0,1)$. Then for all large $n$, with probability is at most
$2e^{-{(1-\gamma) m_1 m\over n}}$, we have $(A-C)\cap
(B-C)=\emptyset$, where $A=\{x_1,\cdots, x_m\}$ and $B=\{y_1,\cdots,
y_m\}$ are two sets, which may have multiplicities, of $m$ random
elements from $S_1$ and $S_2$, respectively.
\end{lemma}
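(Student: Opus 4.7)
(Proof plan)
The plan is to show that $A$ contains many distinct elements of the intersection $I=S_1\cap S_2$, so that even after removing $C$ there remain $(1-\gamma)m_1$ witnesses in $I\cap(A-C)$; then $B$, which samples from $S_2\supseteq I$, is very unlikely to miss all of them.

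\textbf{Step 1 (A hits $I$ many times).} Apply Lemma~\ref{spread-lemma} to the sample $A=\{x_1,\ldots,x_m\}$ drawn from $S_1$, taking $U=I$ and $V=S_1\setminus I$, so $\|U\|\ge\beta n$ and $\|V\|\le (1-\beta)n$. The spread lemma bounds the probability that $A$ contains fewer than $m_1$ distinct elements of $I$ by
\[
{\|U\|\choose m_1}\Bigl(\frac{\|V\|+m_1}{n}\Bigr)^{\!m}\le {n\choose m_1}\Bigl(1-\beta+\frac{m_1}{n}\Bigr)^{\!m}.
\]
Because $m\le n^{1-\epsilon}$ and $m_1=O(m/\ln n)$, for all large $n$ we have $m_1/n\le\beta/2$, so the bracket is at most $c=1-\beta/2$. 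Under the hypothesis $m_1\le\delta_c m/\ln(\beta n)$, Lemma~\ref{alpha-lemma} yields ${n\choose m_1}c^m\le e^{(m\ln c)/2}=e^{-\delta_c m}$, and $\delta_c m\ge (1-\gamma)m_1 m/n$ holds comfortably since $m_1/n\to 0$. Thus with probability at most $e^{-(1-\gamma)m_1 m/n}$, $A$ does not meet $I$ in $m_1$ distinct elements.

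\textbf{Step 2 (survivors after removing $C$).} Condition on the complementary event, so $|A\cap I|_{\mathrm{distinct}}\ge m_1$. Since $|C|\le\gamma m_1$, the set $T:=(A\setminus C)\cap I$ has at least $(1-\gamma)m_1$ distinct elements, and crucially $T\subseteq S_2$.

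\textbf{Step 3 (B must avoid $T$).} The event $(A-C)\cap (B-C)=\emptyset$ forces $B$ to miss $T$ entirely (elements of $T$ are not in $C$ by construction). Since each $y_j$ is a uniformly random element of $S_2$ (of size $n$) and $|T|\ge(1-\gamma)m_1$, the probability a fixed $y_j$ avoids $T$ is at most $1-(1-\gamma)m_1/n$. By independence of the $y_j$'s, the probability that all $m$ of them avoid $T$ is at most
\[
\Bigl(1-\tfrac{(1-\gamma)m_1}{n}\Bigr)^{\!m}\le e^{-(1-\gamma)m_1 m/n}.
\]

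\textbf{Step 4 (union bound).} Combining Steps 1 and 3 by a union bound gives the claimed bound $2e^{-(1-\gamma)m_1 m/n}$.

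The only delicate point is Step 1: we must convert the combinatorial bound ${n\choose m_1}c^m$ into an exponential decay of the correct order using Lemma~\ref{alpha-lemma}, which is where the hypothesis $m_1\le \delta_c m/\ln(\beta n)$ and the growth condition $m\le n^{1-\epsilon}$ are used to simultaneously control $m_1/n$ (so that $1-\beta+m_1/n\le c$) and to dominate the binomial coefficient. Steps 2--4 are routine once Step 1 is in hand.
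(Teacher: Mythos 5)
Your proof is correct and follows essentially the same route as the paper's: the paper likewise splits on whether $\|A\cap(S_1\cap S_2)\|\ge m_1$, bounds the failure of that event via Lemma~\ref{spread-lemma} together with Lemma~\ref{alpha-lemma} to get one term $e^{-(1-\gamma)m_1m/n}$, and bounds the probability that $B$ misses the $(1-\gamma)m_1$ surviving intersection elements by $(1-(1-\gamma)m_1/n)^m\le e^{-(1-\gamma)m_1m/n}$ for the other. The only difference is cosmetic: you verify $\delta_c m\ge(1-\gamma)m_1m/n$ by noting $m_1/n\to0$, while the paper writes out the equivalent inequality $2(1-\gamma)m_1/(-\ln c)\le n$ explicitly.
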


\begin{proof} In the entire proof of this lemma, we always assume
that $n$ is sufficiently large.
%Since  $m_1\le \delta_1m$ and $m\le n^{1-\epsilon}$ for some fixed
%$\epsilon>0$, we assume that $m_1\le {\beta n\over 2}$. We assume
%that $A$ has at least $m_1=\ceiling{\delta_1 m}$ different elements
%from $S_1\cap S_2$.
We are going to give an upper bound about the probability that $B$
does not contain any element in $A-C$. For each element $y_i\in B$,
with probability at most $1-{m_1\over n}$ that $y_i$ is not in $A$.
Therefore, the probability is at most $(1-{||A||-||C||\over n})^m$
that $B$ does not contain any element in $A-C$.

By Lemma~\ref{spread-lemma}, the probability is at most ${\beta
n\choose m_1} ({(1-\beta)n+m_1\over n})^m$ that $||A\cap (S_1\cap
S_2)||\le m_1$. We have the inequalities

\begin{eqnarray}
&&\prob[(A-C)\cap (B-C)=\emptyset]\\
&=&\prob[(A-C)\cap (B-C)=\emptyset|\ \ ||A\cap (S_1\cap S_2)||\ge
m_1]\cdot \prob[||A\cap (S_1\cap
S_2)||\ge m_1]+ \\
&&\prob[(A-C)\cap (B-C)=\emptyset|\ \ ||A\cap (S_1\cap S_2)||<
m_1]\cdot \prob[|A\cap (S_1\cap
S_2)|< m_1]\label{ineq1}\\
&\le& \prob[(A-C)\cap (B-C)=\emptyset|\ \ ||A\cap (S_1\cap S_2)||\ge
m_1]+ \prob[||A\cap (S_1\cap
S_2)||< m_1]\label{ineq2}\\
&\le& (1-{||(A\cap S_1\cap S_2)||-||C||\over n})^m+{\beta
n\choose m_1} ({(1-\beta)n+m_1\over n})^m\label{ineq3}\\
&\le& (1-{(1-\gamma)m_1\over n})^m+{\beta
n\choose m_1} ({(1-\beta)n+m_1\over n})^m\label{ineq3b}\\
&\le& e^{-{(1-\gamma) m_1m\over n}}+{\beta n\choose m_1}
({(1-\beta)n+m_1\over n})^m\label{ineq4}\\
&\le& e^{-{(1-\gamma) m_1m\over n}}+{\beta n\choose m_1}
(1-{\beta\over 2})^m\label{ineq5}\\
&\le& e^{-{(1-\gamma) m_1m\over n}}+e^{(m\ln c)/2}\label{ineq6}\\
&\le& 2e^{-{(1-\gamma) m_1 m\over n}}.\label{ineq7}
\end{eqnarray}
The inequality $(1-{(1-\gamma)m_1\over n})^m\le e^{-{(1-\gamma)
m_1m\over n}}$, which is used from (\ref{ineq3b}) to (\ref{ineq4}),
follows from the fact that $1-x\le e^{-x}$. The transition from
(\ref{ineq4}) to (\ref{ineq5}) follows from the fact ${m_1\over
n}\le {\beta\over 2}$ since $m_1=o(n)$ according to the conditions
of the lemma.

It is easy to see that ${2(1-\gamma)m_1m\over -m\ln
c}={2(1-\gamma)m_1\over -\ln c}\le n$ for all large $n$. Thus,
${(1-\gamma)m_1m\over n}\ge (m\ln c)/2$ (note that $\ln c<0$ as
$c\in (0,1)$). Thus, by Lemma~\ref{alpha-lemma},  ${\beta n\choose
m_1} (1-{\beta\over 2})^m\le e^{m\ln c/2} \le e^{-{(1-\gamma)
m_1m\over n}}$. This is why we have the transition from
(\ref{ineq6}) to (\ref{ineq7}). Therefore, $\prob[(A-C)\cap
(B-C)=\emptyset]\le 2e^{-{(1-\gamma) m_1m\over n}}$.
\end{proof}

\subsubsection{Sublinear Time Algorithm for ${1\over (\log
n)^{2+\mu}}$ Mutation Rate}

In this section, we give an algorithm for the case with at most
${1\over (\log n)^{2+\mu}}$ mutation rate. The performance of the
algorithm is stated in Theorem~\ref{main-theorem1}.

%\begin{corollary}Assume that $\tau $ and $\mu$ are fixed numbers in $(0,1)$
%and the alphabet size $t$ is at least $4$. Then there exist
%constants $c_0$ and $c_1$ such that
% if the length of the motif $G$ is at least $c_0\log n$ and the number of input $\Theta(n,G,o({1\over (\log
%n)^{2+\mu}}))$-sequences is at least $c_1\log n$, then algorithm
%\algmnam~has time complexity $(O({n\over \sqrt{h}}(\log n)^{3\over
%2}+h^2, O(k))$, and outputs $G$ with probability at least ${3\over
%4}$, where $n$ is the longest length of any input sequences, and
%$h=\min(|G|,n^{2\over 5})$.
%\end{corollary}

%\begin{proof}
%It follows from Theorem~\ref{main-theorem1}.
%\end{proof}

%\subsubsection{Analysis of the Sublinear Time Algorithm}

\begin{definition}
A position $p$ in the motif region $\aleph(S)$ of an input sequence
$S$ is {\it damaged} if there exists at least one  mutation in
$S[p,p+d_0\log n-1]$.
\end{definition}

\begin{lemma}\label{affected-points-lemma} Assume that $\alpha
L=(\log n)^{1+\Omega(1)}$. With probability at most $e^{-(\log
n)^{1+\Omega(1)}}$,  there are more than ${M_1\over (\log
n)^{\Omega(1)}}$ positions  that  are from the $M$ sampled positions
in an interval of length $L$ and are damaged.
\end{lemma}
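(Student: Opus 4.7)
The plan is to prove the bound by a two-step conditioning argument that disentangles the randomness of the mutations from the randomness of the sampling. First, let $N$ denote the number of positions in the length-$L$ interval whose character is a mutation of the motif. Since each character in the motif region mutates independently with probability at most $\alpha$, we have $E[N] \le \alpha L$. Because $\alpha L = (\log n)^{1+\Omega(1)}$ by hypothesis, applying Theorem~\ref{ourchernoff-theorem} with (say) $\delta = 1$ gives $\prob[N > 2\alpha L] \le e^{-\Omega(\alpha L)} = e^{-(\log n)^{1+\Omega(1)}}$.

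Second, conditional on the event $\{N \le 2\alpha L\}$, I would observe that a position $p$ in the interval is damaged only if at least one of the $d_0 \log n$ characters of $S[p, p + d_0 \log n - 1]$ was mutated, and each individual mutation can be responsible for the damage of at most $d_0 \log n$ positions. Hence the total number of damaged positions in the interval is bounded by $2\alpha L \cdot d_0 \log n$, so the fraction of damaged positions in the interval is at most $q := 2\alpha d_0 \log n$. Under the hypothesis $\alpha \le 1/(\log n)^{2+\mu}$ of Theorem~\ref{main-theorem1}, this fraction is $O((\log n)^{-1-\mu})$.

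Third, the $M = M(L)$ sampled positions are drawn independently and uniformly from the interval, so, conditional on the mutation pattern, their damage indicators $X_1,\dots,X_M$ are i.i.d.\ Bernoulli variables with success probability at most $q$. The conditional expectation of $Y = \sum_i X_i$ is therefore at most $qM$, which is much smaller than the target threshold $M_1/(\log n)^{c} = \delta_{c_0}\, M/(\log n)^{1+c}$ as long as the exponent $c$ in the target is chosen strictly smaller than $\mu$. Applying Theorem~\ref{ourchernoff-theorem} once more with a large deviation $\delta = \Theta((\log n)^{\mu - c}) \gg 1$ then yields $\prob[Y > M_1/(\log n)^{c}] \le e^{-\Omega(M/(\log n)^{1+c})}$. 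Combining $\alpha L \ge (\log n)^{1+\mu}$ with $\alpha \le (\log n)^{-2-\mu}$ forces $L \ge (\log n)^{3+2\mu}$, so $M = \Theta(\sqrt{L}\,(\log n)^{3/2})$ grows at least like $(\log n)^{3+\mu}$; this makes the failure probability at most $e^{-(\log n)^{1+\Omega(1)}}$ as well. A union bound over the two failure events completes the proof.

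The main obstacle is handling the correlation between the damage indicators of different sampled positions: they share information about the underlying mutation pattern and so are not independent under the joint randomness. Conditioning on the mutation count $N$ is exactly what resolves this, reducing the analysis to two independent Chernoff applications. A secondary bookkeeping issue is matching the various $\Omega(1)$ exponents so that both the ``fraction damaged'' bound from step two and the large-deviation Chernoff bound from step three yield the common $(\log n)^{1+\Omega(1)}$ failure exponent.
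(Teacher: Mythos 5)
Your proposal is correct and follows essentially the same route as the paper's proof: a Chernoff bound on the number of mutations in the length-$L$ interval, the observation that each mutation damages at most $d_0\log n$ positions so that a sampled position is damaged with (conditional) probability $O(\alpha\log n)$, and a second Chernoff bound over the $M$ independently sampled positions conditioned on the mutation pattern, followed by a union bound. The only difference is bookkeeping: the paper fixes $\delta=2$ in the second Chernoff application with threshold $(9\alpha\log n)M$, whereas you use a large deviation $\delta=\Theta((\log n)^{\mu-c})$ against the threshold $M_1/(\log n)^{c}$ directly, which if anything gives a cleaner failure exponent $\Omega\left(M/(\log n)^{1+c}\right)$.
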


\begin{proof} By Theorem~\ref{ourchernoff-theorem},
with probability at most $P_1=2^{-\alpha L}$ (let $\delta=2$), there
are more than $3\alpha L$ mutation in an interval of length $L$.
Therefore, with probability at most $2^{-\alpha L}=e^{-(\log
n)^{1+\Omega(1)}}$, there are more than $3\alpha L\log n$ positions
are damaged. Therefore, each random position in an interval of
length $L$ has at most probability ${3\alpha L\log n\over L}=3\alpha
\log n$ to be damaged.

% Let $\gamma={1\over (\log n)^{1+\delta_1}}$
%for a very small $\delta_1$.
Since $\alpha=({1\over (\log n)^{2+\Omega(1)}})$ and $M$ positions
are sampled,  by Theorem~\ref{ourchernoff-theorem}, with probability
at most $P_2={2^{-(3\alpha \log n) M}}=e^{-(\log n)^{1+\Omega(1)}}$
(let $\delta=2$), the number of damaged positions sampled in an
interval of length $L$ is more than $(1+\delta)3\alpha \log
n)M=(9\alpha \log n)M={M_1\over (\log n)^{\Omega(1)}}$. Thus, with
total probability at most $P_1+P_2=e^{-(\log n)^{1+\Omega(1)}}$,
there are more than ${M_1\over (\log n)^{\Omega(1)}}$ damaged
positions that are from the $M$ sampled positions in an interval of
length $L$.
\end{proof}

\begin{definition}
Let $A$ be a set of positions in an input sequence $S$ with
$\aleph(S)=[i,j]$. Let $A(S,\aleph(S))=A\cap [i,j]$.
\end{definition}

\begin{lemma}\label{collision-lemma} Assume that  $|G|\ge \thresholdL$ and $d_0\log n\le L\le |G|/2$.
Let $I_1$ be a union of intervals that include $[\LB(S_1)-2L,
\LB(S_1)+2L]$ and $[\RB(S_1)-2L, \RB(S_1)+2L]$. Let
$U_1=$Point-Selection$(S_1,L, I_1)$,
$U_2=$Point-Selection$(S_2,L,[1,|S_2|])$,   and
  $(L_{S_1},R_{S_1},L_{S_2},R_{S_2})=$Collision-Detection$(S_1,U_1,S_2,,U_2)$.
Then
\begin{enumerate}
\item
With probability at most ${1\over 2^{x}n^{3}}$, the left rough
boundary $\L_{S_1}$ has at most $2L$ distance from $\LB(S_1)$ and
the left rough boundary $L_{S_2}$ has at most $2L$ distance from
$\LB(S_2)$.

\item
 With probability at most ${1\over 2^{x}n^{3}}$, the right rough boundary $R_{S_1}$ has at most
$2L$ distance from $\RB(S_1)$; and the right boundary of $R_{S_2}$
has at most $2L$ distance from $\RB(S_2)$.
\end{enumerate}
\end{lemma}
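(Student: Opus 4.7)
The plan is to decompose the bad event---that $L_{S_1}$ differs from $\LB(S_1)$ by more than $2L$, or that $L_{S_2}$ differs from $\LB(S_2)$ by more than $2L$---into two failure modes and bound each separately. The first mode is a \emph{spurious collision}: some pair of anchors $(a,b)\in I_1 \times [1,|S_2|]$, with at least one of $S_1[a,a+d_0\log n-1]$, $S_2[b,b+d_0\log n-1]$ lying outside the motif region, nevertheless satisfies $\diff \le \omega_{\sublinear}=0$. The second mode is the \emph{absence of a genuine motif-to-motif collision} inside the $2L$-neighborhood of the true left boundaries, which would force Collision-Detection to settle on a distant anchor.

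For the spurious-collision mode, I would union-bound over the at most $n^2$ candidate anchor pairs: by Lemma~\ref{base-probability-lemma}, an exact match between a length-$d_0\log n$ substring anchored in a random background region and any fixed pattern occurs with probability at most $c^{d_0\log n}$, so the choice of $d_0$ in inequality~(\ref{d0-sel-eqn}) yields a total spurious-collision probability of at most $1/(5\cdot 2^x n)$. For the genuine-collision mode, I would restrict attention to the length-$L$ motif-interior windows $[\LB(S_1),\LB(S_1)+L]\subseteq I_1$ and $[\LB(S_2),\LB(S_2)+L]$. Point-Selection draws $M(L)$ random samples in each such window; Lemma~\ref{affected-points-lemma} then shows that, outside a probability-$e^{-(\log n)^{1+\Omega(1)}}$ event, fewer than $M_1(L)/(\log n)^{\Omega(1)}$ samples per window are damaged, and every non-damaged sample induces a substring equal to the corresponding shift of $G$. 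I then apply Lemma~\ref{intersection-lemma} with ambient set of size $L$ (the motif offsets $\{0,1,\ldots,L-1\}$), the two sample sets of size $m=M(L)$, a shared set of non-damaged offsets of size at least $(1-o(1))M_1(L)$, and $C$ absorbing the rare damaged samples. This gives probability at most $2e^{-\Omega(M(L)M_1(L)/L)} = 2e^{-\Omega((\log n)^2)}$ that no common non-damaged offset exists; whenever such an offset does exist, the two induced substrings are identical and yield a Collision-Detection match anchored within $2L$ of both $\LB(S_1)$ and $\LB(S_2)$.

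Summing the two tail bounds gives a total bad-event probability at most $1/(2^x n^3)$. With both failure modes excluded, the minimum $a_1$ returned by Collision-Detection lies in $[\LB(S_1)-2L,\LB(S_1)+2L]$: any smaller $a_1$ would be spurious, while the guaranteed motif-window collision caps $a_1$ at $\LB(S_1)+L$. The partner $b_1$ and the independently minimized $e_1 = L_{S_2}$ then sit at the corresponding motif-relative positions and hence within $2L$ of $\LB(S_2)$. Part~(2) follows by the mirror argument applied to the right-boundary windows $[\RB(S_i)-2L,\RB(S_i)+2L]$. The main technical obstacle is the parameter balance hidden in the second step: one must verify that the definitions in Definition~\ref{M-M1-def} simultaneously force $M_1(L)/M(L) = O(1/\log n)$ (so damaged samples are negligible) and $M(L)M_1(L)/L = \Omega((\log n)^2)$ (so Lemma~\ref{intersection-lemma} delivers the $1/(2^x n^3)$ tail), while keeping $M(L) = O(\sqrt{L}\,\polylog(n))$ consistent with the sublinear sampling budget.
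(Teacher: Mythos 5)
Your core argument for the genuine-collision mode is essentially the paper's own proof: pick intervals near the true left boundaries whose motif portions overlap, take $m=M(L)$, $m_1=M_1(L)$, ambient size $L$, let $C$ absorb the damaged samples via Lemma~\ref{affected-points-lemma}, and invoke Lemma~\ref{intersection-lemma} to get an $e^{-\Omega((\log n)^2)}$ tail; your parameter check $M(L)M_1(L)/L=\Omega((\log n)^2)$ is correct. The substantive gap is the accounting for your spurious-collision mode. You bound each of the at most $n^2$ anchor pairs by $c^{d_0\log n}$ (the $\diff\le\beta$ bound of Lemma~\ref{base-probability-lemma}) and, via inequality~(\ref{d0-sel-eqn}), obtain a total of $1/(5\cdot 2^x n)$; but the lemma claims failure probability $1/(2^x n^3)$, and $1/(5\cdot 2^x n)\gg 1/(2^x n^3)$, so your final step ``summing the two tail bounds gives at most $1/(2^x n^3)$'' is false as written. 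Since the matching threshold is $0$ for the sublinear algorithm, a spurious collision is an \emph{exact} match of a length-$d_0\log n$ background substring, whose per-pair probability is $t^{-d_0\log n}$ with $t\ge 4$; because $c=e^{-\epsilon^2/3}$ is close to $1$, this is far smaller than $c^{d_0\log n}$, and one must add the short verification (using $\ln t\gg \ln(1/c)$ together with (\ref{d0-sel-eqn})) that $n^2 t^{-d_0\log n}=o(1/(2^x n^3))$. Without replacing the loose bound by the exact-match bound, your decomposition cannot reach the stated probability. (The paper itself sidesteps this: its proof of this lemma only establishes existence of a motif-to-motif collision within $2L$ of the true boundaries, and the background false-match cost $n^2 c^{d_0\log n}<1/(5\cdot 2^x n)$ is charged in later lemmas, e.g.\ as $P_3$ in Lemma~\ref{v+u-lemma2}.)

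A secondary, repairable inaccuracy: Point-Selection samples $M(L)$ points in each cell of a length-$L$ partition of the given interval, not in the windows $[\LB(S_i),\LB(S_i)+L]$ you describe, which in general straddle two cells; so ``$M(L)$ random samples in each such window'' is not literally what the algorithm provides. The paper instead fixes a pair of partition cells $A_i$ from $S_1$ and $B_j$ from $S_2$ whose motif-region portions overlap by at least $L/2$ in motif coordinates (such a pair exists and lies within $2L$ of the boundaries), and applies Lemma~\ref{intersection-lemma} with intersection parameter $1/2$ and ambient size $L$. Your argument survives this re-alignment unchanged, so this is presentational; the probability accounting above is the real gap.
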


\begin{proof}
We prove the following two statements which imply the lemma.
\begin{enumerate}
\item
With probability at most ${1\over 2^{x}n^{3}}$, there is no
intervals $A_i$ from $S_1$ and $B_j$ from $S_2$ such that (1)
$|A_i(S_1,\aleph(S_1))\cap B_j(S_2,\aleph(S_2))|$ is at least
${L\over 2}$; (2) the left boundary of $S_1$ has at most $2L$
distance from $A_i$; (3) the left boundary of $S_2$ has at most $2L$
distance from $B_j$; and (4) there is collision between the sampled
positions in $A_i$ and $B_j$.

\item
 With probability at most ${1\over
2^{x}n^{3}}$, there is no intervals $A_i$ from $S_1$ and $B_j$ from
$S_2$ such that (1) $|A_i(S_1,\aleph(S_1))\cap
B_j(S_2,\aleph(S_2))|$ is at least ${L\over 2}$; (2) the right
boundary of $S_1$ has at most $2L$ distance from $A_i$; (3) the
right boundary of $S_2$ has at most $2L$ distance from $B_j$; and
(4) there is collision between the sampled positions in $A_i$ and
$B_j$.
\end{enumerate}
We only prove the statement i. The proof for statement ii is similar
to that for statement i.
%We assume that the length of motif $G$ is at least $3L$.
Note that $L$ goes down by half each cycle in the algorithm. Assume
that $L$ satisfies the condition of this lemma.

Select $A_i$ from $S_1$ and $B_j$ from $S_2$ to be the first pair of
intervals with $||A_i(S_1,\aleph(S_1))\cap B_j(S_2,\aleph(S_2))||\ge
{L\over 2}$. It is easy to see that such a pair exists and both have
distance from the left boundary with distance at most $2L$. This is
because when an leftmost interval of length $L$ is fully inside the
motif region of the first sequence, we can always find the second
interval from the second sequence with intersection of length at
least ${L\over 2}$.

 Replace $m$ by $M(L)$, $m_1$ by $M_1(L)$ (see Definition~\ref{M-M1-def}), and $n$ by $L$ to apply
Lemma~\ref{intersection-lemma}. We also let $C$ be the set of
damaged positions  affected by the mutated positions. With
probability at most $o({1\over 2^xn^3})$, $C$ has size more than
$\Omega(M_1(L))$ by Lemma~\ref{affected-points-lemma}. With
probability at most $o({1\over 2^xn^3})$, there is an no
intersection $A_i$ from $S_1$ and $B_j$ from $S_2$.
\end{proof}

\begin{lemma}\label{collision-lemma2} Assume that   $|G|<\thresholdL$ and $L$ is an integer with $d_0\log n\le L\le |G|/2$.
Let $I_1$ be a union of intervals that include $[\LB(S_1)-2L,
\LB(S_1)+2L]$ and $[\RB(S_1)-2L, \RB(S_1)+2L]$. Let
$U_1=$Point-Selection$(S_1,L, I_1)$,
$U_2=$Point-Selection$(S_2,L,[1,|S_2|])$,   and
  $(L_{S_1},R_{S_1},L_{S_2},R_{S_2})=$Collision-Detection$(S_1,U_1,S_2,,U_2)$.
 Then
\begin{enumerate}
\item
With probability at most ${1\over 2^{x}n^{3}}$, the left rough
boundary $\L_{S_1}$ has at most $|G|/4$ distance from $\LB(S_1)$ and
the left rough boundary $L_{S_2}$ has at most $|G|/4$ distance from
$\LB(S_2)$.

\item
 With probability at most ${1\over 2^{x}n^{3}}$, the right rough boundary $R_{S_1}$ has at most
$|G|/4$ distance from $\RB(S_1)$; and the right boundary of
$R_{S_2}$ has at most $|G|/4$ distance from $\RB(S_2)$.
\end{enumerate}
\end{lemma}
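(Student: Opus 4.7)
The plan is to exploit the fact that, under the hypotheses $|G| < \thresholdL$ and $L \le |G|/2$, Point-Selection puts \emph{every} position of its interval into $U$, making Collision-Detection an exhaustive search over all pairs of length-$d_0\log n$ windows; for \algtype$=$\sublinear\ we have $\omega = 0$, so a ``collision'' means $X_1 = X_2$ exactly. The proof therefore reduces to two complementary facts: a correct collision exists at $a = \LB(S_1)$, $b = \LB(S_2)$; and no spurious collision exists with $a \le \LB(S_1) - |G|/4$ (and analogously for $S_2$, and for the right boundary).

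For the first fact, each character of $\aleph(S_i)$ mutates with probability at most $\alpha \le 1/(\log n)^{2+\mu}$, so the expected number of mutations in $\aleph(S_i)[1,d_0\log n]$ is $O((\log n)^{-1-\mu})$. A union bound (or Theorem~\ref{ourchernoff-theorem}) gives probability at most $1/(2^{x+1}n^3)$ that either prefix contains any mutation; on the complementary event both prefixes equal $G[1,d_0\log n]$, producing a valid collision at $a = \LB(S_1)$, $b = \LB(S_2)$, forcing the leftmost-collision choices to satisfy $L_{S_1} \le \LB(S_1)$ and $L_{S_2} \le \LB(S_2)$. For the second fact, suppose $a \le \LB(S_1) - |G|/4$. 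Provided $c_0$ is chosen so that $|G|/4 \ge d_0\log n$ (which only requires enlarging $c_0$ in Definition~\ref{param-def}), the entire window $X_1 = S_1[a, a+d_0\log n - 1]$ lies in the background region, so its characters are iid uniform over $\Sigma$ and independent of $S_2$. Hence, for each fixed $b$, $\prob[X_1 = S_2[b, b+d_0\log n-1]] \le t^{-d_0\log n}$, and a union bound over the $O(n^2)$ pairs $(a, b)$, together with inequality~(\ref{d0-sel-eqn}) and $t^{-1} \le c$ (with mild enlargement of $d_0$ if necessary), bounds the failure probability by $1/(2^{x+1}n^3)$.

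Combining the two bounds yields statement~(1) with failure probability at most $1/(2^x n^3)$, and statement~(2) follows by the identical argument applied to the reversed sequences. The main obstacle is alignments where the window straddles the motif boundary, i.e.\ $a \in [\LB(S_1) - d_0\log n + 1, \LB(S_1) - 1]$; these are not ruled out by the spurious-match estimate, but every such $a$ automatically satisfies $a \ge \LB(S_1) - d_0\log n \ge \LB(S_1) - |G|/4$, so the lemma's conclusion remains valid and no further bound is needed.
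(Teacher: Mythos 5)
Your overall decomposition (exhaustive comparison because Point-Selection takes every position when $L<$ the threshold, plus ``a genuine collision exists near the boundary'' and ``no spurious exact match exists far from it'') is the right skeleton, and your handling of the spurious matches via a union bound over $O(n^2)$ background windows and of the straddling windows is essentially the same as the paper's. The genuine gap is in the existence step. You claim that with probability at least $1-1/(2^{x+1}n^3)$ both prefixes $\aleph(S_i)[1,d_0\log n]$ are mutation-free. That is false: each character mutates with probability up to $\alpha=1/(\log n)^{2+\mu}$, so the union bound over the $2d_0\log n$ characters gives failure probability on the order of $2d_0/(\log n)^{1+\mu}$, which is only polylogarithmically small --- vastly larger than $1/(2^{x}n^{3})$. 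No Chernoff bound can push the failure probability below the probability of a single constituent bad event (one fixed character mutating), so your argument establishes the conclusion only with probability $1-O(1/(\log n)^{1+\mu})$, not with the probability the lemma asserts.

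This is exactly why the conclusion allows slack $|G|/4$ rather than $O(\log n)$. The intended argument, and the one the paper uses, is to consider \emph{all} $\Theta(|G|)$ starting positions in $\aleph(S_i)[1,|G|/4]$, call a position \emph{damaged} if the window of length $d_0\log n$ beginning there contains a mutation, bound the number of mutations in that quarter of the motif by a concentration bound over its $|G|/4\ge \Omega(\log n)$ characters, conclude that the damaged positions form only a small fraction of the quarter (each mutation damages at most $d_0\log n$ positions), and then extract by pigeonhole a \emph{common} undamaged offset $j$ present in both $S_1$ and $S_2$; the exhaustive Collision-Detection with $\omega=0$ then finds the exact collision at that offset, which lies within $|G|/4$ of $\LB(S_1)$ and $\LB(S_2)$. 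The polynomially small failure probability comes from concentration over the $\Theta(|G|)$ positions, not from a single window. You need to replace your single-window existence argument with this many-candidate counting argument; as written, the first half of your proof does not reach the stated probability bound.
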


\begin{proof}
For two sequences $S_1$ and $S_2$, it is easy to see that there a
common position in both motif regions of the two sequences such that
there is no mutation in the next $d_0\log n$ characters with high
probability. This is because that mutation probability is small.

By Theorem~\ref{ourchernoff-theorem}, with probability at most
$P_{l,1}=2^{-\alpha |G|/4}$ (let $\delta=2$), there are more than
$3\alpha
 {|G|\over 4}$ mutated characters in the interval $\aleph(S_i)[1, {|G|\over 4}]$ for
$i=1,2$. Therefore, with probability at most $2^{-\alpha
|G|/4}=e^{-(\log n)^{1+\Omega(1)}}$, there are more than $3\alpha
{|G|\over 4}\log n$ positions are damaged in  $\aleph(S_i)[1,
{|G|\over 4}]$.

% Let $\gamma={1\over (\log n)^{1+\delta_1}}$
%for a very small $\delta_1$.
Since the mutation probability is $\alpha=({1\over (\log
n)^{2+\Omega(1)}})$ and $M(L)$ positions are sampled, with
probability at most $P_{l,2}={2^{-(3\alpha d_0\log n)  {|G|\over
4}}}=e^{-(\log n)^{1+\Omega(1)}}$ (with $\delta=2$), the number of
damaged positions is more than $((5\alpha d_0\log n) {|G|\over
4})={|G|\over (\log n)^{\Omega(1)}}$ by
Theorem~\ref{ourchernoff-theorem}. The probability is
$P_l=P_{l,1}+P_{l,2}=e^{-(\log n)^{1+\Omega(1)}}$ that left side has
more than $((5\alpha d_0\log n) {|G|\over 4})={|G|\over (\log
n)^{\Omega(1)}}$ damaged positions.

We have similar $P_r=P_{r,1}+P_{r,2}=e^{-(\log n)^{1+\Omega(1)}}$
probability for the right side for more than $((5\alpha d_0\log n)
{|G|\over 4})={|G|\over (\log n)^{\Omega(1)}}$ damaged positions in
$\aleph(S_i)[{3|G|\over 4}-1, |G|]$.

Now we assume that left side has more than $((5\alpha d_0\log n)
{|G|\over 4})={|G|\over (\log n)^{\Omega(1)}}$ damaged positions and
the right side for more than $((5\alpha d_0\log n) {|G|\over
4})={|G|\over (\log n)^{\Omega(1)}}$ damaged positions in
$\aleph(S_i)[{3|G|\over 4}-1, |G|]$. Since each position in each
interval of length $L$ is selected in
Point-Selection$(S_1,S_2,L)$???, it is easy to verify the
conclusions of this lemma.
\end{proof}

\begin{lemma}\label{collision-time-lemma1}
For the case~\algtype=\sublinear, we have
\begin{enumerate}
\item
 CollisionDetection($S_1, U_1, S_2, U_2$)  takes
$t(n, ||U_1||+||U_2||)=O((||U_1||+||U_2||)\log n)$ time.
\item
Point-Selection($S_1,L, [1,|S_1|]$) selects $s(n,L)=O(({n\over
L})M(L))$ positions in $g(n,L)=O(s(n,L))$ time if $L\ge\thresholdL$.
\item
Point-Selection($S_1,L, [1,|S_1|]$) selects $s(n,L)=O(n)$ positions
in $g(n,L)=O(n)$ time if $L<\thresholdL$.
%\item
%Initial-Rough-Boundary$(S_1, S_2, L)$ takes $O(\log n)$ time.
\item
$||U_{S_{2i-1}'}||+||U_{S_j''}||$ in the algorithm \algmnam~ is no
more than $f(n,|G|)=O(M(|G|)+{n\over |G|}M(|G|))$.
\item
With probability at most ${k\over 2^xn^3}$, the  algorithm \algmnam~
does not stop in $(O(k({n\over \sqrt{h}}(\log n)^{5\over 2}+h^2\log
n)), O(k))$ time.
\end{enumerate}
\end{lemma}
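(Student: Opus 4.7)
The plan is to verify each of the five parts by separating the pure time-accounting (parts 1--4) from the probabilistic termination statement (part 5), and then to feed the time bounds into Lemma~\ref{total-time-lemma}.

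For part (1), recall that in the sublinear variant $\omega_{\algtype}=0$, so every test $\diff(X_1,X_2)\le \omega_{\algtype}$ amounts to an exact equality of two length-$d_0\log n$ substrings. I would implement Collision-Detection by precomputing Karp--Rabin fingerprints of the $||U_1||+||U_2||$ candidate windows in $O((||U_1||+||U_2||)\log n)$ time, placing the $U_1$-fingerprints in a hash table, and then probing each $U_2$-fingerprint. The four sought extremal pairs (leftmost/rightmost, first/last sequence) can then be extracted by a single scan over the colliding pairs; cross-verification of a candidate collision takes $O(\log n)$ time, and we only need $O(1)$ verified collisions. This gives the claimed $t(n,||U_1||+||U_2||)=O((||U_1||+||U_2||)\log n)$.

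Parts (2)--(4) are bookkeeping from the definition of Point-Selection and from how it is called inside \algma. When $L\ge \thresholdL$, each length-$L$ sub-interval contributes $M(L)$ random positions, so scanning $[1,|S_1|]$ produces $s(n,L)=O((n/L)M(L))$ positions, selected in the same order of time (each draw costs $O(1)$); when $L<\thresholdL$ we enumerate $[1,|S_1|]$ and get $O(n)$ positions in $O(n)$ time, giving (2) and (3). For (4), $U_{S_{2i-1}'}$ is built from a constant number of intervals of length $O(L)=O(|G|)$ around the already-known rough boundaries, so $||U_{S_{2i-1}'}||=O(M(|G|))$; whereas $U_{S_j''}$ samples from the whole sequence, giving $O((n/|G|)M(|G|))$, hence $f(n,|G|)=O(M(|G|)+(n/|G|)M(|G|))$.

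Part (5) is the main obstacle and where I would spend most of the care. I would apply Lemma~\ref{total-time-lemma} with the bounds established in (1)--(4), so the dominant preprocessing contribution from Initial-Boundaries is
\begin{equation*}
\sum_{i=1}^{i_0}\Bigl(T\bigl(n,s(n,n^{2/5}/2^i)\bigr)+g(n,n^{2/5}/2^i)\Bigr)+h^2\log n
=O\!\left(\sum_{i=1}^{i_0}\frac{n}{\sqrt{n^{2/5}/2^i}}(\log n)^{5/2}+h^2\log n\right).
\end{equation*}
Since $L$ halves each round and the loop stops at $L=\Theta(h)$, this geometric sum is dominated by its last term and evaluates to $O((n/\sqrt{h})(\log n)^{5/2}+h^2\log n)$. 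The cross-group Collision-Detection calls contribute $O(k_1k_2\,t(n,f(n,|G|)))=O(k_1k_2(n/\sqrt{|G|})(\log n)^{5/2})$, which is absorbed into the $(n/\sqrt{h})(\log n)^{5/2}$ term because $h\le |G|$ and $k_1=\Theta(k_2)=\Theta(k)$ by~(\ref{Z1-Z2-eqn}). Together with the $k_1h^2\log n$ Improve-Boundaries contribution and the $O(k_2)$ voting cost per character, this gives the claimed $(O(k((n/\sqrt{h})(\log n)^{5/2}+h^2\log n)), O(k))$ budget. For the failure probability, I would combine Lemmas~\ref{collision-lemma} and~\ref{collision-lemma2}: each pass of Initial-Boundaries fails to find collisions with probability at most $1/(2^xn^3)$, and summing over the $k_1$ independent pairs in $Z_1$ and the at most $\log n$ rounds of halving (absorbed into the $1/n^3$ slack) yields the overall $k/(2^x n^3)$ bound required by Lemma~\ref{total-time-lemma}'s hypothesis $\varphi(n)\le 1/(2^xn^3)$.
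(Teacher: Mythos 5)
Your proposal is correct and follows essentially the same route as the paper: the paper's own proof of this lemma is little more than a list of pointers (``statement i follows from bucket sorting,'' statements ii--iv ``follow from the implementation of Point-Selection,'' statement v ``follows from Lemmas~\ref{collision-lemma}, \ref{collision-lemma2}, \ref{total-time-lemma} and statements i--iv''), and you have supplied the details those pointers gesture at. The one substantive implementation difference is in part (1): since $\omega_{\sublinear}=0$ reduces the test to exact equality of length-$d_0\log n$ windows, the paper invokes bucket (radix) sorting of the windows, whereas you use Karp--Rabin fingerprints with a hash table; both give $O((||U_1||+||U_2||)\log n)$, though the fingerprinting route adds its own (easily controlled) collision probability that the deterministic sort avoids. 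Your evaluation of the geometric sum in part (5), dominated by the last round $L=\Theta(h)$, is the intended calculation. The only place you overstate is the claim that the cross-group term $k_1k_2\,t(n,f(n,|G|))$ is ``absorbed'' because $k_1=\Theta(k_2)=\Theta(k)$: that product is $\Theta(k^2)$, not $\Theta(k)$, so as written it does not fit inside $O(k(n/\sqrt{h})(\log n)^{5/2})$; the same looseness is already present in the statement of Lemma~\ref{total-time-lemma} versus the bound claimed in part (5) (in the typical execution the outer loop over $Z_1$ exits after one successful iteration via the ``go to Voting Part'' branch, which is the implicit justification), so this is an imprecision inherited from the paper rather than a flaw introduced by your argument.
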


\begin{proof}
Statement i. The parameter $\omega_{\sublinear}$ is set to be $0$ in
the Collision-Detection. It follows from the time complexity of
bucket sorting, which is described in standard algorithm textbooks.

Statements ii and iii. They follows from the implementation of
Point-Selection().

Statement iv. It follows from the choice of Point-Selection(.) for
the sublinear time algorithm at Recover-Motif(.).
% Only the binary search is needed and takes $O(\log n)$ time.

Statement v. It follows from Lemma~\ref{collision-lemma2},
Lemma~\ref{collision-lemma}, Lemma~\ref{total-time-lemma} and
Statements i, ii, and iii, and iv.
\end{proof}

We give the proof for Theorem~\ref{main-theorem1}.

\begin{proof}[Theorem~\ref{main-theorem1}]
The computational time part of this theorem follows from
Lemma~\ref{collision-time-lemma1}.
% and Lemma~\ref{total-time-lemma}.

By Lemma~\ref{collision-lemma}, Lemma~\ref{collision-lemma2}, we can
let $\varsigma_1(n)={1\over 2^xn^3}\le \varsigma_0$ for the
probability bound $\varsigma_1(n)$ in the
condition~(\ref{varsigma1-condition}) of Lemma~\ref{general-lemma}.

By Lemma~\ref{collision-lemma}, Lemma~\ref{collision-lemma2}, we can
let $\varsigma_2(n)={1\over 2^xn^3}\le \varsigma_0$ for the
probability bound $\varsigma_1(n)$ in the
condition~(\ref{varsigma2-condition}) of Lemma~\ref{general-lemma}.

By inequality~(\ref{support-P0-Q0-inequality}),
%(\ref{v-alpha0-ineqn}),
the condition~(\ref{P0-Q0-condition}) of Lemma~\ref{general-lemma}
is satisfied.

 By inequality~(\ref{v-set-ineqn1}), we know that the
condition~(\ref{V_0-condition}) of Lemma~\ref{general-lemma} can be
satisfied.

 The failure
probability part of this theorem follows from
%Lemma~\ref{collision-lemma}, Lemma~\ref{collision-lemma2},
Lemma~\ref{v+u-lemma2}, and Lemma~\ref{general-lemma} by using the
fact that $k_1,k_2$, and $k$ are of the same order (see
equation~(\ref{Z1-Z2-eqn})).
\end{proof}

\subsubsection{Randomized Algorithm for $\Omega(1)$ Mutation Rate} In
this section, we give an algorithm for the case with $\Omega(1)$
mutation rate.  The performance of the algorithm is stated in
Theorem~\ref{main-theorem2}.

%The parameter $\omega$ is set to be $\beta$ in the
%Collision-Detection.

%\subsubsection{Analysis of the Second Algorithm}

\begin{lemma}\label{collision-lemma-2A} Assume that  $d_0\log n\le L\le  |G|/2$
  and $|G|\ge \thresholdL$. Let $I_1$ be a union of intervals that include $[\LB(S_1)-2L,
\LB(S_1)+2L]$ and $[\RB(S_1)-2L, \RB(S_1)+2L]$. Let
$U_1=$Point-Selection$(S_1,L, I_1)$,
$U_2=$Point-Selection$(S_2,L,[1,|S_2|])$,   and
  $(L_{S_1},R_{S_1},L_{S_2},R_{S_2})=$Collision-Detection$(S_1,U_1,S_2,,U_2)$.
  Then
\begin{enumerate}
\item
With probability at most ${1\over 2^{x}n^{3}}$, the left rough
boundary $\L_{S_1}$ has at most $2L$ distance from $\LB(S_1)$ and
the left rough boundary $L_{S_2}$ has at most $2L$ distance from
$\LB(S_2)$.

\item
 With probability at most ${1\over 2^{x}n^{3}}$, the right rough boundary $R_{S_1}$ has at most
$2L$ distance from $\RB(S_1)$; and the right boundary of $R_{S_2}$
has at most $2L$ distance from $\RB(S_2)$.
\end{enumerate}
\end{lemma}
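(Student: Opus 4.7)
The plan is to adapt the proof of Lemma~\ref{collision-lemma} to the randomized algorithm, where the matching threshold $\omega_{\randomized}=\beta$ is no longer zero. The geometric skeleton is identical: I first identify a leftmost pair of candidate length-$L$ sub-intervals $A_i\subseteq I_1$ in $S_1$ and $B_j\subseteq[1,|S_2|]$ in $S_2$ with $||A_i(S_1,\aleph(S_1))\cap B_j(S_2,\aleph(S_2))||\ge L/2$, both within distance $2L$ of the respective left motif boundaries. Such a pair exists because $[\LB(S_1)-2L,\LB(S_1)+2L]\subseteq I_1$ and $L\le |G|/2$ force an overlap of at least $L/2$ between the partition intervals straddling the motif boundary.

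Second, since Point-Selection samples $M(L)$ uniformly random positions from each length-$L$ sub-interval, I would apply Lemma~\ref{intersection-lemma} with $n$ replaced by $L$, $m=M(L)$, $m_1=M_1(L)$, and crucially $C=\emptyset$, to conclude that with probability at least $1-2e^{-(1-\gamma)M_1(L)M(L)/L}$ there is an aligned pair of sampled positions $p_1\in U_1\cap A_i$ and $p_2\in U_2\cap B_j$ mapping to the same index of $G$. Unlike in Lemma~\ref{collision-lemma} there is no need to exclude damaged positions, because the relaxed threshold $\beta$ already tolerates mutations; the parameter $M(L)$ in Definition~\ref{M-M1-def} is chosen precisely so that this failure probability stays below $1/(2^x n^3)$.

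Third, once $(p_1,p_2)$ are aligned, the windows $X_1=S_1[p_1,p_1+d_0\log n-1]$ and $X_2=S_2[p_2,p_2+d_0\log n-1]$ are two independently perturbed copies of the same length-$d_0\log n$ substring of $G$; each per-position disagreement happens with probability at most $2\alpha$, which is strictly less than $\beta=2\alpha+2\epsilon$. Corollary~\ref{chernoff-lemma-a} then yields $\prob[\diff(X_1,X_2)>\beta]\le e^{-(4\epsilon^2/3)d_0\log n}$, which is $o(1/(2^x n^3))$ by inequality~(\ref{d0-sel-eqn}). Consequently Collision-Detection records $(p_1,p_2)$ as a valid match, producing $L_{S_1}\in[\LB(S_1)-2L,\LB(S_1)+2L]$ and $L_{S_2}\in[\LB(S_2)-2L,\LB(S_2)+2L]$. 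Statement~(2) is symmetric, via the $[\RB(S_1)-2L,\RB(S_1)+2L]\subseteq I_1$ half of the hypothesis.

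The main obstacle I anticipate is confirming that restricting $U_1$ to $I_1$ rather than all of $S_1$ does not break the existence of the $L/2$-overlap pair $(A_i,B_j)$; the hypothesis on $I_1$ is calibrated so that every partition interval sliding across either motif boundary of $S_1$ lies entirely inside $I_1$ and can be matched against its counterpart in $S_2$. A secondary bookkeeping task is combining the geometric, sampling, and Chernoff failure probabilities into a single $1/(2^x n^3)$ bound uniformly across the admissible range $L\in[d_0\log n,|G|/2]$.
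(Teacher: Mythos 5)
Your proposal follows essentially the same route as the paper's proof: locate the leftmost pair of length-$L$ intervals whose motif-region parts overlap in at least $L/2$ positions near the boundary, then apply Lemma~\ref{intersection-lemma} with $m=M(L)$, $m_1=M_1(L)$, $n$ replaced by $L$, and $C=\emptyset$ (the paper likewise drops the damaged-position set $C$ precisely because $\omega_{\randomized}=\beta$ tolerates mutations). Your additional Chernoff step verifying that the aligned windows indeed satisfy $\diff(X_1,X_2)\le\beta$ is a detail the paper leaves implicit, and it only strengthens the argument.
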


\begin{proof}
We prove the following two statements which imply the lemma.

\begin{enumerate}
\item
With probability at most ${1\over 2^{x}n^{3}}$, there is no
intervals $A_i$ from $S_1$ and $B_j$ from $S_2$ such that (1)
$||A_i(S_1,\aleph(S_1))\cap B_j(S_2,\aleph(S_2))||$ is at least
${L\over 2}$; (2) The left boundary of $S_1$ has at most $2L$
distance from $A_i$; (3) The left boundary of $S_2$ has at most $2L$
distance from $B_j$; and (4) There is collision between the sampled
positions in $A_i$ and $B_j$.

\item
 With probability at most ${1\over
2^{x}n^{3}}$, there is no intervals $A_i$ from $S_1$ and $B_j$ from
$S_2$ such that (1) $||A_i(S_1,\aleph(S_1))\cap
B_j(S_2,\aleph(S_2))||$ is at least ${L\over 2}$; (2) The right
boundary of $S_1$ has at most $2L$ distance from $A_i$; (3) The
right boundary of $S_2$ has at most $2L$ distance from $B_j$; and
(4) There is collision between the sampled positions in $A_i$ and
$B_j$.
\end{enumerate}

We only prove the statement i. The proof for statement ii is
similar.
%We assume that the length of motif $G$ is at least $3L$.
Note that $L$ goes down by half each cycle in the algorithm. Assume
that $L_0$ satisfies the condition of this lemma, and let $L=L_0$
happen in the algorithm.

Select $A_i$ from $S_1$ and $B_j$ from $S_2$ to be the first pair of
intervals with $||A_i(S_1,\aleph(S_1))\cap B_j(S_2,\aleph(S_2))||\ge
{L\over 2}$. It is easy to see that such a pair exists and both have
distance from the left boundary with distance at most $2L$. This is
because when an leftmost interval of length $L$ is fully inside the
motif region of the first sequence, we can always find the second
interval from the second sequence with intersection of length at
least ${L\over 2}$.

Replace $m$ by $M(L)$, $m_1$ by $M_1(L)$ (see
Definition~\ref{M-M1-def}), and $n$ by $L$ to apply
Lemma~\ref{intersection-lemma}. We do not consider any damaged
position in this algorithm, therefore, let $C$ be empty. With
probability at most $o({1\over 2^xn^3})$, there is no intersection
$A_i$ from $S_1$ and $B_j$ from $S_2$.
\end{proof}

\begin{lemma}\label{collision-all-select-lemma}
  Let $U_1$ and $U_2$ contain all positions of the input sequences
  $S_1$ and $S_2$, respectively.
  Assume $(L_{S_1}, R_{S_1}, L_{S_2}, R_{S_2})=$Collision-Detection$(S_1,U_1,S_2,,U_2)$.
   Then
\begin{enumerate}
\item
With probability at most ${1\over 2^{x}n^{3}}$, the left rough
boundary $L_{S_1}$ has at most $d_0\log n$ distance from $\LB(S_1)$
and the left rough boundary $L_{S_2}$ has at most $d_0\log n$
distance from $\LB(S_2)$.

\item
 With probability at most ${1\over 2^{x}n^{3}}$, the right rough boundary $R_{S_1}$ has at most
$d_0\log n$ distance from $\RB(S_1)$; and the right rough
%!!!(maybe should be "right rough boundary") %Fixed.
boundary of $R_{S_2}$ has at most
$d_0\log n$ distance from $\RB(S_2)$.
\end{enumerate}

\end{lemma}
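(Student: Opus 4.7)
The plan is to show, via a union-bound argument on the input randomness alone, that Collision-Detection cannot be fooled far to the left/right of the true motif boundaries and cannot fail to register the genuine overlap near them. Since $U_1$ and $U_2$ contain all positions of $S_1$ and $S_2$ (no Point-Selection sampling), there is no algorithmic randomness to handle, only the background randomness and the mutation randomness; this should make the argument a simplified version of the one used in Lemma~\ref{collision-lemma-2A}.

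First I would bound the probability of a ``false'' leftmost collision. Suppose some pair $(a,b)\in U_1\times U_2$ with $a\le \LB(S_1)-d_0\log n$ yields $\diff(S_1[a,a+d_0\log n-1],\,S_2[b,b+d_0\log n-1])\le\beta$. Because $a+d_0\log n-1<\LB(S_1)$, the string $S_1[a,a+d_0\log n-1]$ is drawn entirely from the uniformly random background of $S_1$ and is independent of $S_2$, so Lemma~\ref{base-probability-lemma}(ii) bounds the probability of this event by $c^{d_0\log n}$. A union bound over the at most $n^2$ candidate pairs yields a failure probability of $O(n^2 c^{d_0\log n})$, which is controlled by inequality~(\ref{d0-sel-eqn}). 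The symmetric argument applied to the reversed sequences handles the ``false'' rightmost collision governing $R_{S_1}$ and $R_{S_2}$.

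Next I would show that a genuine collision does occur at or very near the true motif boundary, forcing the leftmost $a_1$ returned by Collision-Detection to lie to the left of $\LB(S_1)+d_0\log n$. For the canonical pair $(a,b)=(\LB(S_1),\LB(S_2))$, the corresponding substrings are two independent approximate copies of $G[1,d_0\log n]$, so the probability that the aligned characters differ is at most $2\alpha$ per position. Since $2\alpha+\epsilon<\beta$, Corollary~\ref{chernoff-lemma-a}(i) bounds the probability that their relative Hamming distance exceeds $\beta$ by $e^{-\epsilon^2 d_0\log n/3}=c^{d_0\log n}$, again small by the choice of $d_0$. Combining this with the previous step forces $L_{S_1}\in[\LB(S_1)-d_0\log n,\LB(S_1)]$, and an identical analysis with the roles of $S_1,S_2$ swapped gives the corresponding bound for $L_{S_2}$. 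The analogous argument on the inverse sequences settles the right boundaries.

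The main obstacle is bookkeeping the probability constants so that the union of the three failure events (false left collision, false right collision, absence of a true collision near the motif) fits cleanly into the quoted $1/(2^x n^3)$ bound; this is really just verifying that the slack in inequality~(\ref{d0-sel-eqn}) is sufficient, together with noting that the mutation event within a single $d_0\log n$ window is independent of the background events on disjoint positions, so no delicate correlation argument is required.
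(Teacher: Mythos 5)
Your proposal is correct and follows essentially the same route as the paper's proof: a union bound (via Lemma~\ref{base-probability-lemma} and the choice of $d_0$ in inequality~(\ref{d0-sel-eqn})) to rule out matches of windows lying entirely in the random background far from the motif boundary, plus a Chernoff bound to guarantee the windows at the true boundaries do match. The only cosmetic difference is that the paper bounds the mutations in each sequence's leading $d_0\log n$ motif characters separately (at most $(\alpha+\epsilon)d_0\log n$ each, hence $\diff\le\beta$), whereas you bound the per-position disagreement probability by $2\alpha$ and apply the Chernoff bound once; and the residual looseness you flag in fitting $n^2c^{d_0\log n}$ under $1/(2^xn^3)$ is present in the paper's own accounting as well.
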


\begin{proof}
For two sequences $S_1$ and $S_2$, let $\aleph(S_a)$ be the
subsequence $S_a[i_a, j_a]$ for $a=1,2$. By
Corollary~\ref{chernoff-lemma-a}, with probability at most
$P_l=2c^{d_0\log n}\le {2\over 5\cdot 2^{x}n^{3}}$ (see
inequality~\ref{param-def} at Definition~\ref{d0-sel-eqn}), there
are more than $(\alpha+\epsilon)d_0\log n$ mutations in $S_a[i_a,
i_a+d_0\log n-1]$ for $a=1,2$.

%it is easy to see that there a common position in both motif regions of
%the two sequences such that there is no mutation in the next
%$d_0\log n$ characters with high probability. This is because that
%mutation probability is small.

%By Theorem~\ref{ourchernoff-theorem}, with probability at most
%$P_{l,1}=2^{-\alpha |G|/4}$ (let $\delta=2$), there are at most
%$3\alpha
% {|G|\over 4}$ are mutated in the interval $\aleph(S_i)[1, {|G|\over 4}]$ for
%$i=1,2$.

In this case, every position in the two sequences $S_1$ and $S_2$ is
selected by Point-Selection($S_1,S_2$).
With probability at most $P_l$, %=P_{l,1}+P_{l,2}=e^{-(\log n)^{1+\Omega(1)}}$
 the left boundary position is missed during the matching.
We have similar $P_r$ %=P_{r,1}+P_{r,2}$ probability
to miss the right boundary.

Assume that $p_1$ and $p_2$ are two positions of $S_1$ and  $S_2$
respectively.  If one of two positions is outside the motif region
and has more than $d_0\log n$ distance to the motif boundary, with
probability at most $c^{-{d_0\log n}}\le {1\over 5\cdot 2^{x}n^{3}}$
(see inequality~\ref{param-def} at Definition~\ref{d0-sel-eqn}) for
them to match that requires $\diff(Y_1,Y_2)\le \beta$ by
Lemma~\ref{base-probability-lemma}, where $Y_a$ is a subsequence
$S_a[p_a,p_a+d_0\log n-1]$ for $a=1,2$.
\end{proof}

\begin{lemma}\label{collision-lemma2-2Ab} Assume that  $d_0\log n\le L\le  |G|/2$
  and $c_0\log n\le |G|<\thresholdL$.
  Let $I_1$ be a union of intervals that include $[\LB(S_1)-2L,
\LB(S_1)+2L]$ and $[\RB(S_1)-2L, \RB(S_1)+2L]$. Let
$U_1=$Point-Selection$(S_1,L, I_1)$,
$U_2=$Point-Selection$(S_2,L,[1,|S_2|])$,   and
  $(L_{S_1},R_{S_1},L_{S_2},R_{S_2})=$Collision-Detection$(S_1,U_1,S_2,,U_2)$.
   Then
\begin{enumerate}
\item
With probability at most ${1\over 2^{x}n^{3}}$, the left rough
boundary $L_{S_1}$ has at most $d_0\log n$ distance from $\LB(S_1)$
and the left rough boundary $L_{S_2}$ has at most $d_0\log n$
distance from $\LB(S_2)$.

\item
 With probability at most ${1\over 2^{x}n^{3}}$, the right rough boundary $R_{S_1}$ has at most
$d_0\log n$ distance from $\RB(S_1)$; and the right boundary of
$R_{S_2}$ has at most $d_0\log n$ distance from $\RB(S_2)$.
\end{enumerate}

\end{lemma}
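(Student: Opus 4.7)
The plan is to reduce this to essentially the same argument used in Lemma~\ref{collision-all-select-lemma}, exploiting the fact that under the hypothesis $|G|<\thresholdL$ and $L\le |G|/2$, we have $L<\thresholdL$, so by the specification of Point-Selection for \algtype=\randomized, every position of $I_1$ is placed into $U_1$ and every position of $[1,|S_2|]$ is placed into $U_2$. In particular, since $I_1$ by assumption contains $[\LB(S_1)-2L,\LB(S_1)+2L]$ and $[\RB(S_1)-2L,\RB(S_1)+2L]$, the true left and right motif boundary positions of $S_1$ both belong to $U_1$, and all positions of $S_2$ belong to $U_2$. Thus the search space for Collision-Detection includes the ``honest'' candidate pair of positions aligned with the true left (respectively right) motif boundary of both sequences.

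The next step is to show that the honest pair actually produces a valid collision. Apply Corollary~\ref{chernoff-lemma-a} to the mutations in the leftmost $d_0\log n$ characters of $\aleph(S_i)$ for $i=1,2$: with probability at most $2c^{d_0\log n}\le {2\over 5\cdot 2^{x}n^{3}}$ (by inequality~(\ref{d0-sel-eqn})), one of these two windows has more than $(\alpha+\epsilon)d_0\log n$ mutations. When this bad event fails, the two windows starting at $\LB(S_1)$ and $\LB(S_2)$ each agree with $G[1,d_0\log n]$ at all but $(\alpha+\epsilon)d_0\log n$ positions, hence by the triangle inequality satisfy $\diff\le 2(\alpha+\epsilon)=\beta=\omega_{\randomized}$, so Collision-Detection will see this as a qualifying match. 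The symmetric argument on $\aleph(S_i)[m-d_0\log n+1,m]$ gives the same bound for the right boundary.

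The final step is to rule out spurious ``far'' collisions. Consider any candidate pair $(p_1,p_2)\in U_1\times U_2$ for which at least one of $p_1,p_2$ is outside the motif region with distance more than $d_0\log n$ from the nearest motif boundary; then the corresponding $d_0\log n$ window contains only i.i.d.\ random characters on at least one side, and Lemma~\ref{base-probability-lemma}(ii) bounds the probability of $\diff(Y_1,Y_2)\le \beta$ by $c^{d_0\log n}$. There are at most $n^2$ such pairs, and by~(\ref{d0-sel-eqn}) the union bound gives $n^2 c^{d_0\log n}\le {1\over 5\cdot 2^x n^3}$. Summing the contributions (failure to match honestly on the left, failure on the right, and any spurious far collision on either side), the total failure probability stays bounded by ${1\over 2^{x}n^{3}}$, proving both statements.

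The only real obstacle is keeping the probability bookkeeping clean: the left-boundary and right-boundary events must each be shown to cost at most ${2\over 5\cdot 2^xn^3}$ or less, and the union bound over the $O(n^2)$ potentially-spurious position pairs has to absorb into the same ${1\over 2^xn^3}$ budget via~(\ref{d0-sel-eqn}). Everything else is a direct transcription of the proof of Lemma~\ref{collision-all-select-lemma}, with the observation that the restriction of $U_1$ to $I_1$ only removes candidate pairs and cannot create new spurious ones.
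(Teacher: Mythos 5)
Your proposal is correct and follows essentially the same route as the paper: the paper's proof simply observes that since $L\le |G|/2<\thresholdL$, Point-Selection places every position into the candidate sets, and then invokes Lemma~\ref{collision-all-select-lemma}, whose argument (honest collision at the true boundaries via Corollary~\ref{chernoff-lemma-a} and inequality~(\ref{d0-sel-eqn}), plus a union bound over spurious far pairs via Lemma~\ref{base-probability-lemma}) is exactly what you transcribe. Your added remark that restricting $U_1$ to $I_1$ only removes candidate pairs while keeping the honest ones is a reasonable clarification of the reduction.
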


\begin{proof}
In this case, every position in the two sequences $S_1$ and $S_2$ is
selected by Point-Selection($S_1,S_2$). It follows from
Lemma~\ref{collision-all-select-lemma}.
\end{proof}

\begin{lemma}\label{collision-time-lemma2}
For the case~\algtype=\randomized, we have
\begin{enumerate}
\item
 CollisionDetection($S_1, U_1, S_2, U_2$)  takes
$t(n, ||U_1||+||U_2||)=O((||U_1||+||U_2||)^2\log n)$ time.
\item
Point-Selection($S_1,L, [1,|S_1|]$) selects $s(n,L)=O(({n\over
L})M(L))$ positions in $g(n,L)=O(s(n,L))$ time if $L\ge\thresholdL$.
\item
Point-Selection($S_1,L, [1,|S_1|]$) selects $s(n,L)=O(n)$ positions
in $g(n,L)=O(n)$ time if $L<\thresholdL$.
%\item
%Initial-Rough-Boundary$(S_1, S_2, L)$ takes $O({n\over |L|}\log n)$ time.
\item
$||U_{S_{2i-1}'}||+||U_{S_j''}||$ in the algorithm \algmnam~ is no
more than $f(n,|G|)=O(M(|G|)+{n\over |G|}M(|G|))$.
\item
With probability at most ${k\over 2^xn^3}$, the algorithm \algmnam~
does not stop in $(O(k({n^2\over |G|}(\log n)^{O(1)}+h^2\log n)),
O(k))$ time.
\end{enumerate}
\end{lemma}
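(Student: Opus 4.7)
The plan is to essentially mirror the proof of Lemma~\ref{collision-time-lemma1}, but with the key change coming from the fact that $\omega_{\randomized}=\beta$ rather than $0$. This means Collision-Detection can no longer rely on an exact-hash/bucket-sort strategy (which worked in the sublinear case because colliding substrings had to match exactly); instead it must test every candidate pair from $U_1 \times U_2$ for relative Hamming distance at most $\beta$, which is the source of the quadratic dependence in Statement~(i).

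For Statement~(i), I would argue that for each pair $(a,b)\in U_1\times U_2$, evaluating $\diff(S_1[a,a+d_0\log n-1], S_2[b,b+d_0\log n-1])\le \beta$ requires examining $O(\log n)$ characters, so the total work is $O((\|U_1\|+\|U_2\|)^2\log n)$; bookkeeping to identify the extremal $a_1,a_1',e_1,e_1'$ fits within the same bound. Statements~(ii)--(iii) are immediate from the Point-Selection code, which is identical to the sublinear case: in the $L\ge \thresholdL$ regime the outer interval $[1,|S_1|]$ gets partitioned into $O(n/L)$ subintervals and $M(L)$ samples are drawn from each (using $O(1)$ time per sample), while in the $L<\thresholdL$ regime every position is enumerated, giving $O(n)$. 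Statement~(iv) follows because in \algma, $U_{S_{2i-1}'}$ comes from sampling in two windows of length $4L$ each around the already-known rough boundaries (contributing $O(M(|G|))$ points since $L=l_{motif}/4=\Theta(|G|)$), while $U_{S_j''}$ is built from Point-Selection over the whole sequence $S_j''$ and contributes $O((n/|G|)M(|G|))$ points.

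Statement~(v) is the main integrative step: I would plug the $t$, $s$, $g$, $f$ bounds from (i)--(iv) into Lemma~\ref{total-time-lemma}. Plugging $t(n,y)=O(y^2\log n)$ and $f(n,|G|)=O(M(|G|)+(n/|G|)M(|G|))$ into the $k_1k_2\,t(n,f(n,|G|))$ term produces the dominating $O\!\left(k_1k_2\left((n/|G|)M(|G|)\right)^2\log n\right) = O\!\left(k\cdot (n^2/|G|)(\log n)^{O(1)}\right)$ contribution (absorbing the polylog factors coming from $M(L)=\Theta(\sqrt{L}\log^{3/2} n)$ and from the $\log n$ of bucket-scan). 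The geometric sum $\sum_{i\ge 0} (T(n,s(n,n/(2^i n^{2/5}))) + g(n,n/(2^i n^{2/5})))$ for the Initial-Boundaries calls is dominated by the first term in the recursion and is majorized by the $n^2/|G|$ bound as well, and the $h^2\log n$ contribution from the Improve-Boundaries postprocessing is kept explicit. The $k_1k_2(\log n)(\log\log n)$ Match-loop cost is swallowed by the polylog factor. For the failure probability, the only events that can prevent termination are the Initial-Boundaries loop failing to find a valid $L$ before it falls below $\tfrac12\thresholdL$; Lemmas~\ref{collision-lemma-2A} and \ref{collision-lemma2-2Ab} each bound this by $1/(2^x n^3)$ per invocation, and a union bound over the $k_1=\Theta(k)$ invocations gives $k/n^3$ (after adjusting $x$).

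The main obstacle I anticipate is bookkeeping the time bound in Statement~(v) carefully: verifying that the dominant $k\cdot (n^2/|G|)(\log n)^{O(1)}$ term really does absorb every other contribution (the per-pair $\log n$ cost inside Collision-Detection, the $M(|G|)^2$ factor in $t(n,f(n,|G|))$, and the $h^2\log n$ Improve-Boundaries term when $h=\min(|G|,n^{2/5})$). Since $M(|G|)=\Theta(\sqrt{|G|}\log^{3/2} n)$, the product $(M(|G|))^2\log n$ is $O(|G|(\log n)^4)$, and the square of the $n/|G|$ factor yields $n^2/|G|^2$, so the product is $O((n^2/|G|)(\log n)^4)$, which is exactly what the polylog notation $(\log n)^{O(1)}$ is meant to hide; this checks out but requires care in dispatch.
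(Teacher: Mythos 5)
Your proposal follows essentially the same route as the paper's (very terse) proof: brute-force pairwise comparison with the $\omega_{\randomized}=\beta$ threshold for Statement~(i), reading Statements~(ii)--(iv) directly off the Point-Selection code and the structure of \algmnam, and obtaining Statement~(v) by plugging the bounds $t$, $s$, $g$, $f$ into Lemma~\ref{total-time-lemma} together with the collision lemmas for the randomized case (the paper cites Lemmas~\ref{collision-all-select-lemma} and~\ref{collision-lemma2-2Ab}; your use of Lemma~\ref{collision-lemma-2A} for the $|G|\ge\thresholdL$ regime is the same family of estimates). Your write-up is in fact more explicit than the paper's about how the $(n^2/|G|)(\log n)^{O(1)}$ term absorbs the $M(|G|)^2$ and per-pair $\log n$ factors, so no correction is needed.
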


\begin{proof}
Statement i. The parameter $\omega_{\sublinear}$ is set to be
$\beta$ in the Collision-Detection. It follows from the time
complexity of brute force method.

Statements ii and iii. They follows from the implementation of
Point-Selection().

Statement iv.
%It follows from the brute force method time.
It follows from the choice of Point-Selection(.) for the sublinear
time algorithm at Recover-Motif(.).

 Statement iv. It follows from
Lemma~\ref{collision-all-select-lemma},
Lemma~\ref{collision-lemma2-2Ab}, Lemma~\ref{total-time-lemma}, and
Statements i, ii, and iii.
\end{proof}

%\begin{lemma}\label{InitialBoundary2}
%The second algorithm takes ${n^2\over |G|}(\log n)^{O(1)}$ time in
%Initial-Boundary($S_1,S_2$).
%\end{lemma}

%\begin{proof} Assume that $L_0$ is the largest $L$ less than
%$|G|/2$. It stops at $L_0$ with high probability.  It takes ${n\over
%L}$ intervals. For a fixed $L$. We sample $\sqrt{L}(\log n)^{O(1)}$
%points at each intervals. The time for finding the rough boundaries
%is $(({n\over L})(\sqrt{L}(\log n)^{O(1)}))^2={n^2\over |L|}(\log
%n)^{O(1)}$. The total time is $\sum_{L=L_0}^{\infty}{n^2\over
%|L|}(\log n)^{O(1)}={n^2\over |G|}(\log n)^{O(1)}$.
%\end{proof}

We give the proof for Theorem~\ref{main-theorem3}.

\begin{proof}[Theorem~\ref{main-theorem2}]
The computational time part of this theorem follows from
Lemma~\ref{collision-time-lemma2}.

By Lemma~\ref{collision-lemma-2A},
Lemma~\ref{collision-all-select-lemma}, we can let
$\varsigma_1(n)={1\over 2^xn^3}\le \varsigma_0$ for the probability
bound $\varsigma_1(n)$ in the condition~(\ref{varsigma1-condition})
of Lemma~\ref{general-lemma}.

By Lemma~\ref{collision-lemma-2A},
Lemma~\ref{collision-all-select-lemma}, we can let
$\varsigma_2(n)={1\over 2^xn^3}\le \varsigma_0$ for the probability
bound $\varsigma_2(n)$ in the condition~(\ref{varsigma1-condition})
of Lemma~\ref{general-lemma}.

By inequality~(\ref{support-P0-Q0-inequality}),
%(\ref{v-alpha0-ineqn}),
the condition~(\ref{P0-Q0-condition}) of Lemma~\ref{general-lemma}
is satisfied.

 By inequality~(\ref{v-set-ineqn1}), we know that the
condition~(\ref{V_0-condition}) of Lemma~\ref{general-lemma} can be
satisfied.

% and Lemma~\ref{total-time-lemma}.
 The failure
probability part of this theorem follows from
%Lemma~\ref{collision-lemma}, Lemma~\ref{collision-lemma2},
Lemma~\ref{v+u-lemma2}, and Lemma~\ref{general-lemma} by using the
fact that $k_1,k_2$, and $k$ are of the same order (see
equation~(\ref{Z1-Z2-eqn})).
\end{proof}

\subsection{Deterministic  Algorithm for $\Omega(1)$ Mutation
Rate}\label{deterministic-sec}

 In this section, we give a deterministic algorithm for
the case with $\Omega(1)$ mutation rate.  The performance of the
algorithm is stated in Theorem~\ref{main-theorem3}.

%The parameter $\omega$ is set to be $\beta$ in the
%Collision-Detection. So, the Collision-Detection$(S_1, U_1, S_2,
%U_2)$ here is the same as that as Collision-Detection$(S_1, U_1,
%S_2, U_2)$ for the second algorithm.

%\subsection{Analysis of the Third Algorithm}

\begin{lemma}\label{collision-lemma2-3A} Assume that  $d_0\log n\le L\le  |G|/2$
  and $c_0\log n\le |G|$. Let $I_1$ be a union of intervals that include $[\LB(S_1)-2L,
\LB(S_1)+2L]$ and $[\RB(S_1)-2L, \RB(S_1)+2L]$. Let
$U_1=$Point-Selection$(S_1,L, I_1)$,
$U_2=$Point-Selection$(S_2,L,[1,|S_2|])$,   and
  $(L_{S_1},R_{S_1},L_{S_2},R_{S_2})=$Collision-Detection$(S_1,U_1,S_2,,U_2)$.
  Then
\begin{enumerate}
\item
With probability at most ${1\over 2^{x}n^{3}}$, the left rough
boundary $L_{S_1}$ has at most $d_0\log n$ distance from $\LB(S_1)$
and the left rough boundary $L_{S_2}$ has at most $d_0\log n$
distance from $\LB(S_2)$.

\item
 With probability at most ${1\over 2^{x}n^{3}}$, the right rough boundary $R_{S_1}$ has at most
$d_0\log n$ distance from $\RB(S_1)$; and the right rough
%!!!(should be "right rough boundary")  %Fixed.
boundary of $R_{S_2}$ has at most
$d_0\log n$ distance from $\RB(S_2)$.
\end{enumerate}

\end{lemma}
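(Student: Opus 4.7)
The plan is to reduce the claim to Lemma~\ref{collision-all-select-lemma}, exactly as is done in the one-line proof of Lemma~\ref{collision-lemma2-2Ab}. The only new ingredient is verifying that the restriction of Point-Selection to the interval $I_1$ does not discard any of the boundary-detecting candidate windows that Lemma~\ref{collision-all-select-lemma} relied on.

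First I would observe that when $\algtype=\deterministic$, Point-Selection places \emph{every} position of its input interval into the output set. Therefore $U_2$ contains every position of $S_2$, and $U_1$ contains every position of $I_1$. Since by hypothesis $I_1 \supseteq [\LB(S_1)-2L,\LB(S_1)+2L]\cup[\RB(S_1)-2L,\RB(S_1)+2L]$ and $L\ge d_0\log n$, the set $U_1$ contains every position in a window of radius at least $d_0\log n$ around each of $\LB(S_1)$ and $\RB(S_1)$. Thus Collision-Detection examines every length-$d_0\log n$ subsequence that could witness a correct left- or right-boundary alignment, which is exactly what Lemma~\ref{collision-all-select-lemma} assumed.

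With this set-inclusion in hand, the remainder is a verbatim repeat of the argument of Lemma~\ref{collision-all-select-lemma}. By Corollary~\ref{chernoff-lemma-a} together with inequality~(\ref{d0-sel-eqn}), with probability at most $2c^{d_0\log n}\le 2/(5\cdot 2^x n^3)$ some input sequence has more than $(\alpha+\epsilon)d_0\log n$ mutations among the first $d_0\log n$ characters of its motif region; on the complementary event the aligned windows starting at $\LB(S_1)$ and $\LB(S_2)$ have relative Hamming distance at most $\beta=2\alpha+2\epsilon$, so Collision-Detection locates a valid match at the true left position, and symmetrically on the right. A spurious alignment---where at least one of the two compared positions lies more than $d_0\log n$ past the motif boundary---has probability at most $c^{d_0\log n}\le 1/(5\cdot 2^x n^3)$ for any fixed pair by Lemma~\ref{base-probability-lemma} combined with~(\ref{d0-sel-eqn}). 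Summing these contributions on both ends yields the desired failure probability at most $1/(2^x n^3)$ in each of statements 1 and 2.

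The only step that is not a direct copy of the earlier argument is the set-inclusion check in the second paragraph, and this is immediate from $L\ge d_0\log n$ together with the defining property of $I_1$. I therefore do not expect any nontrivial obstacle; the lemma is essentially the deterministic-sampling avatar of Lemma~\ref{collision-all-select-lemma} with the observation that exhaustive enumeration over $I_1$ is just as good as exhaustive enumeration over the whole sequence for the two boundary neighborhoods that matter.
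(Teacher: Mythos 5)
Your proposal is correct and follows essentially the same route as the paper, whose proof of this lemma is precisely the observation that in the deterministic case Point-Selection retains every position, so the claim reduces to Lemma~\ref{collision-all-select-lemma}. Your additional check that $U_1\supseteq I_1$ already covers the relevant boundary neighborhoods is a harmless (indeed slightly more careful) elaboration of the same reduction, not a different argument.
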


\begin{proof}
In this case, every position in the two sequences $S_1$ and $S_2$ is
selected by Point-Selection($S_1,S_2$). It follows from
Lemma~\ref{collision-all-select-lemma}.
\end{proof}

\begin{lemma}\label{collision-time-lemma3}
For the case~\algtype=\deterministic, we have
\begin{enumerate}
\item
 CollisionDetection($S_1, U_1, S_2, U_2$)  takes
$t(n, ||U_1||+||U_2||)=O((||U_1||+||U_2||)^2\log n)$ time.
\item
Point-Selection($S_1,L, [1,|S_1|]$) selects $s(n,L)=O(n)$ positions
in $g(n,L)=O(n)$ time.

%\item
%Initial-Rough-Boundary$(S_1, S_2, L)$ takes $O(n\log n)$ time.
\item
$||U_{S_{2i-1}'}||+||U_{S_j''}||$ in the algorithm \algmnam~ is no
more than $f(n,|G|)=O(|G|+n)$.

\item
With probability at most ${k\over 2^xn^3}$, the algorithm \algmnam~
does not stop $(O(k(n^2(\log n)^{O(1)}+h^2\log n)), O(k))$.
\end{enumerate}
\end{lemma}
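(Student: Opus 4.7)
The plan is to mirror the structure of the proofs of Lemmas~\ref{collision-time-lemma1} and~\ref{collision-time-lemma2}, adapting each piece to the deterministic variant where \algtype=\deterministic. The key observation driving the whole lemma is that in the deterministic setting Point-Selection takes every position of its input interval $I$ into $U$, and Collision-Detection uses the threshold $\omega_{\deterministic}=\beta$, so no sampling heuristic is available and everything is handled by exhaustive comparison.

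For statement 1, I would note that under $\omega_{\deterministic}=\beta$ the Collision-Detection function must compare, for each pair $(a,b)\in U_1\times U_2$, the length-$d_0\log n$ substrings $S_1[a,a+d_0\log n-1]$ and $S_2[b,b+d_0\log n-1]$ to decide whether $\diff\le \beta$. A brute-force pair enumeration gives $O(|U_1|\cdot|U_2|\cdot d_0\log n)=O((|U_1|+|U_2|)^2\log n)$ time. For statement 2, since Point-Selection in the deterministic branch simply inserts every position of its interval argument into $U$, it trivially produces $s(n,L)=O(n)$ positions in $g(n,L)=O(n)$ time. For statement 3, inspecting \algmname, $U_{S_{2i-1}'}$ is formed from the two length-$O(L)$ intervals around the rough boundaries of $S_{2i-1}'$, so $|U_{S_{2i-1}'}|=O(L)=O(|G|)$ once $L=l_{\mathrm{motif}}/4=\Theta(|G|)$, while $U_{S_j''}$ is taken from $[1,|S_j''|]$, giving $|U_{S_j''}|=O(n)$; adding yields $f(n,|G|)=O(|G|+n)$.

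For statement 4, I would plug the bounds above into Lemma~\ref{total-time-lemma}. The probability that Initial-Boundaries fails to terminate at some $L\le |G|/4$ is controlled by Lemma~\ref{collision-lemma2-3A}, which establishes $\varphi(n)=O(1/(2^xn^3))$ for each invocation, so a union bound over the $k_1=\Theta(k)$ invocations gives the overall failure probability $k/(2^xn^3)$. Using $t(n,y)=O(y^2\log n)$ from statement~1, $s(n,L)+g(n,L)=O(n)$ from statement~2, and $f(n,|G|)=O(n+|G|)$ from statement~3, the dominant preprocessing terms in Lemma~\ref{total-time-lemma} become $k_1\sum_{i}n\cdot\polylog(n)=O(kn^2\polylog n)$ from the Initial-Boundaries sweep (the number of halving steps in the outer loop contributes at most a $\log n$ factor), $O(k_1k_2 t(n,f(n,|G|)))=O(k n^2\polylog n)$ from the cross-group collision detection, and $O(kh^2\log n)$ from Improve-Boundaries. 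Collecting these yields the claimed $(O(k(n^2(\log n)^{O(1)}+h^2\log n)),O(k))$ time pair.

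The routine obstacle here is bookkeeping — matching each bound against the corresponding argument of Lemma~\ref{total-time-lemma} and verifying that the geometric sum over $L$ in Initial-Boundaries only loses a $\log n$ factor rather than blowing up the $n^2$ term. There is no new probabilistic argument required beyond invoking Lemma~\ref{collision-lemma2-3A}; the deterministic Point-Selection removes all randomness from the running-time analysis, so the only source of failure probability is the correctness of the boundary detection, exactly as in the earlier lemmas.
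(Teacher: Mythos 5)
Your proposal is correct and follows essentially the same route as the paper: the paper's proof likewise disposes of statements i--iii by appealing to the brute-force comparison under $\omega_{\deterministic}=\beta$ and the exhaustive deterministic Point-Selection, and derives statement iv by combining Lemma~\ref{collision-lemma2-3A} with Lemma~\ref{total-time-lemma} and the first three statements. You simply spell out the bookkeeping (the $O(|U_1|\,|U_2|\,d_0\log n)$ comparison cost, the $O(L)=O(|G|)$ versus $O(n)$ sizes of the two point sets, and the union bound over the $k_1$ invocations) that the paper leaves implicit.
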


\begin{proof}
Statement i. The parameter $\omega_{\deterministic}$ is set to be
$\beta$ in the Collision-Detection. It follows from the time
complexity of brute force method.

Statement ii. They follows from the implementation of
Point-Selection().

Statement iii.
%It follows from the brute force time.
It follows from the choice of Point-Selection(.) for the sublinear
time algorithm at Recover-Motif(.).

Statement iv. It follows from Lemma~\ref{collision-lemma2-3A},
Lemma~\ref{total-time-lemma} and Statements i, ii, and iii.
\end{proof}

%\begin{lemma}\label{Initial-Boundary3}
%The third algorithm takes ${n^2}(\log n)^{O(1)}$ time in
%Initial-Boundary($S_1,S_2$).
%\end{lemma}

%\begin{proof}
%It has $O(n^2)$ possible cases to find left and right collisions.
%Each comparison takes $O(\log n)$ time.
%  The total time is ${n^2}(\log n)^{O(1)}$.
%\end{proof}

We give the proof for Theorem~\ref{main-theorem3}.

\begin{proof}[Theorem~\ref{main-theorem3}]
The computational time part of this theorem follows from
Lemma~\ref{collision-time-lemma3}.
% and Lemma~\ref{total-time-lemma}.

By Lemma~\ref{collision-lemma2-3A}, we let $\varsigma_1(n)={1\over
2^xn^3}\le \varsigma_0$ for the probability bound $\varsigma_1(n)$
in the condition~(\ref{varsigma1-condition}) of
Lemma~\ref{general-lemma}.

By Lemma~\ref{collision-lemma2-3A}, we can let
$\varsigma_2(n)={1\over 2^xn^3}\le \varsigma_0$ for the probability
bound $\varsigma_2(n)$ in the condition~(\ref{varsigma1-condition})
of Lemma~\ref{general-lemma}.

By inequality~(\ref{support-P0-Q0-inequality}),
%(\ref{v-alpha0-ineqn}),
the condition~(\ref{P0-Q0-condition}) of Lemma~\ref{general-lemma}
is satisfied.

By inequality~(\ref{v-set-ineqn1}), we know that the
condition~(\ref{V_0-condition}) of Lemma~\ref{general-lemma} can be
satisfied.

% and Lemma~\ref{total-time-lemma}.
 The failure
probability part of this theorem follows from
%Lemma~\ref{collision-lemma}, Lemma~\ref{collision-lemma2},
Lemma~\ref{v+u-lemma2}, and Lemma~\ref{general-lemma} by using the
fact that $k_1,k_2$, and $k$ are of the same order (see
equation~(\ref{Z1-Z2-eqn})).
%It follows from Lemma~\ref{general-lemma}, Lemma~\ref{total-time-lemma}, and Lemma~\ref{InitialBoundary3}.
\end{proof}

\section{Conclusions}
We develop an algorithm that under the probabilistic model. It finds
the implanted motif with high probability if the alphabet size is at
least $4$, the motif length is in $[(\log n)^{7+\mu}, {n\over (\log
n)^{1+\mu}}]$ and each character in motif region has probability at
most ${1\over (\log n)^{2+\mu}}$ of mutation. The motif region can
be detected and each motif character can be recovered in sublinear
time. A sub-quadratic randomized algorithm is developed to recover
the motif with $\Omega(1)$ mutation rate. A quadratic deterministic
algorithm is developed to recover the motif with $\Omega(1)$
mutation rate. It is interesting problem if there is an algorithm to
handle the case for the alphabet of size $3$. A more interesting
problem is to extend the algorithm to handle larger mutation
probability.

 \section*{Acknowledgements} We thank Ming-Yang Kao for introducing
 us to this topic. We also thank
 Lusheng Wang and Xiaowen Liu for some discussions. We would like to
 thank Eugenio De Hayos to his helpful comments.
 %We are also grateful to
%unknown referee's who helped us improve the presentation of this paper.

\section{Experimental Results}

\subsection{Implementation and Results}
\noindent Aiming at solving the motif discovery problem, we
implemented our algorithm by JAVA. Our program could accept many
popular DNA sequence data formats, such as FASTA,GCG, GenBank and so
on. Our tests were all done on a PC with an Intel Core 1.5G CPU and
3.0G Memory.

In the first experiment, we tested our  algorithm on several sets of
simulated data, which are all generated from our probability model
with a small mutation rate. All input sets contain 20 or 15
sequences, each of length is 600 or 500 base pair. And each bp of
all the simulated gene sequences was generated independently with
the same occurrence probability. A motif with a length of 15 or 12
was randomly planted to each input sequence. The number of
iterations is between 10 and 30. The minimum Hamming distances
between the results and consensus are recorded.

\begin{center}
\begin{tabular}{|c|c|c|c|c|c|c|}
\hline
 &N & M & L & R & Accuracy & timecost \\ [0.5ex]
\hline
Sets1&20 & 600 & 15 & 10 & 100 & 23\\
%\hline
Sets1&20 & 600 & 15 & 30 & 100 & 85\\
%\hline
Sets2&15 & 600 & 15 & 10 & 95 & 18\\
%\hline
Sets3&20 & 600 & 12 & 10 & 95 & 15\\
%\hline
Sets4&20 & 500 & 15 & 20 & 100 & 112\\
\hline

\end{tabular}
\end{center}

\begin{center}
Tab 1. Results on simulated data
\end{center}
\

In the second experiment,we tested our algorithm on real sets of
sequences, which are obtained from SCPD. SCPD contains a large
number of gene data and transcription factors of yeast. For each set
of gene sequences that are regulated by the same motif, we chose
1000bp as the length of input gene sequence. In order to make
comparisons among several existed motif finding methods, we also
tested Gibbs, MEME, Info-Gibbs and Consensus in our experiment to
show the difference of their performance. Here are the specific
experimental results:

\begin{center}
\begin{tabular}{|c|c|c|}
\hline
 & Number of Sequences & Motif Length\\ [0.5ex]
\hline
GCR1 & 6 & 10\\
\hline
\end{tabular}
\end{center}

\begin{center}
Tab 2. Number of sequences and motif length
\end{center}

%\begin{tabular}{|c|c|c|c|c|c|c|}
%\hline

\
\begin{center}
\begin{tabular}{|c|c|}
\hline & GCR1\\ [0.5ex] \hline
Voting& 4 \\
Gibbs& 5 \\
MEME& 10\\
InfoGibbs& 5\\
Consensus& 5\\
\hline
\end{tabular}
\end{center}
\begin{center}
Tab 3. Total number of mismatch positions compared to motif
\end{center}

\
\begin{center}
\begin{tabular}{|c|c|}
\hline &GCR1\\ [0.5ex] \hline
Voting& 0.67\\
Gibbs& 0.83\\
MEME& 1.67\\
InfoGibbs& 0.83\\
Consensus& 0.83\\
\hline
\end{tabular}
\end{center}

\begin{center}
Tab 4. Average mismatch numbers per sequence
\end{center}

\subsection{Analysis}
In the first experiment, we tested our algorithm on 4 sets of
simulated sequences. From our experimental results, we find that the
accuracy of our algorithm for finding motif in simulated data is
nearly 100\%. The accuracies of the experiments in simulated data
sets are satisfactory. Our algorithm can get the results within
several minutes.

From the second experimental results, we find that our  algorithm is
able to find the real motifs from given gene sequences in little
time. Our algorithm shows higher speed than other four motif finding
methods, because an initial motif pattern is first extracted from
comparing two sequences in the first stage of our algorithm. In
addition, unlike Gibbs sampling and EM methods, our algorithm could
avoid some extra time consuming computations, such as the
calculations of likelihoods. According to this feature, we use the
consensus string of the voting operation obtained from the result of
last iteration as a new starting pattern to program, and continue
doing voting until there is no further improvement. Experimental
results show that if we set the number of iterations to be large
enough, the program could give more accurate results in reasonable
time. Besides, in order to detect unknown motifs in sequences, our
program also provides several possible motifs existed in specific
sequences, and the average mismatch numbers of motifs that is
greatly lower than other four methods.

\section{Future works}
Compared with other tested motif finding methods, we could find that
the voting algorithm has advantages in some aspects, but there are
still some improvements could be done on this algorithm. As we know,
though a set of sequences may have the consensus, but each motif in
sequence may has mutations, and the length of each motif could also
be different. So the two factors increase the difficulties in
finding unknown motifs. In the future, we plan to improve the
efficiency of voting algorithm by studying other motif finding
methods, such as MEME, a combination may be made between voting
algorithm and MEME so that voting algorithm could have better
performance in finding unknown motifs.\\

%\bibliographystyle{abbrv}
%\bibliography{bib}

%\newpage

%{\LARGE Appdenix}

%\section{Appendix: \algmb for the Small Alphabet}

%\appendix{Proof for Lemma~\ref{phase2b-lemma}}
%\section{Appendix 1: Proof of Lemma~\ref{phase2b-lemma}}

%\newpage

%\bibliographystyle{abbrv}
%\bibliography{bib}

\end{document}